\def\HideComments{TRUE}%
   \newcommand{\InSoCGVer}[1]{}%
   \newcommand{\InNotSoCGVer}[1]{#1}%
   \newcommand{\InSoCGVer}[1]{#1}%
   \newcommand{\InNotSoCGVer}[1]{}%
\newcommand{\CiteFullVer}{\cite{rs-mvrms-14}}%
   \newcommand{\myparagraph}[1]{\paragraph{#1}}
   \newcommand{\myparagraph}[1]{\smallskip\noindent{\textbf{#1}}}
\newcommand{\ignore}[1]{}
 \newtheorem{theorem}{Theorem}[section]
\newtheorem{claim}[theorem]{Claim}
 \newtheorem{lemma}[theorem]{Lemma}
 \newtheorem{corollary}[theorem]{Corollary}
 \newtheorem{definition}[theorem]{Definition}
\theoremstyle{definition}
\newtheorem{Remark}[theorem]{Remark}
\newtheorem{observation}[theorem]{Observation}
\newcommand{\bF}{{\bf F}}
\newcommand{\cA}{{\cal A}}
\newcommand{\cC}{{\cal C}}
\newcommand{\cE}{{\cal E}}
\newcommand{\cH}{{\cal H}}
\newcommand{\cJ}{{\cal J}}
\newcommand{\cL}{{\cal L}}
\newcommand{\cS}{{\cal S}}
\newcommand{\cU}{{\cal U}}
\newcommand{\cV}{{\cal V}}
\newcommand{\LL}{\mathbb{L}}
\newcommand{\MM}{\mathbb{M}}
\newcommand{\NN}{\mathbb{N}}
\newcommand{\PP}{\mathbb{P}}
\newcommand{\RR}{\mathbb{R}}
\newcommand{\eps}{\varepsilon}
\newcommand{\Sec}[1]{\hyperref[sec:#1]{\S\ref*{sec:#1}}} 
\newcommand{\Eqn}[1]{\hyperref[eqn:#1]{(\ref*{eqn:#1})}} 
\newcommand{\Clm}[1]{\hyperref[clm:#1]{Claim~\ref*{clm:#1}}} 
\newcommand{\Fig}[1]{\hyperref[fig:#1]{Figure~\ref*{fig:#1}}} 
\newcommand{\Tab}[1]{\hyperref[tab:#1]{Table~\ref*{tab:#1}}} 
\newcommand{\Thm}[1]{\hyperref[thm:#1]{Theorem~\ref*{thm:#1}}} 
\newcommand{\Lem}[1]{\hyperref[lem:#1]{Lemma~\ref*{lem:#1}}} 
\newcommand{\Prop}[1]{\hyperref[prop:#1]{Proposition~\ref*{prop:#1}}} 
\newcommand{\Cor}[1]{\hyperref[cor:#1]{Corollary~\ref*{cor:#1}}} 
\newcommand{\Def}[1]{\hyperref[def:#1]{Definition~\ref*{def:#1}}} 
\newcommand{\Alg}[1]{\hyperref[alg:#1]{Algorithm~\ref*{alg:#1}}} 
\newcommand{\Ex}[1]{\hyperref[ex:#1]{Example~\ref*{ex:#1}}} 
\newcommand{\Obs}[1]{\hyperref[obs:#1]{Observation~\ref*{obs:#1}}} 
\newcommand{\etal}{\textit{et~al.}\xspace}
\newcommand{\build}{{\tt build}}
\newcommand{\col}{col}
\newcommand{\cur}{cur}
\newcommand{\cut}{{\tt cut}}
\newcommand{\cost}{\mathop{cost}}
\newcommand{\h}{att}
\newcommand{\init}{{\tt init}}
\newcommand{\jc}{\cJ_C}
\newcommand{\lift}{{\tt lift}}
\newcommand{\mcol}{mcol}
\newcommand{\merge}{{\tt merge}}
\newcommand{\pal}{P}
\newcommand{\pmax}{P_{\max}}
\newcommand{\rain}{{\tt rain}}
\newcommand{\reeb}{\cC}
\newcommand{\redH}{\widetilde{H}}
\newcommand{\rep}{rep}
\newcommand{\stack}{K}
\newcommand{\surgery}{{\tt surgery}}
\newcommand{\touch}{T}
\newcommand{\update}{{\tt update}}
\newcommand{\wet}{{\tt wet}}
\newcommand{\pathTree}{P_{\mathcal{S}}}
\newcommand{\XSays}[2]{{
      {$\rule[-0.12cm]{0.2in}{0.5cm}$\fbox{\tt
            #1:} }
      \textcolor{red}{#2}
      \marginpar{\textcolor{blue}{#1}}
      {$\rule[0.1cm]{0.3in}{0.1cm}$\fbox{\tt
            end}$\rule[0.1cm]{0.3in}{0.1cm}$}
      }
   }
\renewcommand{\XSays}[2]{}%
\newcommand{\Ben}[1]{{\XSays{Ben}{#1}}}
\newcommand{\remove}[1]{}
\newcommand{\pth}[2][\!]{#1\left({#2}\right)}
\newcommand{\si}[1]{#1}
\begin{document}

\InNotSoCGVer{
\author{
  Benjamin Raichel
  \and
  C. Seshadhri
}

\title{
Avoiding the Global Sort: 
\break A Faster Contour Tree Algorithm%
\footnote{The full updated version of this paper is available on the arXiv \cite{rs-mvrms-14}}%
}
\date{}
}

\InSoCGVer{
\title{
Avoiding the Global Sort: 
\break A Faster Contour Tree Algorithm%
\footnote{The full updated version of this paper is available on the arXiv \cite{rs-mvrms-14}}%
}
\titlerunning{Avoiding the Global Sort: A Faster Contour Tree Algorithm} 

\author[1]{Benjamin Raichel}
\author[2]{C. Seshadhri}
  \affil[1]{%
      Department of Computer Science, %
      University of Texas at Dallas\\ %
      Richardson, TX, 75080, USA\\ %
      \texttt{\si{benjamin.raichel}@\si{utdallas}.\si{edu}} %
   }%
   \affil[2]{%
      Department of Computer Science, %
      University of California, Santa Cruz\\ %
      Santa Cruz, CA, 95064, USA\\ %
      \texttt{\si{scomandu}@\si{ucsc}.\si{edu}} %
   }%

\authorrunning{B. Raichel and C. Seshadhri} 

\Copyright{Benjamin Raichel and Seshadhri Comandur}

\subjclass{F.2.2, I.1.2, I.3.5}
\keywords{contour trees, computational topology, computational geometry}

}

\maketitle
\InNotSoCGVer{
\thispagestyle{empty}
}

\begin{abstract}
We revisit the classical problem of computing the \emph{contour tree}
of a scalar field $f:\MM \to \RR$, where $\MM$ is a triangulated simplicial mesh in $\RR^d$. 
The contour tree is a fundamental topological structure that tracks
the evolution of level sets of $f$ and 
has numerous applications in data analysis and visualization.

All existing algorithms begin with a global sort of at least all critical values of $f$,
which can require (roughly) $\Omega(n\log n)$ time.
Existing lower bounds show that there are pathological instances where this sort is required.
We present the first algorithm whose time complexity depends
on the contour tree structure, and avoids the global sort for non-pathological inputs.
If $C$ denotes the set of critical points in $\MM$, the running time is
roughly $O(\sum_{v \in C} \log \ell_v)$, where $\ell_v$ is the depth of $v$ in the contour tree.
This matches all existing upper bounds, but is a significant improvement when the contour tree is short and fat.
Specifically, our approach ensures that any comparison made is between nodes in the same descending path in the contour tree,
allowing us to argue strong optimality properties of our algorithm.

Our algorithm requires several novel ideas: partitioning $\MM$ in well-behaved portions, 
a local growing procedure to iteratively build contour trees, and the use of heavy path 
decompositions for the time complexity analysis. 
\end{abstract}

\InNotSoCGVer{
\newpage
\pagenumbering{arabic}
}

\section{Introduction}

Geometric data is often represented as a function $f: \RR^d \to \RR$. Typically, a finite representation is given
by considering $f$ to be piecewise linear over some triangulated mesh (i.e.\ simplicial complex) $\MM$ in $\RR^d$.
\emph{Contour trees} are a topological structure used to represent and visualize
the function $f$. 
%
It is convenient to think of $f$ as a manifold sitting in $\RR^{d+1}$,
with the last coordinate (i.e.\ height) given by $f$. Imagine sweeping the hyperplane $x_{d+1} = h$
with $h$ going from $+\infty$ to $-\infty$. At every instance, the intersection of this plane 
with $f$ gives a set of connected components, the \emph{contours} at height $h$. As the sweeping 
proceeds various events occur: new contours are created or destroyed, contours merge into each other or
split into new components, contours acquire or lose handles. 
The contour tree is a concise representation of all these events.
Throughout we follow the definition of contour trees from \cite{kobps-ctsssit-97} which includes all changes in topology.  
For $d>2$, some subsequent works, such as \cite{csa-cctad-00}, only include changes in the number of components.

If $f$ is smooth, all points where the gradient of $f$ is zero are \emph{critical points}.
These points are the ``events" where the contour topology changes and form the vertices of the contour tree. 
An edge of the contour tree connects two critical points if one event immediately ``follows"
the other as the sweep plane makes its pass. (We provide formal definitions later.) 
\Fig{relative} and \Fig{sorting} 
show examples of simplicial complexes, with heights and their contour trees.
Think of the contour tree edges as pointing downwards. Leaves are either maxima or minima,
and internal nodes are either ``joins" or ``splits".


Consider $f: \MM \to \RR$, where $\MM$ is a triangulated mesh with $n$ vertices, $N$ faces in total, and $t \leq n$ critical points. 
(We assume that $f:\MM \to \RR$ a linear interpolant over distinct valued vertices, where the contour tree
$T$ has maximum degree $3$. The degree assumption simplifies the presentation, and is commonly made~\cite{kobps-ctsssit-97}.)
A fundamental result in this area is the algorithm of Carr, Snoeyink, and Axen to compute contour trees,
which runs in $O(n\log n + N\alpha(N))$ time~\cite{csa-cctad-00} (where $\alpha(\cdot)$ denotes the inverse
Ackermann function). In practical applications, $N$ is typically $\Theta(n)$
(certainly true for $d=2$). The most expensive operation is an initial sort of all the vertex
heights. Chiang \etal build on this approach to get a faster algorithm that only sorts the critical vertices,
yielding a running time of $O(t\log t + N)$~\cite{cllr-sooscctmp-05}.
Common applications for contour trees involve turbulent combustion or noisy data, where the number of critical points is likely to be $\Omega(n)$.
There is a worst-case
lower bound of $\Omega(t\log t)$ by Chiang \etal \cite{cllr-sooscctmp-05}, based on a construction of Bajaj \etal \cite{BaKr+98}.

All previous algorithms begin by sorting (at least) the critical points. Can we beat this sorting bound for certain instances,
and can we characterize which inputs are hard? Intuitively, points that are incomparable in the contour tree
do not need to be compared. Look at \Fig{relative} to see such an example. All previous algorithms waste time
sorting all the maxima. Also consider the surface of \Fig{sorting}. The final contour tree is basically two binary
trees joined at their roots, and we do not need the entire sorted order of critical points
to construct the contour tree. 

\begin{figure}[h]\centering
    \includegraphics[width=.26\linewidth]{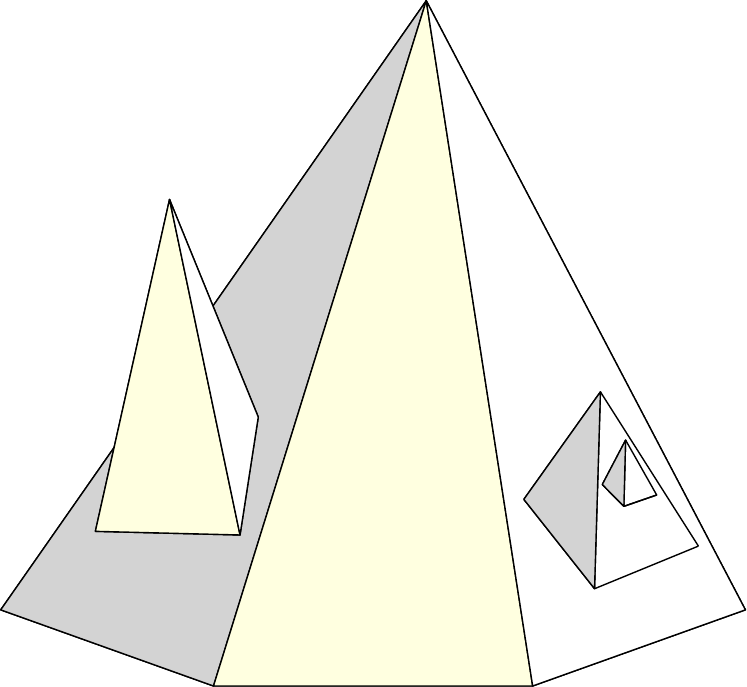}%
    \hspace{.35in}
    \includegraphics[width=.26\linewidth]{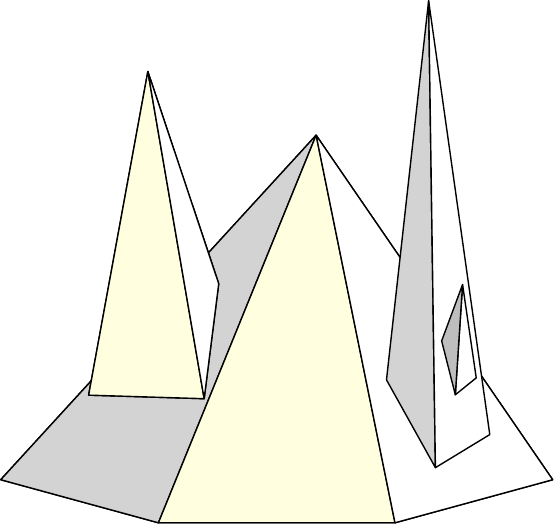}%
    \hspace{.4in}
    \includegraphics[width=.215\linewidth]{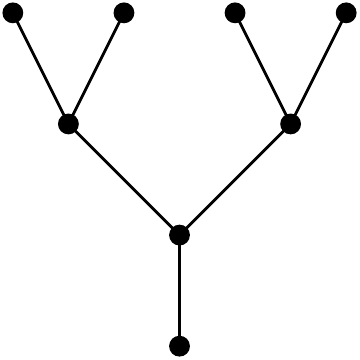}
    \caption{Two surfaces with different orderings of the maxima, but the same contour tree.}
    \label{fig:relative}
\end{figure}

\begin{figure}[h]\centering
    \includegraphics[width=.35\linewidth,height=.15\linewidth]{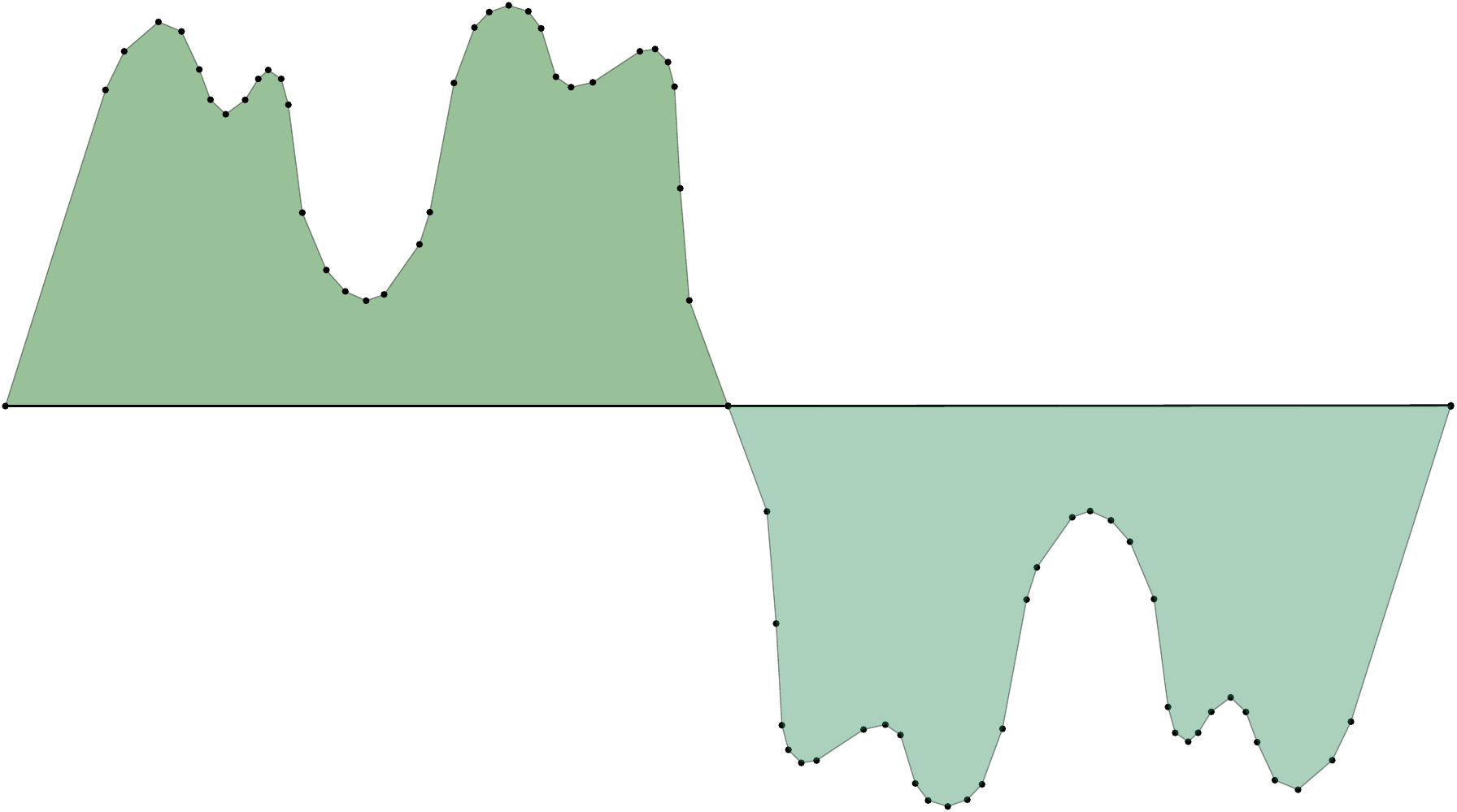}
    \hspace{.3in}
    \includegraphics[width=.35\linewidth,height=.15\linewidth]{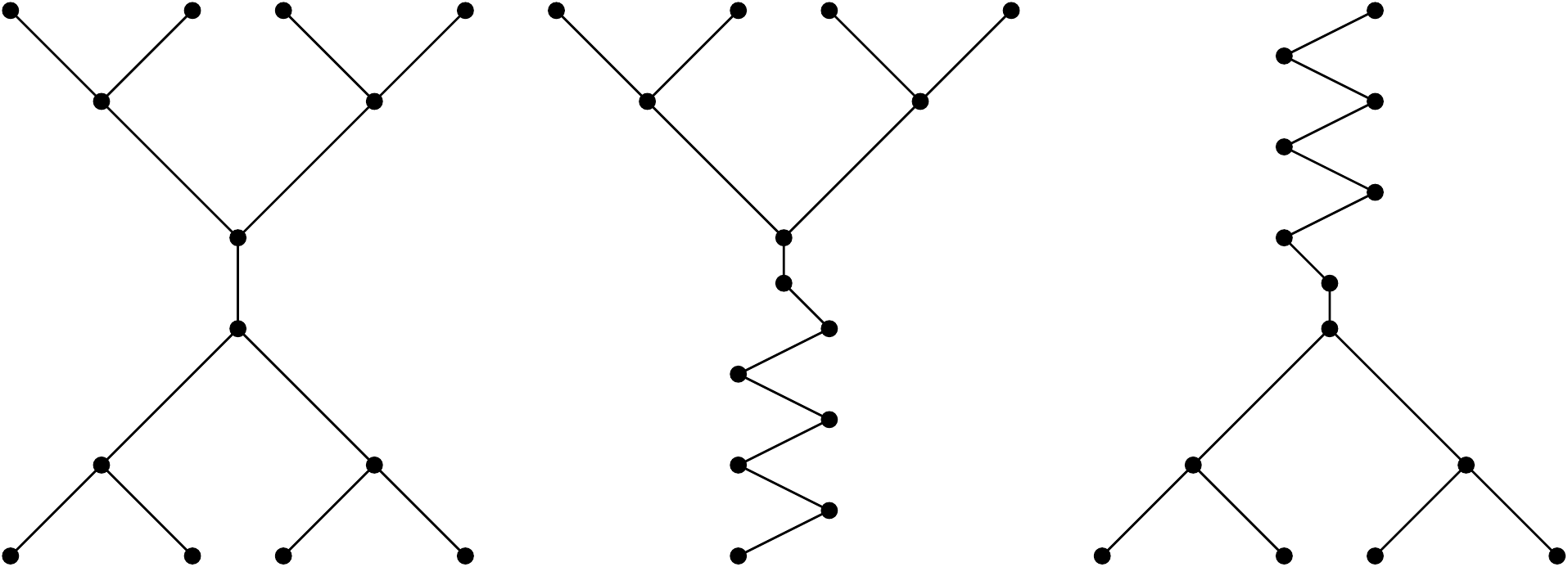}
    \caption{On left, a surface with a balanced contour tree, but whose join and split trees have long tails.  
    On right (from left to right), the contour, join and split trees.}
    \label{fig:sorting}
\end{figure}

Our main result gives an affirmative answer. Remember that we can consider
the contour tree as directed from top to bottom. For any node $v$ in the tree, let $\ell_v$ denote the length of the 
longest directed path passing through $v$. 

\begin{theorem} \label{thm:main-corr} Consider a simplicial complex $f:\MM \to \RR$, described as above, and denote
the contour tree by $T$ with vertex set (the critical points) $C(T)$. There exists an algorithm to compute the contour tree $T$
in $O(\sum_{v \in C(T)} \log \ell_v + t\alpha(t) + N)$ time. Moreover, this algorithm only
compares function values at pairs of points that are ancestor-descendant in $T$.
\end{theorem}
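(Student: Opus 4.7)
The plan is to construct the contour tree $T$ incrementally, so that every comparison ever made by the algorithm is between two critical vertices that lie on a common ancestor-descendant path in $T$. Once that property is secured, bounding the total number of comparisons by $\sum_{v \in C(T)} O(\log \ell_v)$ will yield the main term in the running time, with the $t\alpha(t) + N$ coming from standard union-find and mesh-preprocessing overhead. The correctness half of \Thm{main-corr} (the ``only ancestor-descendant comparisons'' clause) will fall out of the construction itself, and the running-time half will follow from a careful charging argument using heavy path decomposition.

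As a first step I would preprocess $\MM$ in $O(N)$ time to classify every vertex as a maximum, minimum, saddle, or regular point by inspecting its link, and to partition $\MM$ into simply connected \emph{monotone regions} whose interiors contain no critical vertices. Each such region corresponds to (a piece of) an eventual edge of $T$, and, crucially, for each critical vertex $v$ it tells us which already-present contour components $v$ can possibly interact with when it is processed. All of this is purely combinatorial and is done without any comparisons between critical function values.

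Next I would construct $T$ by a local growing procedure: start from the maxima (as trivial singleton partial trees) and iteratively attach the remaining critical vertices while sweeping downward, using the monotone-region adjacency rather than a global sort to decide what to attach next. When a new critical vertex $v$ is attached, we know from its neighborhood which active contour components it touches, but we still must locate $v$'s correct position in the current partial tree. To do this efficiently the partial tree is stored in a dynamic path-based data structure (each root-to-leaf path as a balanced BST keyed by $f$-value), so that locating $v$ reduces to a single binary search along one path. The invariant we enforce is that the path searched lies entirely on an ancestor-descendant path through $v$ in the \emph{final} tree $T$, hence has length $O(\ell_v)$, and the binary search performs only $O(\log \ell_v)$ comparisons, each of which is between $v$ and a true ancestor or descendant. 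A heavy path decomposition of $T$ is used in the analysis to charge each insertion's cost to a unique heavy path and argue that dynamic rebalancing of our data structure is amortized across the algorithm.

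The main obstacle is establishing the ``stays inside one ancestor-descendant path'' invariant. Concretely, one has to prove that the monotone-region adjacency together with the $f$-values along the \emph{single} currently-searched path already uniquely determine $v$'s correct attachment point in $T$; in particular, the relative $f$-order of $v$ with respect to any critical point in a sibling subtree is never needed. This requires a topological argument about how a newly arriving saddle or extremum interacts with the active contours, and it is where most of the novelty of the construction will live. A secondary difficulty is keeping the dynamic path data structure and its heavy path decomposition efficient under insertions, so that the amortized cost of each insertion truly is $O(\log \ell_v)$ rather than $O(\log t)$; this is handled by a standard argument that a node changes heavy path only $O(\log)$ times, combined with the fact that all relevant paths here are subpaths of $T$'s ancestor-descendant paths.
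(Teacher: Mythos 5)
Your plan is a different and, as it stands, incomplete route; it is not close to the paper's proof, and the gaps are substantial rather than cosmetic.

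The paper's proof goes through three structural steps that your proposal does not reproduce. First, $\MM$ is partitioned by a ``raining'' procedure into extremum-dominant pieces, and a contour-surgery theorem reassembles the contour tree of $\MM$ from the contour trees of the pieces. Second, on an extremum-dominant piece the contour tree is shown (via the merge procedure of Carr \etal) to coincide with the critical join tree, eliminating the interplay between joins and splits. Third, the join tree of each piece is built by a paint-spilling algorithm whose priority structure is a collection of \emph{binomial heaps}, one per color class, merged with union-find; the analysis then bounds $\sum_v \log|H_v|$ by $\sum_{p\in P(T)}|p|\log|p|$ via a heavy-path-style decomposition built on top of those heaps. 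Your plan has none of these three ingredients: there is no decomposition of $\MM$, no argument separating joins from splits, and no mechanism (heaps or otherwise) for deciding which critical vertex to ``attach next'' without an implicit global sort.

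The concrete gaps are the following. (1) You never say how you decide the processing order. ``Sweeping downward, using the monotone-region adjacency rather than a global sort'' is not an algorithm: to determine which active vertex has the next-lowest height you need a priority structure spanning multiple contour components, and if that structure holds $\Theta(t)$ critical points you are doing a global sort. The paper escapes this precisely by color-local binomial heaps whose sizes it can relate to heap-ancestor paths; your BST-per-root-to-leaf-path does not obviously have that guarantee while the tree is only partially built. (2) Your construction implicitly assumes every new critical vertex is ``attached'' like a join. But a downward sweep also meets splits, where a contour bifurcates; a split does not have a single ``correct attachment point'' determined by one ancestor-descendant path. This is exactly what the raining decomposition removes, and it is absent from your plan. (3) The claim that the searched path always lies on an ancestor-descendant path of $v$ in the \emph{final} tree is the entire theorem, and you yourself flag it as an unproved ``main obstacle'' whose ``topological argument'' you do not supply. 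Until that invariant is proved, neither the ancestor-descendant property nor the $O(\log \ell_v)$ bound follows. In short, the proposal outlines a goal rather than a proof, and the approach it gestures at (dynamic BSTs over partial-tree paths plus binary search) would need, at minimum, an analogue of the paper's raining decomposition to make the claimed invariant even plausible.
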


Essentially, the ``run time per critical point" is the height/depth of the point in the contour tree.
This bound immediately yields a run time of $O(t\log D + t\alpha(t) + N)$,
where $D$ is the diameter of the contour tree. 
This is a significant improvement for short and fat contour trees. For example, if the tree is balanced,
then we get a bound of $O(t\log\log t)$.
Even if $T$ contains a long path of length $O(t/\log t)$, but is otherwise short, we get the improved bound of $O(t\log\log t)$.

\subsection{A refined bound with optimality properties}\label{sec:more-refined}

\Thm{main-corr} is a direct corollary of a stronger but more cumbersome theorem.



\begin{definition}
\label{def:path} 
For a contour tree $T$, a \emph{leaf path} is any path in $T$ containing a leaf, 
which is also monotonic in the height values of its vertices. 
Then a \emph{path decomposition}, $P(T)$, is a partition of the vertices of $T$ into a set of vertex disjoint leaf paths.  
\end{definition}

\begin{theorem} 
\label{thm:main-alg} 
There is a deterministic algorithm to compute the contour tree, $T$, whose running time is $O(\sum_{p \in P(T)} |p|\log |p| + t\alpha(t) + N)$,
where $P(T)$ is a specific path decomposition (constructed implicitly by the algorithm).
The number of comparisons made is $O(\sum_{p \in P(T)} |p|\log |p| + N)$.
In particular, any comparisons made are only between ancestors and descendants in the contour tree.
\end{theorem}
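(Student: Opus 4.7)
\textbf{Proof proposal for \Thm{main-alg}.}
I would compute $T$ by running a \emph{local growing procedure} that simultaneously extends a monotone directed path from each eventual leaf of $T$. After an $O(N)$ preprocessing pass over $\MM$ (identifying candidate extrema and setting up mesh-adjacency data structures), together with an auxiliary union-find over connected components of sub-/super-level sets responsible for the $t\alpha(t)+N$ term, each growing path $p$ maintains a priority queue of \emph{boundary vertices}: vertices of $\MM$ adjacent to the current tip of $p$ that lie in the correct monotone direction. At each step, the algorithm extracts the extremum of the globally most-advanced queue and either (i) extends a single path by one vertex, or (ii) declares the vertex a saddle of $T$ where two or more growing paths meet, emits the corresponding internal node of $T$, and merges the involved paths, keeping at most one active continuation upward. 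The vertex-sets of the individual growing paths then partition the critical-point set of $T$ into monotone leaf paths, and this is the implicit decomposition $P(T)$ appearing in the theorem.

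Correctness and the comparison guarantee both follow from a single structural invariant: at every moment, the prefix of each path $p$ grown so far projects to a monotone ancestor-chain in $T$ rooted at the source leaf of $p$. Given the invariant, saddles are recognized exactly when the contour topology changes, so the output equals $T$; and every comparison made by the algorithm is either a union-find/preprocessing comparison charged to the $N$ term, or a priority-queue operation between two vertices lying on the same growing path, which by the invariant correspond to ancestor--descendant nodes of $T$. Comparisons performed at the moment of a saddle likewise involve only vertices on the ascending paths that meet there, again an ancestor--descendant pair in $T$.

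For the running time the key lemma is that the priority queue of a path $p$ never holds more than $O(|p|)$ elements, since each queued boundary vertex can be charged to some later vertex that will be appended to $p$. Hence each priority-queue operation on $p$ costs $O(\log |p|)$, and $p$ undergoes $O(|p|)$ insertions and extractions over its lifetime, for a total of $O(|p|\log|p|)$; summing over $p \in P(T)$ and adding the $O(t\alpha(t)+N)$ overhead yields the claimed running time, and the comparison count follows identically. \textbf{The main obstacle} is establishing the structural invariant: I must show that two growing paths never cross prematurely, that saddle detection is never spurious nor missed, and that each grown path terminates at a genuine ancestor of its source leaf in $T$. This requires a delicate inductive argument coupling the local evolution of contours in $\MM$ under the growing front with the global combinatorial evolution of $T$, and is where the ``well-behaved partition" of $\MM$ hinted at in the abstract must do its work; once the invariant is in place, the time and comparison bounds reduce to the per-path accounting above.
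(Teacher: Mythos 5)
Your proposal captures the broad silhouette of the paper's algorithm---grow from extrema, maintain priority queues, merge with union-find, and read off a path decomposition---but it is missing the two ingredients that make the theorem actually go through, and the place where you hand-wave (``the priority queue of a path $p$ never holds more than $O(|p|)$ elements'') is precisely where the real work lives.

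First, a ``simultaneous local growing from every leaf of $T$'' cannot be run directly on $\MM$. In general, joins and splits are interleaved, and any approach that sweeps from (say) maxima must cope with the split structure somehow; as the paper's Figure~2 shows, the split vertices can form a long monotone chain inside the join tree, so naively building the join tree already forces sorting them. The paper's fix is not an invariant about growing fronts on $\MM$ but a \emph{preprocessing decomposition}: a linear-time raining procedure (alternating up/down to keep total complexity linear) that cuts $\MM$ into extremum-dominant pieces, a contour-surgery theorem that lets the contour trees of the pieces be reassembled in $O(N)$, and a reduction showing the contour tree of an extremum-dominant piece equals its (critical) join tree. Only on those pieces is it legitimate to grow purely from maxima. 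Your proposal has no analogue of this step, and without it the ``both joins and splits handled by one growing front'' claim would fail on exactly the worst-case examples the paper is designed to beat.

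Second, your running-time bound rests on the claim that each path's priority queue holds $O(|p|)$ elements because every queued vertex ``can be charged to some later vertex that will be appended to $p$.'' This is not true in the paper's setting, and you give no argument that it holds in yours either. In the paper, the heap for a color always lies on a single leaf-to-root path of the join tree, \emph{but not contiguously}: a color can touch a saddle $i$ and then skip over another saddle $j$ on the same path whose colors haven't yet merged, so the heap for the color of $v$ can be much larger than the length of the eventual path of the decomposition containing $v$. Getting from $\sum_v \log|H_v|$ to $O\bigl(\sum_{p\in P(T)}|p|\log|p|\bigr)$ requires the specific maximum path decomposition $\pmax(T)$ and the heavy-path-style recharging argument (Lemma~\ref{lem:cu-root} and Section~\ref{sec:weight}) over ``supported'' and ``unsupported'' short paths. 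Replacing that with a one-sentence per-path charging is not a smaller proof of the same thing; it is a claim that, as stated, is false in the setting the theorem is about. Your proposal also never actually exhibits the decomposition $P(T)$ achieving the bound---it just asserts the growing paths do the job---whereas the paper's $P(T)$ is chosen \emph{because} of the inequality it must satisfy in Lemma~\ref{lem:geometric}. So the gap is not a missing invariant lemma at the end; it is that the decomposition and the charging scheme, which are the substance of the theorem, are absent.
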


Note that \Thm{main-corr} is a direct corollary of this statement. For any $v$, $\ell_v$ is at most
the length of the path in $P(T)$ that contains $v$.
This bound is strictly stronger,
since for any balanced contour tree, the run time bound of \Thm{main-alg} is $O(t\alpha(t) + N)$, and $O(t)$
comparisons are made.

The bound of \Thm{main-alg} may seem artificial, since it actually depends on the $P(T)$ that
is implicitly constructed by the algorithm. Nonetheless, we prove that the algorithm of \Thm{main-alg}
has strong optimality properties. For convenience, fix some value of $t$, and consider the set
of terrains ($d=2$) with $t$ critical points. The bound of \Thm{main-alg} takes values ranging
from $t$ to $t\log t$. Consider some $C \in [t,t\log t]$, and consider the set of terrains
where the algorithm makes $C$ comparisons. Then \emph{any algorithm} must make roughly $C$
comparisons in the worst-case over this set. (All further details are in \InSoCGVer{Appendix~}\Sec{lb}.)

\begin{theorem} 
\label{thm:main-lb}
There exists some absolute constant $\alpha$ such that the following holds.
For sufficiently large $t$ and any $C\in [t, t\log t]$, consider the set $\bF_C$ of terrains with $t$ critical points such that
the number of comparisons made by the algorithm of \Thm{main-alg} on these terrains is in $[C,\alpha C]$.
Any algebraic decision tree that correctly computes
the contour tree on all of $\bF_C$ has a worst case running time of $\Omega(C)$.
\end{theorem}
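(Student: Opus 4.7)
The plan is to lower-bound the worst-case comparison complexity on $\bF_C$ by reducing sorting to contour-tree construction and applying Ben-Or's $\Omega(k\log k)$ algebraic-decision-tree bound for sorting $k$ distinct reals. Given $C \in [t,t\log t]$, I would first choose integers $k$ and $m$ with $mk = \Theta(t)$ and $mk\log k = \Theta(C)$ (interpolating between $k=O(1)$, where $C=\Theta(t)$, and $k=\Theta(t)$, where $C=\Theta(t\log t)$). Next, I would build a template terrain ($d=2$) consisting of $m$ pairwise disjoint ``snake'' ridges, each carrying $k$ local maxima (with $k-1$ intervening saddles along the ridge), glued at a single common minimum through low corridors. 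The family $\bF_C$ is obtained by independently permuting the $k$ max-heights on each ridge, with the heights of the $i$-th ridge drawn from a fixed open interval $I_i$, the intervals $I_1,\ldots,I_m$ chosen pairwise disjoint so that no cross-ridge comparison can ever change outcome.

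All terrains in $\bF_C$ share the same combinatorial contour-tree skeleton, namely $m$ monotone chains of length $\Theta(k)$ joined at a common root, so the path decomposition implicitly produced by the algorithm of \Thm{main-alg} is identical across $\bF_C$, and a direct bookkeeping check shows the algorithm makes $\Theta(mk\log k) = \Theta(C)$ comparisons on every member (fixing $\alpha$ to absorb constants). For the matching lower bound, I would observe that the labeled contour tree for any member of $\bF_C$ determines, for each ridge $i$, the ancestral order of the $k$ maxima on that ridge, and hence their sorted order. Permuting those $k$ values produces a combinatorially distinct labeled contour tree, so the $(k!)^m$ members of $\bF_C$ yield $(k!)^m$ distinct outputs. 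Since comparisons between points on different ridges return data-independent answers (disjoint intervals), any correct algebraic decision tree must solve $m$ independent sorting instances on $k$ arbitrary reals each; applying Ben-Or's bound to each instance and summing gives worst-case depth $\Omega(mk\log k) = \Omega(C)$.

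The main obstacle I anticipate is the uniformity of the upper bound: one must show that the comparison count of the algorithm of \Thm{main-alg} is $\Theta(C)$ on \emph{every} terrain in $\bF_C$, not merely on the template. This reduces to verifying that the path decomposition chosen by the local growing procedure is invariant under within-ridge permutations, which in turn should follow from the fact that the procedure's branching decisions depend only on the shared combinatorial skeleton of the contour trees in $\bF_C$ and on local comparisons along each ridge (each of which costs the claimed $\Theta(k\log k)$). A secondary subtlety is ensuring the reduction is faithful to the algebraic-decision-tree model: because each ridge's height sequence is an arbitrary permutation of $k$ reals in an open interval, the input set is a disjoint union of $(k!)^m$ open cells, and Ben-Or's connected-components technique applies directly to each of the $m$ independent sorting subproblems.
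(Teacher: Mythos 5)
Your overall strategy mirrors the paper's: exhibit, inside $\bF_C$, an explicit family of terrains on a fixed triangulation with $\Theta(C)$ algorithm comparisons on every member, count distinct labeled contour trees, and apply an information-theoretic (entropy / Ben-Or) lower bound. The paper does this with a family of nested ``tents'' indexed by an arbitrary path decomposition $P$, achieving $\prod_{p\in P}(|p|-1)!$ distinct contour trees, while your uniform $m$-ridges-of-length-$k$ construction is a legitimate specialization (the theorem only needs \emph{some} hard subset of $\bF_C$, and your choice $mk=\Theta(t)$, $mk\log k=\Theta(C)$ sweeps the whole range $[t,t\log t]$). However, there is a fatal flaw in the combinatorics of your construction.

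You generate the family by \emph{permuting the $k$ max-heights on each ridge} and then claim that ``the labeled contour tree determines the ancestral order of the $k$ maxima, and hence their sorted order.'' This is false: the $k$ maxima are all \emph{leaves} of the contour tree, so no two of them are in ancestor--descendant relation, and the contour tree carries no information about their relative heights. More concretely, on a ridge $M_1, S_1, M_2, S_2,\ldots, M_k$ with fixed saddle heights, all $k!$ permutations of the maximum heights (above the saddles) produce \emph{identical} labeled contour trees --- the superlevel-set components merge at the saddles in the same order regardless of which maximum is tallest. So your family has essentially one labeled contour tree, not $(k!)^m$, and the entropy/sorting reduction collapses. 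What \emph{does} get recorded in the contour tree is the relative order of the \emph{saddles} along each ridge: those are internal vertices lying on a monotone root-to-leaf path, and permuting their heights yields $(k-1)!$ genuinely distinct labeled trees per ridge. This is precisely what the paper does in its tent construction (each tent has $|p_i|-1$ saddle faces whose heights are permuted to give $(|p_i|-1)!$ instances). Your construction can be repaired by permuting the saddle heights, but as written the lower-bound half of the argument does not hold.

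A secondary gap is the upper-bound half. You assert that the algorithm's implicit path decomposition is invariant across the family and makes $\Theta(mk\log k)$ comparisons, on the grounds that its branching ``depends only on the shared combinatorial skeleton.'' The paper proves this carefully (Lemma~\ref{lem:cost}): it analyzes the initial painting and shows that, on a tent, any colour originating in a child tent drains off to the parent's base without touching the parent's saddles, so each heap $H_v$ has size at most the length of the path in $P$ containing $v$, whatever the within-tent permutation. Your geometry (ridges in disjoint height bands joined through low corridors) would need an analogous argument to bound the heap sizes uniformly; it is plausible but not automatic, since the heap size depends on which colours reach which saddle, not merely on the tree skeleton.
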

%
%

\subsection{Previous Work}

Contour trees were first used to study terrain maps by Boyell and Ruston, and Freeman and Morse~\cite{BoRu63,FrMo67}.
Contour trees have been applied in analysis of fluid mixing, combustion simulations,
and studying chemical systems~\cite{LaBe+06,BrWe+10,BeWe+11,BrWe+11,MaGr+11}. Carr's thesis~\cite{c-tmi-04} gives various
applications of contour trees for data visualization and is an excellent reference for contour tree definitions and algorithms.

The first formal result was an $O(N\log N)$ time algorithm for functions over 2D 
meshes and an $O(N^2)$ algorithm for higher dimensions, by van Kreveld \etal \cite{kobps-ctsssit-97}. 
Tarasov and Vyalya \cite{tv-cct-98} improved the running time to $O(N\log N)$ for the 3D case.
The influential paper of Carr \etal \cite{csa-cctad-00} improved the running time for all dimensions to $O(n\log n + N\alpha(N))$.
Pascucci and Cole-McLaughlin \cite{pc-ectls-02} provided an $O(n+t\log n)$ time algorithm for 
$3$-dimensional structured meshes. Chiang \etal \cite{cllr-sooscctmp-05} provide an unconditional $O(N+t\log t)$ algorithm.

Contour trees are a special case of Reeb graphs, a general topological representation for real-valued functions
on any manifold. Algorithms for computing Reeb graphs
is an active topic of research~\cite{sk-crgacs-91,cehnp-lrbm-03,PaScBr07,DoNa09,HaWaWe10,Pa12}, where
two results explicitly reduce to computing contour trees~\cite{TiGySi09,DoNa13}.


\section{Contour tree basics} \label{sec:basics}

We detail the basic definitions about contour trees, following the terminology of Chapter 6 of Carr's thesis \cite{c-tmi-04}.
All our assumptions and definitions
are standard for results in this area, though there is some variability in notation.
The input is a continuous piecewise-linear function $f:\MM \to \RR$, where $\MM$ is a simply connected and fully triangulated simplicial complex in $\RR^d$,
except for specially designated \emph{boundary facets}. So $f$ is explicitly defined only on the vertices of $\MM$,
and all other values are obtained by linear interpolation. 

We assume that the boundary values satisfy a special property. This is mainly for convenience
in presentation.

\begin{definition} \label{def:bound} The function $f$ is \emph{boundary critical} if the following holds.
Consider a boundary facet $F$. All vertices of $F$ have the same function value. Furthermore, all
neighbors of vertices in $F$, which are not also in $F$ itself, 
either have all function values strictly greater than or all function values strictly less than the function value at $F$.
\end{definition}

This is convenient, as we can now assume that $f$ is defined on $\RR^d$. Any point inside a boundary facet
has a well-defined height, including the infinite facet, which is required to be a boundary facet.  However, 
we allow for other boundary facets, to capture the resulting surface pieces after our algorithm makes a horizontal cut. 

We think of the dimension $d$, as constant, and assume that $\MM$ is represented in a data structure that allows constant-time access to neighboring simplices
in $\MM$ (e.g.~\cite{BoMa12}). (This is analogous to a doubly connected edge list, but for higher dimensions.)
Observe that $f:\MM \rightarrow \RR$ can be thought of as a $d$-dimensional simplicial complex living in $\RR^{d+1}$, 
where $f(x)$ is the ``height" of a point $x \in \MM$, which is encoded in the representation of $\MM$. 
Specifically, rather than writing our input as $(\MM,f)$, we abuse notation and typically just write $\MM$ to denote the lifted complex.

\begin{definition} \label{def:level} The \emph{level set} at value $h$ is the set $\{x| f(x) = h\}$.
A \emph{contour} is a connected component of a level set. An \emph{$h$-contour} is a contour where $f$-values are $h$.
\end{definition}

Note that a contour that does not contain a boundary is itself a simplicial complex of one dimension lower, and is represented (in our algorithms) as such.
We let $\delta$ and $\eps$ denote infinitesimals. Let $B_\eps(x)$ denote a ball of radius $\eps$ around $x$, and let 
$f|B_\eps(x)$ be the restriction of $f$ to $B_\eps(x)$.

\begin{definition} \label{def:deg} The \emph{Morse up-degree} of $x$ is the number of $(f(x) + \delta)$-contours of $f|B_\eps(x)$
as $\delta, \eps \rightarrow 0^+$. The \emph{Morse down-degree} is the number of $(f(x) - \delta)$-contours of $f|B_\eps(x)$
as $\delta, \eps \rightarrow 0^+$.

A \emph{regular} point has both Morse up-degree and down-degree $1$. A \emph{maximum} has Morse up-degree $0$, while
a \emph{minimum} has Morse down-degree $0$. A \emph{Morse Join} has Morse up-degree strictly greater than $1$,
while a \emph{Morse Split} has Morse down-degree strictly greater than $1$. Non-regular points are called \emph{critical}.
\end{definition}

%
%

The set of critical points is denoted by $\cV(f)$.
Because $f$ is piecewise-linear, all critical points are vertices in $\MM$. 
A value $h$ is called \emph{critical}, if $f(v) = h$, for some $v \in \cV(f)$. 
A contour is called \emph{critical}, if it contains a critical point, and it is called \emph{regular} otherwise.

The critical points are exactly where the topology of level sets change.
By assuming that our manifold is boundary critical, the vertices on a given boundary are either collectively all maxima or all minima.  
We abuse notation and refer to this entire set of vertices as a maximum or minimum.

\begin{definition} \label{def:equiv} Two regular contours $\psi$ and $\psi'$ are \emph{equivalent} 
if there exists an $f$-monotone path $p$ connecting a point in $\psi$ to $\psi'$,
such that no $x \in p$ belongs to a critical contour.
\end{definition}

This equivalence relation gives a set of \emph{contour classes}. 
Every such class maps to intervals
of the form $(f(x_i),f(x_j))$, where $x_i, x_j$ are critical points. Such a class is said
to be created at $x_i$ and destroyed at $x_j$. 

\begin{definition} \label{def:tree} The \emph{contour tree} is the graph on vertex set $\cV= \cV(f)$, where
edges are formed as follows. For every contour class that is created at $v_i$ and destroyed $v_j$,
there is an edge $(v_i,v_j)$. (Conventionally, edges are directed from higher to lower function value.)
\end{definition}

We denote the contour tree of $\MM$
by $\reeb(\MM)$. The corresponding node and edge sets are denoted as $\cV(\cdot)$ and $\cE(\cdot)$.
It is not immediately obvious that this graph is a tree, but alternate definitions of the contour tree
in~\cite{csa-cctad-00} imply this is a tree. 
Since this tree has height values associated with the vertices, we can talk about up-degrees and down-degrees in $\reeb(\MM)$.
Similar to \cite{kobps-ctsssit-97} (among others), multi-saddles are treated as a set of ordinary saddles, which can be realized 
via vertex unfolding (which can increase surface complexity if multi-saddle degrees are allowed to be super-constant).
Therefore, to simplify the presentation, for the remainder of the paper up and down-degrees are at most $2$, and total degree is at most $3$.
%


\InSoCGVer{See \Sec{techmarks} for further technical remarks on the above definitions.}
\newcommand{\technicalRemarks}{
Note that if one intersects $\MM$ with a given ball $B$, then a single contour in $\MM$ might be split into more than one contour in the intersection. 
In particular, two $(f(x)+\delta)$-contours of $f|_{B_\eps(x)}$, given by \Def{deg}, might actually be the same contour in $\MM$. 
Alternatively, one can define the up-degree (as opposed to \emph{Morse} up-degree) as the number of $(f(x)+\delta)$-contours (in the full $\MM$)
that intersect $B_\eps(x)$, a potentially smaller number. This up-degree
is exactly the up-degree of $x$ in $\reeb(\MM)$. (Analogously, for down-degree.)
When the Morse up-degree is $2$ but the up-degree is $1$, the topology of the level set changes but not by the number of connected components changing.
For example, when $d=3$ this is equivalent to the contour gaining a handle.
When $d=2$, this distinction is not necessary, since any point with Morse degree strictly greater than $1$ will have degree strictly greater than $1$ in $\reeb(\MM)$.

As Carr points out in Chapter 6 of his thesis, the term contour tree can be used for a family of related structures.
Every vertex in $\MM$ is associated with an edge in $\reeb(\MM)$, and sometimes the
vertex is explicitly placed in $\reeb(\MM)$ (by subdividing the respective edge). This is referred
to as augmenting the contour tree, and it is common to augment $\reeb(\MM)$ with all vertices.
Alternatively, one can smooth out all vertices of up-degree and down-degree $1$ to get the
unaugmented contour tree. (For $d=2$, there are no such vertices in $\reeb(\MM)$.) 
The contour tree of \Def{tree} is the typical definition in all results on output-sensitive
contour trees, and is the smallest tree that contains all the topological changes
of level sets. \Thm{main-alg} is applicable for any augmentation of $\reeb(\MM)$ with a predefined
set of vertices, though we will not delve into these aspects in this paper.
}
\InNotSoCGVer{
\subsection{Some technical remarks}
\technicalRemarks
}


\newcommand{\newintuitionSection}{
\section{A tour of the new contour tree algorithm} \label{sec:approach}
We provide a short, high-level description of the main ideas of subsequent sections. 
A longer more detailed version of this high-level description can be found in the full version \CiteFullVer.
 
\myparagraph{Cutting $\MM$ into extremum dominant pieces:} We define a simplicial complex
endowed with a height to be \emph{extremum dominant} if there exists only a single minimum, or only a single maximum.
(When formally defined in \Sec{rain}, extremum dominant complexes will allow additional trivial minima or maxima which are a small complication resulting from the following cutting procedure.)
We first cut $\MM$ into disjoint extremum dominant pieces in linear time.
Take an arbitrary maximum $x$ and imagine torrential rain at the maximum. The water flows down, wetting any point that has a non-ascending path 
from $x$. We end up with two portions, the wet part of $\MM$ and the dry part. This is similar
to \emph{watershed} algorithms used for image segmentation~\cite{RoMe00}. The wet part is obviously connected and
can be shown to be extremum dominant. We can cut along the interface of the wet and dry, remove
the wet part, and recurse on the dry parts. This is carefully done to ensure that the total complexity
of the pieces is linear.
%
%
We prove a simple contour surgery theorem that builds the contour tree of $\MM$
from the contour trees of the various pieces created.
 
Using ideas from~\cite{csa-cctad-00}, we prove that the contour tree of an extremum dominant complex is (basically) just the \emph{join tree}, 
which unlike the contour tree tracks superlevel sets rather than level sets.
Consider sweeping down the hyperplane $x_{d+1} = h$ and taking the connected components of the portion of $\MM$ above height $h$. For a terrain, these superlevel
sets are a collection of ``mounds". As we sweep downwards, these mounds keep joining each other,
until finally, we end up with all of $\MM$. The join tree tracks exactly these events. 

\myparagraph{Join trees from painted mountaintops:} Our main result is a faster algorithm for join trees.
The key idea is \emph{paint spilling}. Start with each maximum having a large can of paint, with distinct
colors for each maximum. In arbitrary order, we spill paint from each maximum, wait till it flows down,
then spill from the next, etc. Paint is viscous, and only flows down edges. Furthermore, our paints do not mix, so each edge receives a unique color, decided by the first
paint to reach it.  

Our algorithm incrementally builds the join tree from the leaves (maxima) to the root. 
Refer to the left part of \Fig{colors}.
Consider two sibling leaves $\ell_1, \ell_2$ and their common parent $v$. The leaves are maxima,
and $v$ is a join that ``merges" $\ell_1, \ell_2$. 
In that case, there are ``mounds" corresponding to $\ell_1$ and $\ell_2$ that merge
at a valley $v$. Suppose this was the entire input, and $\ell_1$ was colored blue and $\ell_2$ was colored red. 
Both mounds are colored completely blue or red, while $v$ is touched by both colors.
So this indicates that $v$ joins the blue maximum and red maximum in the join tree and is a critical point.
To process $v$, we ``merge" the colors red and blue into a new color, purple. 
In terms of the join tree, this
is equivalent to removing leaves $\ell_1$ and $\ell_2$, and making $v$ a new leaf.

Of course, things are more complicated when there are other mounds. There may be a yellow mound,
corresponding to $\ell_3$ that joins with the blue mound higher up at some vertex $u$ (see the right part of \Fig{colors}). In the join tree, $\ell_1$ and $\ell_3$
are sibling leaves, and $\ell_2$ is a sibling of some ancestor of these leaves. So we cannot
merge red and blue, until yellow and blue merge. 
A critical insight is the use of \emph{binomial heaps}~\cite{Vu78} to handle all the priorities and the merging.
This ensures that all comparisons are only made between points that are ancestor-descendant in the join tree.
There are numerous details to handle to prove the final run time bound, but this should provide the reader
some intuition behind our algorithm.

\begin{figure}[h]\centering
    \includegraphics[width=.25\linewidth]{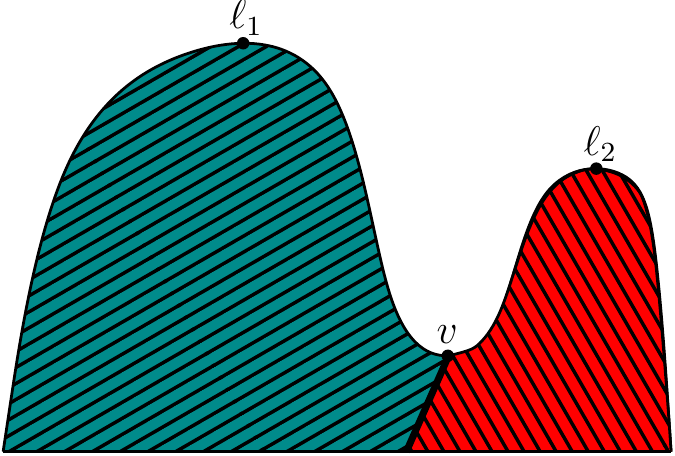}%
    \hspace{.07in}
    \includegraphics[width=.25\linewidth]{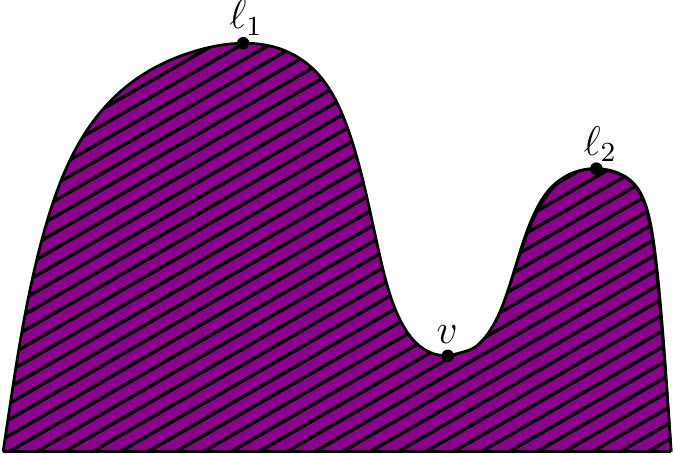}%
    \hspace{.07in}
    \includegraphics[width=.03\linewidth]{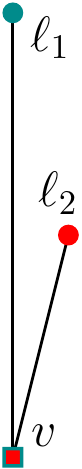}
    \hspace{.09in}
    \includegraphics[width=.3\linewidth]{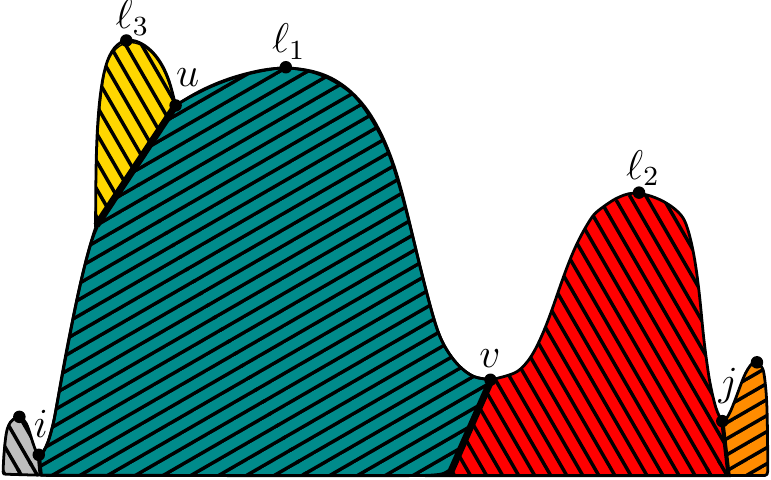}%
    \hspace{.07in}
    \includegraphics[width=.065\linewidth]{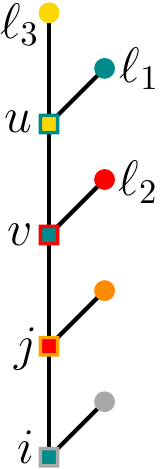}
    \caption{On the left, red and blue merge to make purple, followed by the contour tree with initial colors.  On the right, additional maxima and the resulting contour tree.}
    \label{fig:colors}
\end{figure}

}

\newcommand{\intuitionSection}{
\section{A tour of the new contour tree algorithm} \label{sec:approach}
\InNotSoCGVer{
Our final algorithm is quite technical and has numerous moving parts. However, for the $d=2$ case, 
where the input is just a triangulated terrain, the main ideas of the parts of the algorithm 
can be explained clearly.  Therefore, here we first provide a high level view of the entire result.
}
\InSoCGVer{
For the $d=2$ case, where the input is just a triangulated terrain, the main ideas of the parts of the algorithm 
can be explained clearly.  Therefore, here we provide a high level view of the entire result.
}

In the interest of presentation, the definitions and theorem statements in this section will
slightly differ from those in the main body. They may also differ from the original definitions proposed
in earlier work.

\myparagraph{Do not globally sort:} The starting point for this work is \Fig{relative}. We have two terrains
with exactly the same contour tree, but different orderings of (heights of) the critical points.
Turning it around, we cannot deduce the full height ordering of critical points from the contour tree.
Sorting all critical points is computationally unnecessary for constructing the contour tree.
In \Fig{sorting}, the contour tree consists of two balanced binary trees, one of the joins,
another of the splits. Again, it is not necessary to know the relative ordering between the mounds
on the left (or among the depressions on the right) to compute the contour tree. Yet some ordering
information is necessary: on the left, the little valleys are higher than the big central valley,
and this is reflected in the contour tree. Leaf paths in the contour tree have points
in sorted order, but incomparable points in the tree are unconstrained.
How do we sort exactly what is required, without knowing the contour tree in advance?

\subsection{Breaking $\MM$ into simpler pieces} \label{sec:break}
Let us begin with the algorithm of Carr, Snoeyink, and Axen~\cite{csa-cctad-00}. The key insight is to build
two different trees, called the join and split trees, and then merge them together into the contour tree.
Consider sweeping down the hyperplane $x_{d+1} = h$ and taking the \emph{superlevel} sets. These 
are the connected components of the portion of $\MM$ above height $h$. For a terrain, the superlevel
sets are a collection of ``mounds". As we sweep downwards, these mounds keep joining each other,
until finally, we end up with all of $\MM$. The join tree tracks exactly these events. 
Formally, let $\MM^+_v$ denote the simplicial complex induced on the subset of 
vertices which are higher than $v$.

\begin{definition} 
\label{def:int-criticalJoin}
The \emph{join tree} $\cJ(\MM)$ is built on the set $\cV$ of all critical points.
The directed edge $(u,v)$ is present when $u$ is the smallest valued vertex in $\cV$ in a connected component of $\MM^+_v$
and $v$ is adjacent (in $\MM$) to a vertex in this component. 
\end{definition}

Refer to \Fig{sorting} for the join tree of a terrain. Note that nothing happens at splits, but these
are still put as vertices in the join tree. They simply form a long path. The split tree
is obtained by simply inverting this procedure, sweeping upwards and tracking sublevel sets.

A major insight of~\cite{csa-cctad-00} is an ingeniously simple linear time procedure to construct
the contour tree from the join and split trees. So the bottleneck is computing these trees. Observe
in \Fig{sorting} that the split vertices form a long path in the join tree (and vice versa). Therefore, constructing
these trees forces a global sort of the splits, an unnecessary computation for the contour tree. Unfortunately, in general (i.e.\ unlike \Fig{sorting}) 
the heights of joins and splits may be interleaved in a complex manner, and hence the final merging of~\cite{csa-cctad-00}
to get the contour tree requires having the split vertices in the join tree. Without
this, it is not clear how to get a consistent view of both joins and splits, required for the contour tree.

Our aim is to break $\MM$ into smaller pieces, where this unnecessary computation can be avoided.

\myparagraph{Contour surgery:} We first need a divide-and-conquer lemma. Any contour $\phi$ can be associated
with an edge $e$ of the contour tree. Suppose we ``cut" $\MM$ along this contour. We prove that $\MM$
is split into two disconnected pieces, such the contour trees of these pieces is obtained
by simply cutting $e$ in $\reeb(\MM)$. Alternatively, the contour trees of these pieces can
be glued together to get $\reeb(\MM)$. This is not particularly surprising, and is fairly easy to prove
with the right definitions. The idea of loop surgery has been used to reduce Reeb graphs to contour trees~\cite{TiGySi09,DoNa13}.
Nonetheless, our theorem appears to be new and works for all dimensions.

\begin{figure}[h]\centering
    \includegraphics[width=.172\linewidth]{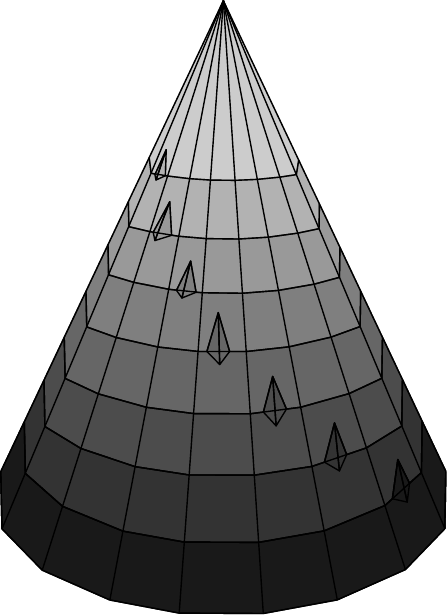}%
    \hspace{1.2in}
    \includegraphics[width=.172\linewidth]{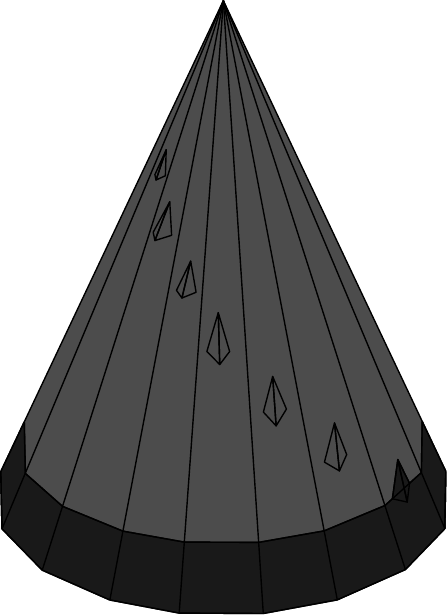}
    \caption{On left, downward rain spilling only (each shade of gray represents a piece created by each different spilling), producing a grid.  
    Note we are assuming raining was done in sorted order of the maxima (i.e.\ lowest to highest).  
    On right, flipping the direction of rain spilling.}
    \label{fig:order}
\end{figure}

\myparagraph{Cutting $\MM$ into extremum dominant pieces:} We define a simplicial complex
endowed with a height to be \emph{minimum dominant} if there exists only a single minimum.
(Our real definition is more complicated, and involves simplicial complexes
that allow additional ``trivial'' minima.)
In such a complex, there exists a non-ascending path from any point to this unique minimum.
Analogously, we can define maximum dominant complexes,
and both are collectively called extremum dominant.

We will cut $\MM$ into disjoint extremum dominant pieces, in linear time.
One way to think of our procedure is a meteorological analogy. Take an arbitrary maximum $x$,
and imagine torrential rain at the maximum. The water flows down, wetting any point that has a non-ascending path 
from $x$. We end up with two portions, the wet part of $\MM$ and the dry part. This is similar
to \emph{watershed} algorithms used for image segmentation~\cite{RoMe00}. The wet part is obviously connected,
while there may be numerous disconnected dry parts. The interface between the dry and wet parts
is a set of contours\footnote{Technically, they are not contours, but rather the limits of sequences of contours.}, 
given by the ``water line". The wet part is clearly maximum dominant, since all
wet points have a non-descending path to $x$. So we can simply cut along the interface
contours to get the wet maximum dominant piece $\MM'$. By our contour surgery theorem,
we are left with a set of disconnected dry parts, and we can recur this procedure on them.

But here's the catch. Every time we cut $\MM$ along a contour, we potentially increase
the complexity of $\MM$. Water flows in the interior of faces, and the interface will naturally
cut some faces. Each cut introduces new vertices, and a poor choice of repeated raining leads
to a large increase in complexity. Consider the left of \Fig{order}. Each raining produces a single
wet and dry piece, and each cut introduces many new vertices. If we wanted to partition this terrain into 
maximum dominant simplicial complexes, the final complexity would be forbiddingly large.

A simple trick saves the day. Unlike reality, we can choose rain to flow solely downwards or solely upwards.
Apply the procedure above to get a single wet maximum dominant $\MM'$ and a set of dry pieces.
Observe that a single dry piece $\NN$ is boundary critical with the newly introduced boundary $\phi$ (the wet-dry interface) behaving as a minimum.
So we can rain upwards from this minimum, and get a \emph{minimum dominant} portion $\NN'$.
This ensures that the new interface (after applying the procedure on $\NN$) does not
cut any face previously cut by $\phi$.
For each of the new dry pieces, the newly introduced boundary is now a maximum. So we rain downwards from there.
More formally, we alternate between raining upwards and downwards as we go down the recursion tree.
We can prove that an original face of $\MM$ is cut at most once, so the final complexity can be bounded.
In \Fig{order}, regardless of the choice of the starting maximum,  this procedure would yield (at most) two pieces, 
one maximum dominant, and one minimum dominant.

Using the contour surgery theorem previously discussed, we can build the contour tree of $\MM$
from the contour trees of the various pieces created. All in all, we prove the following theorem.

\begin{theorem} \label{thm:int-rain} There is an $O(N)$ time procedure that cuts $\MM$ into extremum dominant simplicial complexes
$\MM_1, \MM_2, \ldots$. Furthermore, given the set of contour trees $\{\cC(\MM_i)\}$, $\cC(\MM)$ can be constructed in $O(N)$ time.
\end{theorem}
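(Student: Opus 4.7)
The plan is to make the informal picture given above rigorous in two parts, matching the two claims of the theorem. For the cutting procedure, I would first formalize the \emph{rain} operation: starting from a source extremum $x$ of the current piece (say a maximum), run a graph traversal marking as wet any vertex reachable from $x$ via a non-ascending path; a face is \emph{cut} precisely when it contains both wet and dry vertices, and the wet-dry interface forms a collection of contour fragments. The wet subpiece is maximum dominant by construction, and each dry subpiece inherits the interface as a new boundary that plays the role of a maximum from within. The algorithm alternates the rain direction at each recursive level, rain-up on dry subpieces produced by rain-down and rain-down on those produced by rain-up, stopping as soon as a subpiece is already extremum dominant. A single rain step runs in time linear in the size of the piece it processes.

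The main obstacle is the complexity analysis, since naive repeated cutting can inflate the total face count superlinearly, as the left panel of \Fig{order} suggests. The plan is to prove that the alternation forces every original face of $\MM$ to be cut at most once across the entire recursion. The core observation is that a rain-down step cuts every affected face from above, leaving its surviving lower portion sitting inside a dry piece whose freshly introduced boundary acts as a maximum at the cut height $h$. In the subsequent rain-up step on that piece, water rises from the piece's minima and the new interface can only form at heights where the rising water is blocked by a saddle strictly below $h$; a short level-set argument then shows the new interface cannot intersect any of those surviving portions. The symmetric case (rain-down after rain-up) is identical, and a simple induction on the recursion tree yields the per-face bound. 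Summing over pieces then gives $O(N)$ total complexity across all pieces and, since each rain step is linear in its input, $O(N)$ time overall.

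Finally, for the reconstruction, I would appeal to the contour surgery theorem established earlier in the section: cutting $\MM$ along a contour $\phi$ associated with an edge $e$ of $\reeb(\MM)$ produces two pieces whose contour trees are exactly the two components obtained by deleting $e$ from $\reeb(\MM)$. Applying this bottom-up along the recursion tree, each cut corresponds to a single glue edge between the contour tree of the wet subpiece and that of the corresponding dry subpiece. Each glue step is constant time, and with $O(N)$ cuts in total the reconstruction runs in $O(N)$ time. The one remaining subtlety is matching the trivial extremum introduced on one side of each cut with its counterpart on the other side, but this is easily handled by maintaining a pointer across each cut at the moment it is made.
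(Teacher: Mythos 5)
Your high-level plan matches the paper's actual approach: rain from an extremum, cut at the wet--dry interface, recurse on the dry pieces with rain direction reversed, and glue the resulting contour trees back together via the surgery theorem. The reconstruction half (using surgery along the recursion tree) is exactly right. However, the proof of the crucial complexity lemma---that each original face is cut at most once---has a gap. You assert that in the subsequent reversed-rain step ``the new interface can only form at heights where the rising water is blocked by a saddle strictly below $h$'' and that ``a short level-set argument'' then separates the new interface from the surviving portion of the cut face. This is not a correct mechanism. In the dry piece $\LL_i$, the newly introduced boundary sits at height $h$ and is the \emph{source} of the reversed rain; the saddles that can block the rising water lie strictly \emph{above} $h$, and the surviving portion of a cut face can extend arbitrarily far above $h$ (it contains every original vertex of the face on the dry side of the cut). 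So a comparison of saddle heights against $h$ cannot separate the new interface from the surviving fragment. The paper's actual argument is simpler and different in kind: the surviving portion $U$ has its lower boundary entirely on the new boundary $B_i$, and $B_i$ is precisely where the reversed rain is poured; hence $U$ is immediately wetted in the recursive call, and wet regions are never cut again. You also have the geometry reversed---after a rain-down cut, the surviving fragment in the dry piece is the \emph{upper} portion of the face, not the lower---which is likely what pushed you toward the height-comparison sketch instead of the ``wetted immediately'' argument. The fix is to replace your level-set sketch with the observation that the surviving fragment is boundary-adjacent to the next rain source.
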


\myparagraph{Extremum dominance simplifies contour trees:} We will focus on minimum dominant simplicial complexes $\MM$. By \Thm{int-rain}, it suffices
to design an algorithm for contour trees on such inputs. 
For the $d=2$ case, it helps to visualize such an input as a terrain with no minima,
except at a unique boundary face (think of a large boundary triangle that is the boundary).
All the saddles in such a terrain are necessarily joins, and there can be no splits.
In \Fig{sorting}, the portion on the left is minimum dominant in exactly this fashion, albeit in one dimension lower.
More formally, $\MM^-_v$ is connected for all $v$, so there are no splits.

We can prove that the split tree is just a path, and the contour tree is exactly
the join tree. The formal argument is a little involved, and we employ the merging procedure
of~\cite{csa-cctad-00} to get a proof. We demonstrate that the merging procedure will
actually just output the join tree, so we do not need to actually compute the split tree.
(The real definition of minimum dominant is a little more complicated,
so the contour tree is more than just the join tree. But computationally, it suffices
to construct the join tree.)

We stress the importance of this step for our approach. Given the algorithm of~\cite{csa-cctad-00},
one may think that it suffices to design faster algorithms for join trees. But this
cannot give the sort of optimality we hope for. Again, consider \Fig{sorting}. Any algorithm
to construct the true join tree must construct the path of splits, which implies
sorting all of them. It is absolutely necessary to cut $\MM$ into pieces where the cost
of building the join tree can be related to that of building $\cC(\MM)$.

\subsection{Join trees from painted mountaintops} \label{sec:join-paint}

Arguably, everything up to this point is a preamble for the main result: a faster
algorithm for join trees. Our algorithm does not require the initial input to be extremum
dominant. This is only required to relate the join trees, of the resulting subcomplexes of Theorem~\ref{thm:int-rain}, to the contour tree of the initial input $\MM$.
For clarity, we use $\NN$ to denote the input here. Note that in \Def{int-criticalJoin},
the join tree is defined purely combinatorially in terms of the 1-skeleton (the underlying graph) of $\NN$.

The join tree $\cJ(\NN)$ is a rooted tree with the dominant minimum at the root, and we direct edges downwards (towards the root). So it makes sense to talk
of comparable vs incomparable vertices.
We arrive at the main challenge: how to sort only the comparable critical points, without
constructing the join tree? The join tree algorithm of~\cite{csa-cctad-00} is a typical event-based
computational geometry algorithm. We have to step away from this viewpoint to avoid the global sort.

The key idea is \emph{paint spilling}. Start with each maximum having a large can of paint, with distinct
colors for each maximum. In arbitrary order, we spill paint from each maximum, wait till it flows down,
then spill from the next, etc. Paint is viscous, and only flows down edges. \emph{It does not
paint the interior of higher dimensional faces.} That is, this process is restricted to the 1-skeleton of $\NN$.
Furthermore, our paints do not mix, so each edge receives a unique color, decided by the first
paint to reach it.  In the following, $[n]$ denotes the set $\{1, \ldots, n\}$, for any natural number $n$.

\begin{definition} \label{def:paint1} \InSoCGVer{(Restatement of \Def{paint2}.)} Let the 1-skeleton of $\NN$ have edge set $E$ and maxima $X$.
A  \emph{painting} of $\NN$ is a map $\chi:X \cup E \to [|X|]$ with the following property. 
 Consider an edge $e$. There exists a descending path from some maximum $x$ to $e$
	consisting of edges in $E$, such that all edges along this path have the same color as $x$. 

An \emph{initial} painting has the additional property that the restriction $\chi:X \to [|X|]$ is a bijection.
\end{definition}

Note that a painting colors edges, and not vertices (except for maxima). Our definition also does not require the timing aspect
of iterating over colors, though that is one way of painting $\NN$. We begin with an initial painting, since all maximum
colors are distinct. A few comments on paint vs water. The interface between two regions of different color
is \emph{not} a contour, and so we cannot apply the divide-and-conquer
approach of contour surgery. On the other hand, painting does not cut $\NN$, so there is no increase in complexity.
Clearly, an initial painting can be constructed in $O(N)$ time. This is the tradeoff between water and paint.
Water allows for an easy divide-and-conquer, at the cost of more complexity in the input. For an extremum
dominant input, using water to divide the input $\NN$ raises the complexity too much.

Our algorithm incrementally builds $\cJ(\MM)$ from the leaves (maxima) to the root (dominant minimum). 
We say that vertex $v$ is \emph{touched} by color $c$, if there is a $c$-colored edge with lower endpoint $v$.
Let us focus on an initial painting, where the colors have 1-1 correspondence with the maxima. Refer to the left part of \Fig{colors}.
Consider two sibling leaves $\ell_1, \ell_2$ and their common parent $v$. The leaves are maxima,
and $v$ is a join that ``merges" $\ell_1, \ell_2$. 
In that case, there are ``mounds" corresponding to $\ell_1$ and $\ell_2$ that merge
at a valley $v$. Suppose this was the entire input, and $\ell_1$ was colored blue and $\ell_2$ was colored red. 
Both mounds are colored completely blue or red, while $v$ is touched by both colors.
So this indicates that $v$ joins the blue maximum and red maximum in $\cJ(\MM)$.

This is precisely how we hope to exploit the information in the painting. We prove later that
when some join $v$ has all incident edges with exactly two colors, the corresponding maxima (of those colors) are exactly
the children of $v$ in $\cJ(\MM)$. To proceed further, we ``merge" the colors red and blue into a new color, purple. In other words,
we replace all red and blue edges by purple edges. This indicates that the red and blue maxima have been
handled. Imagine flattening the red and blue mounds until reaching $v$, so that the former join $v$
is now a new maximum, from which purple paint is poured. In terms of $\cJ(\MM)$, this
is equivalent to removing leaves $\ell_1$ and $\ell_2$, and making $v$ a new leaf.
Alternatively, $\cJ(\MM)$ has been constructed up to $v$, and it remains to determine $v$'s parent.
The merging of the colors is not explicitly performed as that would be too expensive; instead we maintain a union-find
data structure for that.

Of course, things are more complicated when there are other mounds. There may be a yellow mound,
corresponding to $\ell_3$ that joins with the blue mound higher up at some vertex $u$ (see the right part of \Fig{colors}). In $\cJ(\MM)$, $\ell_1$ and $\ell_3$
are sibling leaves, and $\ell_2$ is a sibling of some ancestor of these leaves. So we cannot
merge red and blue, until yellow and blue merge. Naturally, we use priority queues to handle this issue.
We know that $u$ must also be touched by blue. So all critical vertices touched by blue
are put into a priority queue keyed by height, and vertices are handled in that order.

\begin{figure}[h]
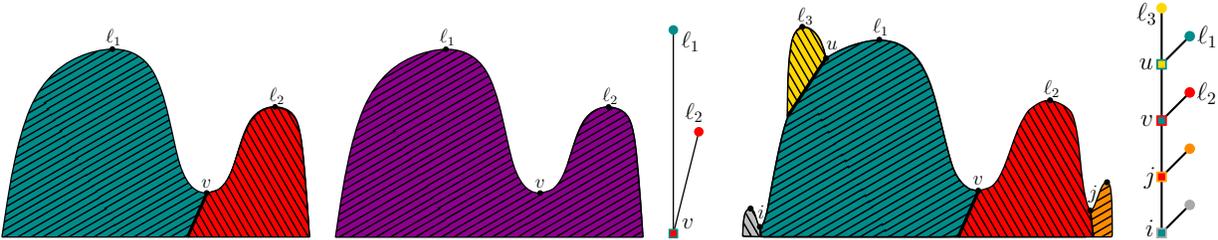
\centering
    \includegraphics[width=.25\linewidth]{figs/redBlue}%
    \hspace{.07in}
    \includegraphics[width=.25\linewidth]{figs/redBlue2}%
    \hspace{.07in}
    \includegraphics[width=.03\linewidth]{figs/redBlueTree}
    \hspace{.09in}
    \includegraphics[width=.3\linewidth]{figs/mound}%
    \hspace{.07in}
    \includegraphics[width=.065\linewidth]{figs/moundTree}
    \caption{On the left, red and blue merge to make purple, followed by the contour tree with initial colors.  On the right, additional maxima and the resulting contour tree.}
    \label{fig:colors}
\end{figure}

What happens when finally blue and red join at $v$? We merge the two colors, but now have
blue and red queues of critical vertices, which also need to be merged to get a consistent painting. 
This necessitates using a priority queue with efficient merges.
Specifically, we use \emph{binomial heaps}~\cite{Vu78}, as they provide logarithmic time merges and deletes (though other similar heaps work). 
We stress that the feasibility of the entire approach hinges on the use of such an efficient heap structure.

In this discussion, we ignored an annoying problem. Vertices
may actually be touched by numerous colors, not just one or two as assumed above. A simple solution
would be to insert vertices into heaps corresponding to all colors touching it. But there could
be super-constant numbers of copies of a vertex, and handling all these copies would lead
to extra overhead. We show that it suffices to simply put each vertex $v$ into at most two heaps, one for each
``side" of a possible join. We are guaranteed that when $v$ needs to be processed, all edges have
at most $2$ colors, because of all the color merges that previously occurred.

\myparagraph{The running time analysis:} 
 All the non-heap operations can be easily bounded by $O(t\alpha(t) + N)$ (the $t\alpha(t)$ is from the union-find data structure
 for colors). It is not hard to argue that at all times, any heap always contains a subset of a leaf to root path.
 This observation suffices to get a running time bound which is the analogue of \Thm{main-corr}, but for join trees. 
 Each heap deletion and merge can be charged to a vertex in the join tree (where each vertex gets charged only a constant number of times).
 Let $d_v$ denote the distance to the root for a vertex $v$ in the join tree, then since a heap's elements are on a single leaf to root 
 path, the size of $v$'s heap at the time an associated heap operation is made is at most $d_v$.
 Therefore, the total cost (of the heap operations) is at most $O(\sum_v \log d_v)$. This immediately proves a bound of $O(t\log D)$, where
 $D$ is the maximum distance to the root, an improvement over previous work. 
 
 However, this bound is non-optimal.  For example, for a balanced binary tree, this gives a bound of $O(t\log\log t)$, however, by using 
 an analysis involving path decompositions we can get an $O(t)$ bound.  Imagine walking from some leaf 
 towards the root. Each vertex on this path can have at most two colors (the ones getting merged), however, as we get closer to the root 
 the competition for which two colors a vertex gets assigned to grows, as the number of descendant leaf colors grows. This means that for some vertices, 
 $v$, the size of the heap for an associated heap operation will be significantly smaller than $d_v$. 
 
The intuition is that the paint spilling from the maxmima in the simplicial complex, corresponds to paint spilling from the leaves in 
the join tree, which decomposes the join tree into a set of colored paths.  Unfortunately, the situation is more complex since 
while a given color class is confined to a single leaf to root path, it may \emph{not} appear contiguously on this path, as the right part of \Fig{colors} shows. 
Specifically, in this figure the far left saddle (labeled $i$) is hit by blue paint.  However, there is another 
saddle on the far right (labeled $j$) which is not hit by blue paint.  Since this far right saddle is slightly 
higher than the far left one, it will merge into the component containing the blue mound (and also the yellow and red mounds)
before the far left one.  Hence, the vertices initially touched by blue are not contiguous in the join tree.

This non-contiguous complication along with the fact that heap size keep changing as color classes merge, causes the analysis to be technically challenging. 
We employ a variant of \emph{heavy path decompositions}, first used by Sleator and Tarjan for analyzing link/cut trees~\cite{st-dsdt-83}.
The final analysis charges expensive heap operations to long paths in the decomposition, resulting in the bound stated in \Thm{main-alg}.

\subsection{The lower bound} 

Consider a contour tree $T$ and the path decomposition $P(T)$ used to bound the running time. 
Denoting $\cost(P(T)) = \sum_{p \in P(T)} |p|\log |p|$, we construct a set of $\prod_{p \in P(T)} |p|!$
functions on a fixed domain such that each function has a distinct (labeled) contour tree. By a simple
entropy argument, any algebraic decision tree that correctly computes the contour tree on all instances
requires worst case $\Omega(\cost(P(T)))$ time. We prove that our algorithm makes $\Theta(\cost(P(T)))$ 
comparisons on all these instances. 

We have a fairly simple construction that works for terrains. In $P(T)$, consider the path
$p$ that involves the root. The base of the construction is a conical ``tent", and there
will be $|p|$ triangular faces that will each have a saddle. The heights of these saddles
can be varied arbitrarily, and that will give $|p|!$ different choices. Each of these saddles
will be connected to a recursive construction involving other paths in $P(T)$. Effectively,
one can think of tiny tents that are sticking out of each face of the main tent. The contour
trees of these tiny tents attach to a main branch of length $|p|$. Working out the details,
we get $\prod_{p \in P(T)} |p|!$ terrains each with a distinct contour tree. 
}

\InSoCGVer{

\newintuitionSection

}

\InNotSoCGVer{
\intuitionSection
}


\section{Divide and conquer through contour surgery} \label{sec:surgery}

{\bf The cutting operation:} We define a ``cut" operation on $f:\MM \rightarrow \RR$ that cuts along a regular contour to create
a new simplicial complex with an added boundary. Given a contour $\phi$, roughly speaking, this constructs the simplicial complex $\MM \setminus \phi$. 
We will always enforce the condition that $\phi$ never passes through a vertex of $\MM$.
Again, we use $\eps$ for an infinitesimally small value. We denote $\phi^+$ (resp. $\phi^-$) to be
the contour at value $f(\phi) + \eps$ (resp. $f(\phi) - \eps$), which is at distance $\eps$
from $\phi$. 

An $h$-contour is achieved by intersecting $\MM$ with the hyperplane $x_{d+1} = h$ and taking a connected component. (Think of the $d+1$-dimension
as height.) Given some point $x$ on an $h$-contour $\phi$, we can walk along $\MM$ from $x$ to determine $\phi$.
We can ``cut" along $\phi$ to get a new (possibly) disconnected simplicial complex $\MM'$. This is achieved
by splitting every face $F$ that $\phi$ intersects into an ``upper" face and ``lower" face. Algorithmically,
we cut $F$ with $\phi^+$ and take everything above $\phi^+$ in $F$ to make the upper face. Analogously, we cut with $\phi^-$ to get the lower face.
The faces are then triangulated to ensure that they are all simplices. 
\Ben{Reviewer wanted clarification on how triangulation is done. Maybe can use bottom-vertex triangulation of Clarkson 88, discussed in Matousek book?}
This creates the two new boundaries $\phi^+$ and $\phi^-$, and we maintain the property of constant $f$-value at a boundary.

Note that by assumption $\phi$ cannot cut a boundary face, and moreover all non-boundary faces
have constant size. Therefore, this process takes time linear in $|\phi|$, the number of faces $\phi$ intersects.
This new simplicial complex is denoted by $\cut(\phi,\MM)$.
We now describe a high-level approach to construct $\reeb(\MM)$ using this cutting procedure.

\Ben{Surgery should be rewritten to take instead take as input the output of cut, i.e. two complexes and a contour.  
This is how it is eventually used in Claim\ref{clm:rain-reeb}. Currently it does nothing really.}

\medskip
\fbox{
\begin{minipage}{0.9\textwidth}
{\bf $\surgery(\MM,\phi)$}

\smallskip
\begin{asparaenum}
	\item Let $\MM' = \cut(\MM,\phi)$.
	\item Construct $\reeb(\MM')$ and let $A, B$ be the nodes corresponding to the new boundaries created
	in $\MM'$. (One is a minimum and the other is maximum.)
	\item Since $A, B$ are leaves, they each have unique neighbors $A'$ and $B'$, respectively. Insert
	edge $(A',B')$ and delete $A, B$ to obtain $\reeb(\MM)$.
\end{asparaenum}
\end{minipage}}

\medskip

\InSoCGVer{The following theorems are intuitively obvious, and are proven in \Sec{proofOfSurgery}.}

\begin{theorem} \label{thm:surgery} For any regular contour $\phi$, the output of $\surgery(\MM,\phi)$ is $\reeb(\MM)$.
\end{theorem}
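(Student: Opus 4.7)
The plan is to identify $\MM' = \cut(\MM,\phi)$ as a disjoint union of two simplicial complexes $\MM_1$ and $\MM_2$, and to show that $\reeb(\MM') = \reeb(\MM_1) \sqcup \reeb(\MM_2)$ is obtained from $\reeb(\MM)$ by subdividing exactly one edge. The surgery step will then clearly invert this modification.

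First I would show that since $\phi$ is a regular contour and $\MM$ is simply connected (so $\reeb(\MM)$ is a tree), $\phi$ separates $\MM$ into exactly two components. Concretely, $\phi$ belongs to a unique contour class in the sense of \Def{equiv}, and this class corresponds to some edge $e = (u,v) \in \cE(\reeb(\MM))$ with $u$ higher than $v$. Any $f$-monotone path from a point strictly above $f(\phi)$ to one strictly below that avoids all critical contours must cross $\phi$ or one of its equivalents, so removing the infinitesimal strip between $\phi^+$ and $\phi^-$ disconnects $\MM$. Call the upper component $\MM_1$ (containing $u$) and the lower component $\MM_2$ (containing $v$).

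Next, the cut creates two new boundaries. By \Def{bound}, $\phi^+$ is a boundary of $\MM_1$ acting as a minimum and thus contributing a new leaf $B$ to $\reeb(\MM_1)$, while $\phi^-$ is a boundary of $\MM_2$ acting as a maximum and contributing a new leaf $A$ to $\reeb(\MM_2)$. Since $\phi$ passes through no vertices of $\MM$, every other critical point of $\MM$ is unchanged in $\MM'$, so $\cV(\reeb(\MM')) = \cV(\reeb(\MM)) \sqcup \{A,B\}$. For the edges, I would verify via \Def{equiv} that every contour class of $\MM$ other than the class of $\phi$ survives intact in $\MM'$ (any $f$-monotone witness path can be perturbed to avoid $\phi$), while the class of $\phi$ splits into exactly two classes in $\MM'$: the one containing $\phi^+$ yields edge $(u,B) \in \cE(\reeb(\MM_1))$, and the one containing $\phi^-$ yields edge $(A,v) \in \cE(\reeb(\MM_2))$.

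It follows that $\reeb(\MM')$ differs from $\reeb(\MM)$ only in that the edge $e = (u,v)$ has been replaced by the two pendant edges $(u,B)$ and $(A,v)$. In particular, $B$ and $A$ are leaves whose unique neighbors are $B' = u$ and $A' = v$, so the surgery step, deleting $A,B$ and inserting edge $(A',B') = (v,u) = e$, exactly inverts the modification and yields $\reeb(\MM)$. The main obstacle is the separation claim together with the bijection of contour classes under cutting: one must argue carefully that the equivalence relation of \Def{equiv} descends to the cut complex as expected, which ultimately hinges on the regularity of $\phi$ (so no critical contour lies in an infinitesimal neighborhood of it) combined with the simple connectivity of $\MM$.
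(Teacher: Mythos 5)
Your proposal follows essentially the same route as the paper: the paper's Lemma~\ref{lem:cut} states precisely that $\cV(\reeb(\MM'))=\cV(\reeb(\MM))\cup\{\phi^+,\phi^-\}$ and $\cE(\reeb(\MM'))=\{(u,\phi^+),(\phi^-,v)\}\cup\bigl(\cE(\reeb(\MM))\setminus\{(u,v)\}\bigr)$, so that $\reeb(\MM')$ is a subdivision of $\reeb(\MM)$ at the edge containing $\phi$, and the surgery step visibly inverts it. Your analysis of how contour classes restrict to the cut complex and how the class of $\phi$ bifurcates is the same argument.

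One remark on ordering: you lead with the separation claim (``$\phi$ separates $\MM$ into exactly two components'') and justify it by saying monotone paths from above to below $\phi$ must cross it -- but that doesn't imply arbitrary (non-monotone) paths are blocked, so as stated this is a non-sequitur. The paper avoids this by proving the Lemma~\ref{lem:cut}-type statement first, \emph{without} assuming disconnection, and then deriving Theorem~\ref{thm:jordan} (the two-component claim) as a corollary: a hypothetical path between $u$ and $v$ in $\MM'$ would, via the monotone-path correspondence of Theorem~\ref{thm:carr-path}, give a $u$--$v$ path in $\reeb(\MM')$ avoiding the leaves $\phi^\pm$, hence a cycle in $\reeb(\MM)$, contradicting acyclicity. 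In fact disconnection is not needed at all to conclude Theorem~\ref{thm:surgery}; once you know the edge set of $\reeb(\MM')$ as above, $\phi^+$ and $\phi^-$ are automatically leaves with unique neighbors $u$ and $v$, and surgery works. So I would either drop the separation step or present it as a consequence of the edge-set computation rather than a prerequisite.
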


\newcommand{\surgeryBody}{
To prove Theorem~\ref{thm:surgery}, we require a theorem from \cite{c-tmi-04} (Theorems 6.6) that map paths in $\cC(\MM)$ to $\MM$.

\begin{theorem} \label{thm:carr-path} For every path $P$ in $\MM$, there exists a path $Q$ in the contour tree corresponding
to the contours passing through points in $P$.
For every path $Q$ in the contour tree, there exists at least one path $P$ in $\MM$ through points present
in contours involving $Q$.

In particular, for every monotone path $P$ in $\MM$, there exists a monotone path $Q$ in the contour tree to which $P$
maps, and vice versa.
\end{theorem}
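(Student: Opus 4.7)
The plan is to work through the natural quotient map $\pi: \MM \to \reeb(\MM)$ that sends each point $x \in \MM$ to the point in $\reeb(\MM)$ representing the contour class (or critical contour) containing $x$'s contour, and then use continuity of $\pi$ together with the tree structure of $\reeb(\MM)$ to derive both directions.

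First I would verify that $\pi$ is well-defined and continuous. Each $x \in \MM$ lies on a unique contour, which by \Def{equiv} is either regular (and hence in a unique contour class identified with an open edge of $\reeb(\MM)$, with $f(x)$ pinpointing a specific location along that edge) or critical (and then mapped to the corresponding vertex). Continuity at regular points is immediate from the continuity of $f$ and the fact that nearby regular points lie on nearby contours in the same class; continuity at critical points requires the local argument that regular contours in a punctured neighborhood of a critical contour belong to the appropriate adjacent contour classes.

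For the first direction, given a continuous path $P: [0,1] \to \MM$, the composition $\pi \circ P: [0,1] \to \reeb(\MM)$ is continuous with connected compact image in the finite tree $\reeb(\MM)$; the trace of $\pi \circ P$ is a walk $Q$ from $\pi(P(0))$ to $\pi(P(1))$, and by construction every contour through a point of $P$ has its class on $Q$. For the second direction, given a path $Q = v_0, v_1, \ldots, v_k$ in $\reeb(\MM)$, I would build $P$ segment-by-segment: each edge $(v_{i-1}, v_i)$ corresponds to a contour class whose members form a continuous 1-parameter family indexed by height, so I can select a monotone path in $\MM$ that sweeps transversally across this family; at each internal critical vertex $v_i$, path-connectedness of the critical contour (containing $v_i$) lets me bridge consecutive segments.

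The monotone case then comes essentially for free. If $P$ is monotone in $f$, the heights along $Q = \pi \circ P$ are monotone and hence so is $Q$ in $\reeb(\MM)$; conversely, for monotone $Q$ the within-edge path segments above can each be chosen monotone in height, yielding a monotone $P$. The main obstacle I foresee is the clean verification of continuity of $\pi$ at critical contours, where level-set components can merge, split, or change topology (including gaining handles when $d \geq 3$); handling this rigorously likely requires a local upper/lower link analysis exploiting the piecewise-linear structure of $\MM$ around each critical vertex, or a direct appeal to the formal Reeb-graph quotient construction from the cited literature.
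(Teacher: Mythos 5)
The paper does not prove this statement at all: it is imported verbatim, being cited as Theorem~6.6 of Carr's thesis~\cite{c-tmi-04}, immediately before its use in establishing Theorem~\ref{thm:surgery}. So there is no ``paper's own proof'' to compare against --- the authors treat this as background machinery.

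Your sketch is the natural and standard route: define the quotient map $\pi:\MM\to\reeb(\MM)$, establish its continuity, and then (i) push a path $P$ in $\MM$ forward along $\pi$ to get a continuous path with image a connected subtree, which in a tree is a path; and (ii) for a tree path $Q$, lift edge-by-edge using the $1$-parameter family of contours along each contour class and stitch at critical vertices using connectivity of critical contours. This is consistent with how such Reeb-graph / contour-tree path-correspondence lemmas are proved in the literature. You are right that the only genuinely delicate step is continuity of $\pi$ across critical contours (where level-set components merge, split, or change topology), and this is exactly where a PL upper/lower-link argument --- or the formal mesh-propagation construction in Carr's thesis --- is needed. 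One small caution for direction (ii): when an internal vertex of $Q$ has degree greater than two in $\reeb(\MM)$, you must take care to land the incoming segment and launch the outgoing segment into the two contour classes that $Q$ actually traverses, not arbitrary classes incident to that vertex; the connectivity-of-the-critical-contour argument suffices for this, but it is worth making explicit, since otherwise the constructed $P$ may fail to project to $Q$. With that addressed, and with the continuity lemma supplied, your sketch is sound and matches the approach one would expect in the cited source.
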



Theorem~\ref{thm:surgery} is a direct consequence of the following lemma. 

\begin{lemma} \label{lem:cut} Consider a regular contour $\phi$ contained in a contour class (of an edge of $\reeb(\MM))$
$(u,v)$ and let $\MM' = \cut(\MM,\phi)$. Then $\cV(\reeb(\MM')) = \{\phi^+,\phi^-\} \cup \cV(\MM)$
and $\cE(\reeb(\MM')) = \{(u,\phi^+),(\phi^-,v)\} \cup (\cE(\MM) \setminus (u,v))$.
\end{lemma}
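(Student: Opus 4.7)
The plan is to establish the lemma by separately verifying the claim about the vertex set $\cV(\reeb(\MM'))$ and then about the edge set $\cE(\reeb(\MM'))$, with the key observation being that cutting along a regular contour is a purely ``local'' operation in the combinatorial sense that away from $\phi$ the simplicial complex is unchanged, while near $\phi$ the two new boundaries $\phi^+, \phi^-$ introduce exactly one new extremum each.

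First, I would handle the vertex set. Since $\phi$ is a regular contour, every point on $\phi$ has a neighborhood in $\MM$ whose up-degree and down-degree are both $1$. When we cut, points in $\MM \setminus \phi$ retain their $\MM$-neighborhoods, so their Morse up- and down-degrees are unchanged; hence every vertex in $\cV(\MM)$ remains with the same criticality in $\MM'$, and no pre-existing regular vertex becomes critical. The only new features in $\MM'$ are the boundary facets $\phi^+$ and $\phi^-$. By the boundary-critical assumption (\Def{bound}) and the way $\cut$ was defined, all vertices of $\phi^+$ have neighbors (in $\MM'$) strictly above them, so $\phi^+$ is a minimum boundary, and dually $\phi^-$ is a maximum boundary. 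These become two new leaves in $\reeb(\MM')$, giving $\cV(\reeb(\MM')) = \{\phi^+,\phi^-\} \cup \cV(\MM)$.

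For the edge set I will argue that the equivalence relation of \Def{equiv} is preserved outside the class of $(u,v)$ and splits exactly that one class into two. Consider a regular contour $\psi$ in $\MM$ not in the contour class of $(u,v)$. I claim no monotone path realizing the equivalence of $\psi$ with some $\psi'$ can cross $\phi$: if it did, then by restricting to the subpath from $\psi$ to the crossing point, the path avoids all critical contours (since $\phi$ is regular) and would witness $\psi \sim \phi$ in $\MM$, contradicting that $\psi$ is not in the class of $(u,v)$. Therefore all such monotone paths survive in $\MM'$, and every non-$(u,v)$ class, with its pair of delimiting critical contours at the critical points of $\cV(\MM)$, is reproduced verbatim in $\MM'$. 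Conversely, any monotone path in $\MM'$ lifts to a monotone path in $\MM$ (missing $\phi$), so no new classes are introduced between old critical points. For the class of $(u,v)$ itself, cutting along $\phi$ disconnects it into exactly two classes: the contours on the upper side of $\phi$ form a class with upper delimiter $u$ and lower delimiter the new boundary $\phi^+$, while those on the lower side form a class with upper delimiter $\phi^-$ and lower delimiter $v$. This gives precisely $\cE(\reeb(\MM')) = \{(u,\phi^+), (\phi^-, v)\} \cup (\cE(\MM) \setminus \{(u,v)\})$.

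The main obstacle I anticipate is the ``conversely'' direction above, namely ruling out the possibility that the cut creates spurious new equivalences or disconnects classes other than $(u,v)$. The correct formal tool is \Thm{carr-path}, which gives a bijective-in-spirit correspondence between monotone paths in $\MM$ and monotone paths in the contour tree; I would use it to translate statements about regular contours and monotone paths in $\MM'$ back to $\MM$ and to its contour tree, where the claim reduces to the obvious fact that deleting a point on edge $(u,v)$ of a tree splits only that edge. A secondary but routine technicality is making sure the surgery step as described (deleting $A,B$ and reconnecting $A',B'$) recovers $\reeb(\MM)$ from $\reeb(\MM')$; this is now immediate from the edge description, since $A'=u$ and $B'=v$, and removing the two leaves $\phi^+,\phi^-$ and joining their neighbors $u,v$ inverts the split described above.
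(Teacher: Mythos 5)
Your proof is correct and follows essentially the same two-part structure as the paper's: criticality is a local condition unaffected by the cut (vertex set), and contour classes are preserved verbatim except that the class of $(u,v)$ splits into $(u,\phi^+)$ and $(\phi^-,v)$ (edge set). The appeal to \Thm{carr-path} is superfluous here---the paper reserves that tool for deducing \Thm{jordan} as a corollary, and your direct contradiction argument (a witnessing monotone path crossing $\phi$ would establish $\psi \sim \phi$) already fills in what the paper states more tersely as preservation of maximality of contour classes.
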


\begin{proof} First observe that since $\phi$ is a regular contour, the vertex set in the complex $\MM'$ is the same 
as the vertex set in $\MM$, except with the addition of the newly created vertices on $\phi^+$ and $\phi^-$.  
Moreover, $\cut(\MM,\phi)$ does not affect the local neighborhood of any vertex in $\MM$. 
Therefore since a vertex being critical is a local condition, with the exception of new boundary vertices, the 
critical vertices in $\MM$ and $\MM'$ are the same.  Finally, the new vertices on $\phi^+$ and $\phi^-$ 
collectively behave as a minimum and maximum, respectively, and so $\cV(\reeb(\MM')) = \{\phi^+,\phi^-\} \cup \cV(\MM)$.

Now consider the edge sets of the contour trees.  Any contour class in $\MM'$ (i.e.\ edge in $\cC(\MM')$) that does not involve $\phi^+$ or $\phi^-$ is also
a contour class in $\MM$. Furthermore, a maximal contour class satisfying these properties is also
maximal in $\MM$. So all edges of $\cC(\MM')$ that do not involve $\phi^+$ or $\phi^-$ are edges of $\cC(\MM)$.
Analogously, every edge of $\cC(\MM)$ not involving $\phi$ is an edge of $\cC(\MM')$.

Consider the contour class corresponding to edge $(u,v)$ of $\cC(\MM)$. There is a natural ordering
of the contours by function value, ranging from $f(u)$ to $f(v)$. All contours in this class ``above" $\phi$
form a maximal contour class in $\MM'$, represented by edge $(u,\phi^+)$. Analogously, there is another
contour class represented by edge $(\phi^-,v)$. We have now accounted for all contours in $\cC(\MM')$,
completing the proof.
\end{proof}

A useful corollary of this lemma shows that a contour actually splits the simplicial complex into
two disconnected complexes.
}
\InNotSoCGVer{
\surgeryBody
}

\begin{theorem} \label{thm:jordan} $\cut(\MM,\phi)$ consists of two
disconnected simplicial complexes.
\end{theorem}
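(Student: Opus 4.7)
The plan is to exploit the combinatorial description of $\reeb(\MM')$ just established in \Lem{cut}, together with the standard fact that a connected simplicial complex has a connected contour tree, to pin down the number of components of $\MM' = \cut(\MM,\phi)$ as exactly two.

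First I would examine the structure of $\reeb(\MM')$ as a graph. By \Lem{cut}, its edge set is obtained from $\cE(\reeb(\MM))$ by removing the edge $(u,v)$ and introducing two new leaf edges $(u,\phi^+)$ and $(\phi^-,v)$. Because $\reeb(\MM)$ is a tree, deleting $(u,v)$ splits it into precisely two subtrees $T_u \ni u$ and $T_v \ni v$; attaching $\phi^+$ as a leaf to $T_u$ and $\phi^-$ as a leaf to $T_v$ yields a graph with exactly two connected components. So $\reeb(\MM')$, viewed purely as an abstract graph, has two components.

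Next I would translate this into a statement about $\MM'$. Let $\MM_1, \MM_2, \ldots, \MM_k$ be the connected components of $\MM'$. Since being critical is a purely local property and cutting along a regular contour does not alter local neighborhoods of any interior vertex, each $\MM_i$ carries its own contour tree $\reeb(\MM_i)$, and $\reeb(\MM')$ is the disjoint union $\bigsqcup_i \reeb(\MM_i)$. Each $\reeb(\MM_i)$ is connected: within a connected complex any two regular contours can be joined by a path in $\MM_i$, which by \Thm{carr-path} lifts to a path in $\reeb(\MM_i)$, and every critical vertex is incident to such a contour. Since each $\MM_i$ contains at least one critical vertex (either a pre-existing extremum or one of the new boundaries $\phi^\pm$), each component of $\MM'$ contributes at least one component to $\reeb(\MM')$. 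Hence $k \le 2$. Conversely $k \ge 2$: if $\MM'$ were connected, then $\reeb(\MM')$ would be connected, contradicting the previous paragraph. Combining the bounds gives $k = 2$, as required.

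The main obstacle is the bridge between topological connectedness of $\MM'$ and graph connectedness of $\reeb(\MM')$, in both directions. The direction ``components of $\MM'$ map to components of $\reeb(\MM')$'' is immediate once one verifies that each component has a critical vertex (which holds because each component either inherits the original extrema of $\MM$ or inherits $\phi^+$ or $\phi^-$ as a boundary-critical extremum). The reverse direction, that a connected component of $\MM'$ has a connected contour tree, is the step where \Thm{carr-path} does the real work, turning the topological path-connectedness of $\MM_i$ into combinatorial connectedness of $\reeb(\MM_i)$.
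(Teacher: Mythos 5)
Your proof is correct and follows essentially the same route as the paper's: both rely on \Lem{cut} to see that $\cC(\MM')$ has exactly two components, and both use \Thm{carr-path} to transfer connectedness between $\MM'$ and its contour tree. The paper organizes the contradiction around a hypothetical $u$--$v$ path creating a cycle in $\cC(\MM)$, whereas you count components on both sides, but the underlying ingredients and logic are the same.
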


\newcommand{\proofofDisconnected}{
\begin{proof} Denote (as in \Lem{cut}) the edge containing $\phi$ to be $(u,v)$. Suppose for contradiction that there is a path between vertices $u$ and $v$
in $\MM' = \cut(\MM,\phi)$. By \Thm{carr-path}, there is a path in $\cC(\MM')$ between $u$ and $v$. Since $\phi^+$ and $\phi^-$
are leaves in $\cC(\MM')$, this path cannot use their incident edges. Therefore by \Lem{cut},
all the edges of this path are in $\cE(\cC(\MM)) \setminus (u,v)$. So we get a cycle in $\cC(\MM)$, a contradiction.
To show that there are exactly two connected components in $\cut(\MM,\phi)$, it suffices
to see that $\cC(\MM')$ has two connected components (by \Lem{cut}) and then applying \Thm{carr-path}.
\end{proof}
}
\InNotSoCGVer{\proofofDisconnected}

%
%

\section{Raining to partition $\MM$} \label{sec:rain}

In this section, we describe a linear time procedure that partitions $\MM$ into special
\emph{extremum dominant} simplicial complexes.

\begin{definition} \label{def:dom} A simplicial complex is \emph{minimum dominant} if there exists
a minimum $x$ such that every non-minimal \emph{vertex} in the manifold has a non-ascending path to $x$.
Analogously define \emph{maximum dominant}. 
\end{definition}

The first aspect of the partitioning is ``raining''. Start at some point $x \in \MM$ and imagine rain at $x$.
The water will flow downwards along non-ascending paths and ``wet'' all the points encountered. Note that this procedure considers all points
of the manifold, not just vertices.

\begin{definition} \label{def:wet} Fix $x \in \MM$. The set of points $y \in \MM$ such that there is a non-ascending path from $x$ to $y$
is denoted by $\wet(x,\MM)$ (which in turn is represented as a simplicial complex). A point $z$ is at the \emph{interface} of $\wet(x,\MM)$ if every neighborhood of $z$
has non-trivial intersection with $\wet(x,\MM)$ (i.e.\ the intersection is neither empty nor the entire neighborhood).
\end{definition}

The following claim gives a description of the interface. 

\begin{claim} \label{clm:inter} For any $x$, each component of the interface of $\wet(x,\MM)$ contains a join vertex.
\end{claim}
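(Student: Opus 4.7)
The plan is to produce a Morse join in $K$ via an extremal-point argument combined with a classification of which points of $\MM$ can lie on the interface.

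First, I would show that a regular point of $f$ that lies in the interior of a single contour class of $\MM$ cannot lie on the interface. The key observation is that a regular contour $\phi$ is either entirely wet or entirely dry: any two points of $\phi$ are joined by a constant-height path inside $\phi$, which is non-ascending, so if one point of $\phi$ is reachable from $x$ by a non-ascending path then so is every point. Since water also flows monotonically from a contour to any lower contour in the same class, each contour class (equivalently, each edge of $\reeb(\MM)$) is uniform in wetness. Therefore a regular point $z$ in the interior of its class has a small neighborhood entirely inside that class, hence entirely wet or entirely dry, so $z$ is not on the interface.

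Next I would classify critical vertices by the wet/dry status of their incident branches in $\reeb(\MM)$. The source $x$ has its unique lower branch wet by direct descent, so $x$ lies in the interior of $\wet(x,\MM)$. A non-source maximum has its unique lower branch dry, since water cannot enter the branch through the max without ascending, so it lies in the interior of the dry set. A minimum has a single upper branch whose wetness is uniform, so its entire neighborhood is uniform. At a Morse split $v$ (up-degree $1$, down-degree $2$): if $v$ is wet, both lower branches become wet by descent from $v$; if $v$ is dry, both lower branches are dry since they can only be reached from above through $v$; in either case the neighborhood of $v$ is uniform. Hence the only critical vertices that can lie on the interface are Morse joins with exactly one upper branch wet and the other dry.

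Finally, take a component $K$ of the interface. Since $K$ is a closed subset of the compact simplicial complex $\MM$, the continuous function $f$ attains its maximum on $K$ at some point $z^*$. By the first paragraph, $z^*$ is not a regular point in the interior of a contour class, so $z^*$ is either a critical vertex or a regular point sitting on a critical contour. In the latter case the critical vertex $v$ associated with that critical contour has $f(v)=f(z^*)$ and can be reached from $z^*$ by walking along the contour, with the interface property being preserved at each step by the uniform-wetness analysis of the first paragraph; hence $v\in K$ and we may replace $z^*$ by $v$. Either way, $K$ contains a critical vertex at height $f(z^*)$, which by the second paragraph must be a Morse join. The main obstacle I anticipate is precisely the critical-contour case at the end, where one must verify that the path along the critical contour from $z^*$ to the critical vertex $v$ stays in the same component of the interface; this ultimately reduces to a local flow check at $v$ relating the wet upper branch of $v$ to the side of the critical contour on which $z^*$ lies.
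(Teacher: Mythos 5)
Your approach is genuinely different from the paper's. The paper fixes an arbitrary interface point $y$, observes $y$ is wet with a dry regular contour just above, follows the non-ascending path from $x$ to its first entry point $z$ on the contour $\Gamma_y$ containing $y$, and then walks along $\Gamma_y$ from $z$ to $y$: if every point along the walk had up-degree $1$, the wet contour $\psi$ just above $z$ would coincide with the dry contour $\phi$ just above $y$, a contradiction, so some point on $\Gamma_y$ is a join. You instead build a global picture — classify every edge of $\reeb(\MM)$ as uniformly wet or dry, rule out all critical vertices except mixed joins as interface points, and pick out a join via the maximum of $f$ on the interface component. Your route is more conceptual and would in particular directly yield a characterization of the entire interface; the paper's walk along a single critical contour is shorter and avoids committing to global structure.

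There is, however, a real gap in your first paragraph. You correctly note that each individual regular contour is uniformly wet or dry, but you then conclude that each contour class is uniform in wetness using only the observation that ``water also flows monotonically from a contour to any lower contour in the same class.'' That shows only that the wet contours within a class form a down-set: if a contour at height $h$ is wet, so is every lower contour in the class. It does not rule out a class whose upper portion is dry and lower portion is wet, which is precisely the configuration your later classification of interface points depends on excluding. To close this you need the stronger fact that a non-ascending path from $x$ can only enter a contour class $(u,v)$ through its upper critical contour at $u$ (entering through the lower boundary at height $f(v)$ and reaching a point at height $> f(v)$ would require ascending), so that the entire class is wet iff $u$'s critical contour is wet. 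This can be argued directly or deduced from the monotone-path correspondence of \Thm{carr-path}, but it must be said; as written, uniformity does not follow from what you assert.

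The second difficulty you flag — showing the path from $z^*$ along the critical contour to the join $v$ stays inside the same interface component $K$ — is a real obligation but less serious: the segment of $\Gamma_v$ on the dry-upper-branch side from $z^*$ to $v$ consists of points that are wet with a dry contour just above, so it is contained in the interface, and it is connected. (The paper's proof, incidentally, has an analogous unstated step — that the join it produces on $\Gamma_y$ is in $y$'s interface component — which is resolved by the same observation about the dry-side segment of $\Gamma_y$.)
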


\begin{proof} If $p \in \wet(x,\MM)$, all the points in any contour containing $p$ are also in $\wet(x,\MM)$.
 (Follow the non-ascending path from $x$ to $p$ and then walk along the contour.) The converse is also true,
 so $\wet(x,\MM)$ contains entire contours.

Let $\eps, \delta$ be sufficiently small as usual. Fix some $y$ at the interface.
Note that $y \in \wet(x,\MM)$. (Otherwise, $B_\eps(y)$ is dry.)
The points in $B_\eps(y)$ that lie below $y$ have a descending path from $y$ and hence must be wet.
There must also be a dry point in $B_\eps(y)$ that is above $y$, and hence,
there exists a dry, regular $(f(y)+\delta)$-contour $\phi$ intersecting $B_\eps(y)$.

Let $\Gamma_y$ be the contour containing $y$.  Suppose for contradiction
that $\forall p \in \Gamma_y$, $p$ has up-degree $1$ (see \Def{deg}). Consider the non-ascending path from $x$ to $y$ and let $z$
be the first point of $\Gamma_y$ encountered. There exists a wet, regular $(f(y) + \delta)$-contour $\psi$ 
intersecting $B_\eps(z)$. Now, walk from $z$ to $y$ along $\Gamma_y$. If all points $w$ in this walk
have up-degree $1$, then $\psi$ is the unique $(f(y)+\delta)$-contour
intersecting $B_\eps(w)$. This would imply that $\phi = \psi$, contradicting the fact that $\psi$ is wet
and $\phi$ is dry.
%
\end{proof}

Note that $\wet(x,\MM)$ (and its interface) can be computed in time linear in the size of the wet simplicial complex.
We perform a non-ascending search from $x$. Any face $F$ of $\MM$ encountered is partially (if not entirely) in $\wet(x,\MM)$.
The wet portion is determined by cutting $F$ along the interface. Since each component of the interface is a contour, this is equivalent
to locally cutting $F$ by a hyperplane. All these operations can be performed to output $\wet(x,\MM)$ in time linear in $|\wet(x,\MM)|$.

We define a simple \lift{} operation on the interface components. Consider such a component $\phi$ containing
a join vertex $y$. Take any dry increasing edge incident to $y$, and pick the point $z$ on this edge at height
$f(y) + \delta$ (where $\delta$ is an infinitesimal, but larger than the value $\eps$ used in the definition of $\cut$). 
Let $\lift(\phi)$ be the unique contour through the regular point $z$. Note that $\lift(\phi)$ is dry.
The following claim follows directly from \Thm{jordan}.

\begin{claim} \label{clm:cut-int} Let $\phi$ be a connected component of the interface. Then $\cut(\MM,\lift(\phi))$
results in two disjoint simplicial complexes, one consisting entirely of dry points.
\end{claim}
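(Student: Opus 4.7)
My plan is to treat the two assertions of the claim separately: first, the cut produces two disjoint pieces (which is essentially a direct invocation of \Thm{jordan}); second, one of those pieces is entirely dry (which is the real content of the claim). For the first part, I would check that $\lift(\phi)$ is a regular contour so that \Thm{jordan} can be applied. This is immediate from the definition of $\lift$: $\lift(\phi)$ is the contour through the regular point $z$ at height $f(y)+\delta$, where $\delta$ is an infinitesimal chosen strictly smaller than the distance from $f(y)$ to any other critical value (and larger than the $\eps$ used by $\cut$). Hence no point of $\lift(\phi)$ lies at a critical height, so $\lift(\phi)$ is regular and \Thm{jordan} gives two disconnected components $\MM_1, \MM_2$.

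Next, and this is the main lemma I would prove, I want to show that $\lift(\phi)$ is itself \emph{entirely} dry. The key input is the observation recorded at the start of the proof of \Clm{inter}: wetness is a contour invariant, i.e.\ if any single point of a contour is wet then every point of that contour is wet (follow the non-ascending path from $x$ to the given point, then walk along the contour). By construction, $z$ lies on a dry increasing edge incident to the join $y$, so $z$ is dry; therefore, by the contrapositive of the contour invariance, every point of the contour $\lift(\phi)$ through $z$ is dry.

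With $\lift(\phi)$ known to be dry, the final step is a short reachability argument. Let $\MM_1$ be the piece of $\cut(\MM, \lift(\phi))$ containing the rain source $x$; I claim $\MM_2$ is entirely dry. Suppose for contradiction that $w \in \MM_2$ is wet. Then there is a non-ascending path $\pi$ from $x$ to $w$ in $\MM$. Because $\MM_1$ and $\MM_2$ are disconnected after cutting, $\pi$ must cross $\lift(\phi)$ at some point $p$. The prefix of $\pi$ from $x$ to $p$ is non-ascending, so $p \in \wet(x,\MM)$. But $p \in \lift(\phi)$, contradicting that $\lift(\phi)$ is dry.

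The main obstacle is the middle step: proving $\lift(\phi)$ is entirely dry. Once contour-invariance of wetness is in hand from \Clm{inter}, this falls out, and the rest of the argument is almost formal. A minor care point I would not belabor is that $x \notin \lift(\phi)$ (since $\lift(\phi)$ consists of regular points at height $f(y)+\delta$ while $x$ is strictly above the interface), so that ``the piece containing $x$'' is well-defined.
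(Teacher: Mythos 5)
Your proposal is correct and follows essentially the same route as the paper's own proof: apply \Thm{jordan} to split $\MM$ into two pieces, observe that $\lift(\phi)$ is dry, and conclude that the piece not containing $x$ is dry because any non-ascending path from $x$ would have to cross the dry contour. You merely flesh out two steps that the paper compresses into asides — that $\lift(\phi)$ is regular (needed for \Thm{jordan}) and that the dryness of $z$ propagates to all of $\lift(\phi)$ by contour-invariance of wetness — but the argument and its key ingredients are the same.
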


\InNotSoCGVer{
\begin{proof} By \Thm{jordan}, $\cut(\MM,\lift(\phi))$ results in two disjoint simplicial complexes. Let $\NN$ be the complex containing
the point $x$ (the argument in $\wet(x,\MM)$), and let $\NN'$ be the other complex. 
Any path from $x$ to $\NN'$ must intersect $\lift(\phi)$, which is dry. Hence $\NN'$ is dry.
%
%
\end{proof}
}

We describe the main partitioning procedure that cuts a simplicial complex $\NN$ into extremum
dominant complexes. It takes an additional input of a maximum $x$. To initialize,
we begin with $\NN$ set to $\MM$ and $x$ as an arbitrary maximum. 
When
we start, rain flows downwards. In each recursive call, the direction of rain is \emph{switched} to the 
opposite direction. This is crucial to ensure a linear running time. 
The switching is easily implemented by inverting a complex $\NN'$, achieved by negating the height values.
We can now let rain flow downwards, as it usually does in our world.

%
%
\medskip
\fbox{
\begin{minipage}{0.9\textwidth}
{\bf $\rain(x,\NN)$}

\smallskip
\begin{compactenum}
	\item Determine interface of $\wet(x,\NN)$. 
	\item If the interface is empty, simply output $\NN$. Otherwise, denote the connected components by $\phi_1, \phi_2, \ldots, \phi_k$ and set $\phi'_i = \lift(\phi_i)$.
	\item Initialize $\NN_1 = \NN$.
	\item For $i$ from $1$ to $k$: 
	\begin{compactenum}
		\item Construct $\cut(\NN_i,\phi'_i)$, consisting of dry complex $\LL_i$ and remainder $\NN_{i+1}$.
		\item Let the newly created boundary of $\LL_i$ be $B_i$. Invert $\LL_i$ so that $B_i$ is a maximum. Recursively
		call $\rain(B_i,\LL_i)$.
	\end{compactenum}
	\item Output $\NN_{k+1}$ together with any complexes output by recursive calls.
\end{compactenum}
\end{minipage}}

\medskip
For convenience, denote the total output of $\rain(x,\MM)$ by $\MM_1, \MM_2, \ldots, \MM_r$.

\begin{lemma} \label{lem:rain-1} Each output $\MM_i$ is extremum dominant.
\end{lemma}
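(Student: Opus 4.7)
My plan is to induct on the recursion tree of \rain{}, so that it suffices to show the top-level piece $\NN_{k+1}$ is extremum dominant, with $x$ as its dominant maximum. The recursive outputs $\rain(B_i,\LL_i)$ are then handled by the inductive hypothesis (after inversion each $\LL_i$ has the new boundary $B_i$ as a maximum, and since each cut strictly removes the dry piece $\LL_i$ the recursion is finite).

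For $\NN_{k+1}$ itself, I would proceed in three steps. First, $x$ remains a maximum of $\NN_{k+1}$: every lift contour $\phi'_i$ sits at height $f(y_i)+\delta$ for some interface join $y_i\in\phi_i$ strictly below $x$, so only slabs of infinitesimal thickness well below $x$ are excised and the local neighborhood of $x$ is untouched. Second, I would identify the surviving vertices. Because $\phi'_i = \lift(\phi_i)$ is drawn through a dry regular point, both of the cut contours $\phi'_i{}^+$ and $\phi'_i{}^-$ are dry, so each removed $2\eps$-slab sits strictly inside the dry region of $\NN$. Hence no $\NN$-vertex is deleted except those in the dry pieces $\LL_i$, and the vertices of $\NN$ that survive in $\NN_{k+1}$ are exactly those lying in $\wet(x,\NN)$. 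The additional new vertices all sit on some boundary $\phi'_i{}^-$, and by boundary criticality (\Def{bound}) they are trivial maxima of $\NN_{k+1}$, so they impose no obligation for the maximum-dominance condition.

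Third, for any non-maximum vertex $v\in\NN_{k+1}$, the second step shows $v\in\wet(x,\NN)$, so by \Def{wet} there is a non-ascending path $P$ from $x$ to $v$ in $\NN$. Every point of $P$ is reached from $x$ by a non-ascending prefix of $P$, so $P$ itself lies entirely in $\wet(x,\NN)$; since the excised slabs are in the dry region, $P$ survives in $\NN_{k+1}$ and its reverse gives the required non-descending path from $v$ to $x$. The main obstacle I anticipate is precisely the second step: one must use the geometric fact that \lift{} produces a dry contour at infinitesimal offset \emph{above} the interface, rather than cutting along $\phi_i$ itself. Cutting along the interface would incise the wet region at each join, potentially severing the non-ascending path from $x$ to a wet vertex and breaking maximum dominance of $\NN_{k+1}$.
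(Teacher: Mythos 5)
Your proposal follows essentially the same route as the paper's proof: after the cuts, every original vertex that survives in $\NN_{k+1}$ is wet (the excised slabs and dry pieces $\LL_i$ are all dry), the only dry vertices remaining are the new boundary ones, and those are maxima by construction of $\cut$, so maximum dominance holds with $x$ as the dominant maximum; the recursive pieces are handled the same way. The paper phrases the pivotal step more compactly, noting that the interface of $\wet(x,\NN_{k+1})$ is still exactly $\phi_1,\ldots,\phi_k$, but this is the same observation as your steps two and three.
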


\begin{proof} Consider a call to $\rain(x,\NN)$. If the interface is empty, then all of $\NN$
is in $\wet(x,\NN)$, so $\NN$ is trivially extremum dominant. So suppose the interface
is non-empty and consists of $\phi_1, \phi_2, \ldots, \phi_k$ (as denoted in the procedure).
By repeated applications of \Clm{cut-int}, $\NN_{k+1}$ contains $\wet(x,\MM)$. 
Consider $\wet(x,\NN_{k+1})$. The interface must exactly be $\phi_1, \phi_2, \ldots, \phi_k$.
So the only dry vertices are those in the boundaries $B_1, B_2, \ldots, B_k$. But these
boundaries are maxima.
\end{proof}

As $\rain(x,\MM)$ proceeds, new faces/simplices are created because of repeated cutting. 
The key to the running time of $\rain(x,\MM)$ is bounding the number of newly created faces, for which we have the following lemma.

\begin{lemma}\label{lem:new-verts}
A face $F \in \MM$ is cut\footnote{Technically what we are calling a single cut is done with two hyperplanes.} at most once during $\rain(x,\MM)$.
\end{lemma}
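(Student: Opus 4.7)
The plan has two steps, stitched together by an induction on recursion depth. Step 1: within a single $\rain$ call, a face $F$ of the current complex is cut by at most one interface contour. Step 2: once $F$ has been cut, neither resulting sub-polytope can be cut in any subsequent recursive call.

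For step 1, the key observation is that $\wet(x,\NN) \cap F$ is a sublevel set of $f$ on $F$. Since $F$ is a simplex (hence convex) and $f$ is affine on $F$, for any $p, q \in F$ with $f(q) \le f(p)$ the straight segment from $p$ to $q$ lies inside $F$ and is non-ascending; concatenating a non-ascending path from $x$ to $p$ in $\NN$ with this segment shows that $q$ is wet whenever $p$ is. Thus $\wet(x,\NN) \cap F = F \cap \{f \le h_F\}$ for some threshold $h_F$, and its boundary inside $F$ is the single connected horizontal slice at height $h_F$. This slice lies in one interface component, so at most one lifted cut $\phi'_i$ passes through $F$, and in particular no other $\phi'_j$ touches the resulting $F^+$ or $F^-$.

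For step 2, suppose $\phi'_i$ cuts $F$ into an upper polytope $F^+ \subseteq L_i$ and a lower polytope $F^- \subseteq \NN_{i+1}$. The piece $F^-$ lands in $\NN_{k+1}$, which is output without further recursion, so $F^-$ is never cut again. For $F^+$, note it is convex and $f$ attains its minimum on the new flat bottom face $F^+ \cap B_i$ (at height $f(\phi_i)+\delta+\eps$). In the recursive call $\rain(B_i, L_i)$, after inversion a point is ``wet'' iff it is reachable from $B_i$ by a non-descending path in the original orientation of $L_i$. For any $p \in F^+$, the straight segment from any $q \in F^+ \cap B_i$ to $p$ stays in $F^+ \subseteq L_i$ (convexity) and is non-descending in the original orientation (affineness of $f|_{F^+}$ together with $f(q) \le f(p)$). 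Hence $F^+$ is entirely wet in this recursive call, no new interface cuts it, and $F^+$ lands in the wet output of the call, which is again never recursed upon. The main obstacle is really this step: it is where the alternation of rain direction pays off, placing the new rain source $B_i$ exactly on the flat minimum of every previously cut $F^+$ so that the convex straight-line argument applies in the flipped direction. Without alternation, $F^+$ would not necessarily be reachable from the new rain source by non-descending paths, and additional cuts could occur.
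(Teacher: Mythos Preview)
Your proof is correct and follows essentially the same route as the paper's. Both arguments hinge on the identical key insight: after the first cut of $F$ by some $\phi'_i$, the lower piece is absorbed into the terminal output $\NN_{k+1}$, while the upper piece $F^+$ has its new bottom face sitting on the boundary $B_i$; the alternation of rain direction then guarantees that $F^+$ is entirely wet in the recursive call $\rain(B_i,\LL_i)$ and hence never cut again.

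Where you differ is in the level of geometric detail. The paper simply asserts that ``$U$ becomes wet'' because its lower boundary lies on $B_i$; you unpack this explicitly via convexity of $F^+$ and affineness of $f|_{F^+}$, exhibiting the straight segment from $B_i$ to any point of $F^+$ as the required monotone path. Similarly, your Step~1 (that the wet set in $F$ is a sublevel set, hence the interface meets $F$ in a single horizontal slice belonging to one component $\phi_i$) is a point the paper leaves implicit in the phrase ``the lower portion, which is adjacent to $\phi_i$, gets included in $\NN_{k+1}$.'' Your version makes the argument more self-contained, though the inference ``only $\phi'_i$ passes through $F$'' quietly uses that distinct dry components have distinct interface boundaries --- which follows from simple connectivity of $\MM$ and connectedness of the wet region, both available here.
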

\begin{proof} Notation here follows the pseudocode of $\rain$.
First, by \Thm{jordan}, all the pieces on which $\rain$ is invoked are disjoint.
Second, all recursive calls are made on dry complexes.

Consider the first time that $F$ is cut, say, during the call to $\rain(x,\NN)$.
Specifically, say this happens when $\cut(\NN_i,\phi'_i)$ is constructed. 
$\cut(\NN_i,\phi'_i)$ will cut $F$ with two horizontal cutting planes, one $\eps$ above $\phi'_i$
and one $\eps$ below $\phi'_i$.  This breaks $F$ into lower and upper portions which are then triangulated
(there is also a discarded middle portion).  The lower portion, which is adjacent to $\phi_i$, gets included 
in $\NN_{k+1}$, the complex containing the wet points, and hence does not participate in any later recursive calls.  
The upper portion (call it $U$) is in $\LL_i$. Note that the lower
boundary of $U$ is in the boundary $B_i$. Since a recursive call is made to $\rain(B_i,\LL_i)$
(and $\LL_i$ is inverted), $U$ becomes wet. Hence $U$, and correspondingly $F$, will not be subsequently cut.
\end{proof}

The following are direct consequences of \Lem{new-verts} and the $\surgery$ procedure.

\begin{theorem} \label{thm:rain-time} The total running time of $\rain(x,\MM)$ is $O(|\MM|)$.
\end{theorem}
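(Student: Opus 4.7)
The plan is to charge the work of each call $\rain(x, \NN)$ to the size of its wet region plus the number of face cuts it performs, and then to bound both of these quantities globally using \Lem{new-verts} together with the disjointness of wet regions across calls.

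First I would argue the per-call bound $O(|\wet(x, \NN)| + \sum_i |\phi'_i|)$. The non-ascending search from $x$ visits only wet faces and the interface bounding them, and identifies the $\phi_i$ in time $O(|\wet(x, \NN)|)$; each lift $\phi'_i$ is computed in time $O(|\phi_i|)$ and each cut $\cut(\NN_i, \phi'_i)$ runs in time $O(|\phi'_i|)$. The inversion of $\LL_i$ is not performed by explicit negation but by a sign flag, and the dry piece $\LL_i$ is not enumerated up front: the recursive wet search from $B_i$ is naturally confined to the faces the parent marked dry and discovers only what it needs. These implementation choices prevent the per-call cost from degrading to $O(|\NN|)$.

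Next, I would sum these bounds across all recursive calls. Once a face becomes wet it is placed in the output of its call and never reappears, so wet regions at distinct calls are pairwise disjoint. Because the algorithm halts only when the entire current complex is wet, every processed face is wet in exactly one call, and hence $\sum_{\text{calls}} |\wet(x, \NN)|$ equals the total count of faces appearing across all processed complexes, namely $|\MM|$ plus the number of new faces created by cuts. By \Lem{new-verts}, each original face of $\MM$ is cut at most once, and (as its proof shows) the dry portion resulting from a cut is wetted in the very next call and never cut again; each cut creates only $O(1)$ new faces, so the total number of new faces is $O(|\MM|)$. This simultaneously bounds $\sum_{\text{calls}} \sum_i |\phi'_i|$ by $O(|\MM|)$ (one contribution per cut event), yielding the desired total running time of $O(|\MM|)$.

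The main obstacle is keeping the per-call cost down to $O(|\wet| + \text{cut size})$ rather than $O(|\NN|)$. Without implicit inversion and on-demand exploration of the dry pieces, a face could be touched in every recursive call whose complex contains it, and since the recursion depth can be as large as the number of critical points, the naive sum $\sum_{\text{calls}} |\NN|$ need not be linear in $|\MM|$. Once this implementation detail is handled, the rest of the argument reduces cleanly to \Lem{new-verts}.
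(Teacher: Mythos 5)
Your proof is correct and takes essentially the same approach as the paper's: charge the work to the $\wet$ and $\cut$ calls, then use \Lem{new-verts} to bound both the total number of faces ever created and the total cut length by $O(|\MM|)$, and use the fact that no face is wetted twice to bound the total wet-search work. Your extra paragraph about lazy inversion and not re-enumerating dry pieces spells out an implementation detail that the paper leaves implicit, but it does not change the argument.
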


\InNotSoCGVer{
\begin{proof} The only non-trivial operations performed are $\wet$ and $\cut$.  
Since $\cut$ is a linear time procedure, Lemma~\ref{lem:new-verts} implies the total time for all calls to $\cut$ is $O(|\MM|)$.  
As for the $\wet$ procedure, observe that Lemma~\ref{lem:new-verts} additionally implies there are only $O(|\MM|)$ new faces created by $\rain$.
Therefore, since $\wet$ is also a linear time procedure, and no face is ever wet twice, the total time for all calls to $\wet$ is $O(|\MM|)$.
\end{proof}
}

\begin{claim} \label{clm:rain-reeb} Given $\reeb(\MM_1), \reeb(\MM_2), \ldots, \reeb(\MM_r)$, 
$\reeb(\MM)$ can be constructed in $O(|\MM|)$ time.
\end{claim}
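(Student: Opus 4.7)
The plan is to reverse the cutting performed by \rain{} using \Thm{surgery}. Each execution of $\cut(\NN_i,\phi_i')$ inside \rain{} introduces exactly two new boundaries, one in the dry complex $\LL_i$ and one in the remainder $\NN_{i+1}$; these boundaries lie at the same height and are in bijection with the lifted contours $\phi_i'$ used during \rain. Record this pairing during the execution of \rain{} (it requires only a constant amount of bookkeeping per cut). When all recursion terminates, we have a list of boundary pairs $(B_i^{\text{up}}, B_i^{\text{down}})$ such that each pair sits in two of the pieces $\MM_{j_1}, \MM_{j_2}$ and corresponds to one regular contour of the original $\MM$.

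Since each $\MM_{j}$ is boundary critical, every boundary appears as a leaf of $\reeb(\MM_j)$ (either a maximum or a minimum). The reconstruction algorithm is then the direct inverse of \surgery: for each recorded pair $(B_i^{\text{up}}, B_i^{\text{down}})$ located in contour trees $\reeb(\MM_{j_1})$ and $\reeb(\MM_{j_2})$, locate the unique neighbor $A$ of $B_i^{\text{up}}$ and the unique neighbor $A'$ of $B_i^{\text{down}}$, delete the two leaves, and insert the edge $(A,A')$. Processing the pairs in the reverse order of the cuts made by \rain{} and repeatedly applying \Thm{surgery} yields exactly $\reeb(\MM)$.

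For the running time, \Lem{new-verts} implies that the total size of all newly created boundaries across the entire execution of \rain{} is $O(|\MM|)$, and \Thm{rain-time} already gives $|\MM_1|+\cdots+|\MM_r|=O(|\MM|)$. Thus the total number of boundary-leaf pairs is $O(|\MM|)$, and each undo step does $O(1)$ work (leaf lookup, edge insertion, leaf deletion) provided we store, alongside each boundary, a pointer to its representative leaf in the appropriate $\reeb(\MM_j)$. Summing, the total reconstruction time is $O(|\MM|)$.

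The main obstacle is really just the bookkeeping: one needs to guarantee that the boundaries introduced by \rain{} can be matched across pieces in constant time per pair, and that each matched pair indeed corresponds to a single regular contour of the original complex $\MM$ (rather than one whose face was later subdivided). This is where \Lem{new-verts} is crucial — because no face is cut more than once, each recorded contour survives intact in both incident pieces and can be identified by a single face-pointer stored during the cut. With that invariant in hand, the correctness follows by iterated application of \Thm{surgery} and the linear time bound is immediate.
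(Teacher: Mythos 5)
Your proof is correct and takes essentially the same route as the paper: the paper also reverses the cuts by walking the recursion tree of \rain{} in leaf-first order and joining the pieces' contour trees via $\surgery$, with the observation that each such step is $O(1)$ and the recursion tree has linear size. You are just spelling out the bookkeeping (boundary pairing, invoking \Lem{new-verts} to guarantee the matched boundaries survive intact) that the paper's two-line proof leaves implicit.
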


\InNotSoCGVer{
\begin{proof} Consider the tree of recursive calls in $\rain(x,\MM)$, with each node
labeled with some $\MM_i$.
Walk through this tree in a leaf first ordering.  Each time we visit a node we connect its contour tree to the 
contour tree of its children in the tree using the $\surgery$ procedure. 
Each $\surgery$ call takes constant time, and the total time is the size of the recursion tree.
\end{proof}
}

\section{Contour trees of extremum dominant manifolds} \label{sec:extreme}

The previous section allows us to restrict attention to extremum dominant manifolds.
We will orient so that the extremum in question is always a \emph{minimum}.
We will fix such a simplicial complex $\MM$, with the dominant minimum $m^*$. 
For vertex $v$, we use $\MM^+_v$ to denote the simplicial complex obtained by only
keeping vertices $u$ such that $f(u) > f(v)$. Analogously, define $\MM^-_v$. Note that $\MM^+_v$ may contain
numerous connected components. 

\InNotSoCGVer{
The main theorem of this section asserts that contour trees of minimum dominant manifolds have a simple description.
The exact statement will require some definitions and notation.
We require the notions of \emph{join} and \emph{split} trees, as given by~\cite{csa-cctad-00}.
Conventionally, all edges are directed from higher to lower function value. 
}
\InSoCGVer{
The main theorem of this section asserts that contour trees of minimum dominant manifolds have a simple description. 
Intuitively, the cutting procedure of \Sec{rain} introduced ``stumps'' where we made cuts, which 
is why we allowed for non-dominant minima and maxima in the definition of extremum dominant manifolds.  
The punchline is that, ignoring these stumps, the contour tree of an extremum dominant manifold is equivalent to its \emph{join} (or \emph{split}) tree, defined in~\cite{csa-cctad-00}.
Hence the critical join tree below is just the join tree without these stumps.  
Additional definitions and proofs are given in \Sec{jstedm}.
}

\newcommand{\joinTreeSplitTreeMerge}{
\begin{definition} \label{def:join} The \emph{join tree} $\cJ(\MM)$ of $\MM$ is built on vertex set $\cV(\MM)$.
The directed edge $(u,v)$ is present when $u$ is the smallest valued vertex in a connected component of $\MM^+_v$
\emph{and} $v$ is adjacent to a vertex in this component (in $\MM$). The \emph{split tree} $\cS(\MM)$ is obtained
by looking at $\MM^-_v$ (or alternatively, by taking the join tree of the inversion of $\MM$). 
\end{definition}

Some basic facts about these trees. 
All outdegrees in $\cJ(\MM)$ are at most $1$, all indegree $2$ vertices are joins, all leaves are maxima,
and the global minimum is the root. All indegrees in $\cS(\MM)$ are at most $1$, all outdegree $2$
vertices are splits, all leaves are minima, and the global maximum is the root.
As these trees are rooted, we can use ancestor-descendant terminology.  
Specifically, for two adjacent vertices $u$ and $v$, $u$ is the parent of $v$ if $u$ is closer to the root 
(i.e.\ each node can have at most one parent, but can have two children).

The key observation is that $\cS(\MM)$ is trivial for a minimum dominant $\MM$.

\begin{lemma} \label{lem:split} $\cS(\MM)$ consists of:
\begin{asparaitem}
	\item A single path (in sorted order) with all vertices except non-dominant minima.
	\item Each non-dominant minimum is attached to a unique split (which is adjacent to it).
\end{asparaitem}
\end{lemma}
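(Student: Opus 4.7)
The plan is to analyze the connected components of $\MM^-_v$ for each non-minimum critical $v$, exploiting minimum dominance. Write $M$ for the set of non-dominant minima and $V' := \cV(\MM) \setminus M$. I first claim that $\MM^-_v$ has a distinguished ``main component'' $C_v$ containing $m^*$ together with every element of $V'$ of $f$-value less than $f(v)$, and that $v$ is $\MM$-adjacent to $C_v$. By minimum dominance, $v$ has a non-ascending path to $m^*$; its first step reaches a neighbor $u$ with $f(u) < f(v)$, and $u$ cannot be a non-dominant minimum (else the path would be stuck there). Thus $u \in V'$ and the remainder of the path stays inside $\MM^-_v$ by monotonicity, connecting $u$ to $m^*$ and showing that $v$ is $\MM$-adjacent to $C_v$. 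The same argument places every $u' \in V'$ with $f(u') < f(v)$ into $C_v$. Consequently, the only other components of $\MM^-_v$ can be isolated non-dominant minima $m$, each of whose $\MM$-neighbors have $f$-value at least $f(v)$.

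Next I identify the maximum of $C_v$. If a non-dominant minimum $m$ lies in $C_v$, then $m$ must have some $\MM$-neighbor $w \in C_v$ with $f(w) > f(m)$; since $w$ is strictly below $v$ and has $m$ as a lower $\MM$-neighbor, $w$ is not a minimum and so $w \in V'$. Therefore $m$ can never be the maximum of $C_v$, which must then be the largest $V'$-vertex of $f$-value less than $f(v)$. Ordering $V'$ by height as $v_0 = m^* < v_1 < \cdots < v_k$, for each $i \geq 1$ this yields the split tree edge $\{v_{i-1}, v_i\}$ between $v_i$ and the maximum $v_{i-1}$ of $C_{v_i}$, producing the claimed monotone spine.

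Finally, consider a non-dominant minimum $m$ and let $w$ be its lowest-valued $\MM$-neighbor (which lies in $V'$ since $f(w) > f(m)$). For any $v$ with $f(m) < f(v) \leq f(w)$, every $\MM$-neighbor of $m$ has $f$-value $\geq f(w) \geq f(v)$, so $\{m\}$ is an isolated component of $\MM^-_v$ with maximum $m$. The only vertex $v$ that is both $\MM$-adjacent to $\{m\}$ and satisfies $f(v) \leq f(w)$ is $v = w$ itself, giving the unique split tree edge $\{m, w\}$. Since $w \in V'$ is already incident to the two spine edges at $w$, this additional edge to $m$ gives $w$ degree at least three, making $w$ a split. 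I anticipate the main subtlety to be showing that non-dominant minima in $C_v$ never become its maximum, so that consecutive spine vertices $v_{i-1}, v_i$ are directly linked; this is handled above using the observation that any non-dom min in $C_v$ has a strictly higher $V'$-neighbor in $C_v$.
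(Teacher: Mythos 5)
Your proof takes a genuinely different route from the paper's. The paper argues via contour surgery: for a split $v$ it takes the two $(f(v)-\delta)$-contours below $v$, applies the cutting lemma (\Thm{jordan}) twice to produce a piece $\NN$ not containing $m^*$ whose new boundary is a maximum, and concludes by minimum dominance that $\NN$ consists of a single non-dominant minimum. Your argument instead analyzes $\MM^-_v$ directly from the combinatorial definition of the split tree, showing there is a unique ``main'' component $C_v$ containing $m^*$ and that all other components are singleton non-dominant minima whose neighbors all lie at or above $f(v)$. This is more elementary (no appeal to \Thm{jordan} or to \Lem{cut}) and has the pleasant feature of exhibiting the spine edges and leaf attachments constructively.

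However, there is a gap in the final paragraph. You write that the lowest-valued $\MM$-neighbor $w$ of a non-dominant minimum $m$ ``lies in $V'$ since $f(w) > f(m)$.'' This does not follow: $f(w) > f(m)$ shows only that $w$ is not a minimum, and $V' = \cV(\MM)\setminus M$ consists of \emph{critical} non-minima, so $w$ could a priori be a regular vertex. The subsequent sentence then becomes circular: you use ``$w\in V'$ is already incident to the two spine edges'' to conclude $w$ has degree three and is therefore a split, but $w$ sitting on the $V'$-spine is exactly what requires $w$ to be critical in the first place. The fix is short and in the spirit of your proof: $m$ is isolated in the lower link of $w$ (every other neighbor of $m$ has $f$-value strictly greater than $f(w)$, so none of them lies in $\MM^-_w$), and $w$ must have a second, distinct lower neighbor, since a non-ascending path from $w$ to $m^*$ exists by minimum dominance and cannot begin with the dead-end $m$. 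Thus $\MM^-_w$ has at least two components adjacent to $w$, so $w$ has outdegree two in $\cS(\MM)$ and is a split; in particular $w$ is critical, after which your degree count at $w$ goes through.
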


\begin{proof} It suffices to prove that each split $v$ has one child that is just a leaf, which
is a non-dominant minimum.  Specifically, any minimum is a leaf in $\cS(\MM)$ and thereby attached to a split, 
which implies that if we removed all non-dominant minima, we must end up with a path, as asserted above.

Consider a split $v$. For sufficiently small $\eps, \delta$, there are exactly two $(f(v) - \delta)$-contours
$\phi$ and $\psi$ intersecting $B_\eps(v)$. Both of these are regular contours. There must be a non-ascending
path from $v$ to the dominant minimum $m^*$. Consider the first edge (necessarily decreasing from $v$)
on this path. It must intersect one of the $(f(v) - \delta)$-contours, say $\phi$. By \Thm{jordan}, $\cut(\MM,\phi)$ has
two connected components, with one (call it $\LL$) having $\phi^-$ as a boundary maximum. This complex
contains $m^*$ as the non-ascending path intersects $\phi$ only once. Let the other
component be called $\MM'$.

Consider $\cut(\MM',\psi)$ with connected component $\NN$ having $\psi^-$ as a boundary. $\NN$
does not contain $m^*$, so any path from the interior of $\NN$ to $m^*$ must intersect the boundary $\psi^-$.
But the latter is a maximum in $\NN$, so there can be no non-ascending path from the interior to $m^*$.
Since $\MM$ is overall minimum dominant, the interior of $\NN$ can only contain a single vertex $w$, a non-dominant
minimum.

The split $v$ has two children in $\cS(\MM)$, one in $\NN$ and one in $\LL$. The child in $\NN$ can only
be the non-dominant minimum $w$, which is a leaf. 
\end{proof}

It is convenient to denote the non-dominant minima as $m_1, m_2, \ldots, m_k$
and the corresponding splits (as given by the lemma above) as $s_1, s_2, \ldots, s_k$. 

Using the above lemma we can now prove that computing the contour tree for a minimum dominant manifold 
amounts to computing its join tree.  Specifically, to prove our main theorem, we rely on the correctness of the
merging procedure from~\cite{csa-cctad-00} that constructs the contour tree from the join and split trees. 
It actually constructs the \emph{augmented contour tree} $\cA(\MM)$, which is 
obtained by replacing each edge in the contour tree with a path of all regular vertices 
(sorted by height) whose corresponding contour belongs to the equivalence class of that edge.

Consider a tree $T$ with a vertex $v$ of in and out degree at most $1$.
\emph{Erasing} $v$ from $T$ is the following operation: if $v$ is a leaf, just delete $v$. Otherwise, 
delete $v$ and connect its neighbors by an edge (i.e.\ smooth $v$ out). This tree is denoted by $T \ominus v$.

\medskip
\fbox{
\begin{minipage}{0.9\textwidth}
{\bf $\merge(\cJ(\MM),\cS(\MM))$}

\smallskip
\begin{compactenum}
	\item Set $\cJ = \cJ(\MM)$ and $\cS = \cS(\MM)$.
	\item Denote $v$ as a \emph{candidate} if the sum of its indegree in $\cJ$ and outdegree in $\cS$ is $1$. 
	\item Add all candidates to queue.
	\item While candidate queue is non-empty: 
	\begin{compactenum}
		\item Let $v$ be head of queue. If $v$ is leaf in $\cJ$, consider its edge in $\cJ$. Otherwise
		consider its edge in $\cS$. In either case, denote the edge by $(v,w)$.
		\item Insert $(v,w)$ in $\cA(\MM)$. 
		\item Set $\cJ = \cJ \ominus v$ and $\cS = \cS \ominus v$. Enqueue any new candidates.
	\end{compactenum}
	\item Smooth out all regular vertices in $\cA(\MM)$ to get $\cC(\MM)$.
\end{compactenum}
\end{minipage}}

\medskip
}

\InNotSoCGVer{\joinTreeSplitTreeMerge}

\begin{definition} 
\label{def:criticalJoin}
The \emph{critical join tree} $\jc(\MM)$ is built on the set $V'$ of all
critical points other than the non-dominant minima. 
The directed edge $(u,v)$ is present when $u$ is the smallest valued vertex in $V'$ in a connected component of $\MM^+_v$
and $v$ is adjacent (in $\MM$) to a vertex in this component. \InSoCGVer{The \emph{join tree}, $\cJ(\MM)$, is defined analogously, but instead on $\cV(\MM)$.}
%
%
\end{definition}

\InSoCGVer{Each non-dominant minimum, $m_i$, connects to the contour tree at a corresponding split $s_i$. We have the following (see \Sec{jstedm} for more details).}

\begin{theorem} \label{thm:contour-tree} Let $\MM$ have a dominant minimum. 
The contour tree $\cC(\MM)$ consists of all edges $\{(s_i, m_i)\}$ and $\jc(\MM)$.
\end{theorem}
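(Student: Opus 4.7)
The plan is to invoke the merging procedure $\merge(\cJ(\MM),\cS(\MM))$ of Carr, Snoeyink, and Axen (stated in \Sec{jstedm}), which is known to produce the augmented contour tree $\cA(\MM)$, and to then exploit the degenerate structure of $\cS(\MM)$ guaranteed by \Lem{split}, namely a single sorted main path $P$ containing every critical vertex other than the non-dominant minima, with each $m_i$ hanging off its corresponding split $s_i$ as a leaf.

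The first step is to enumerate the initial candidate set, i.e.\ vertices whose $\cJ$-indegree plus $\cS$-outdegree equals $1$. Tallying the possible types: maxima contribute $0+1$, minima (both dominant and non-dominant) contribute $1+0$, while pure joins, pure splits, and regular vertices all contribute at least $2$. So the initial candidates are exactly the maxima and the minima. Since $\merge$ is correct for any legal processing order within the current candidate set, I choose to process all non-dominant minima $m_i$ first. Each $m_i$ has $\cJ$-indegree $1$ and so is not a $\cJ$-leaf; the rule therefore selects its $\cS$-edge, which by \Lem{split} is precisely $(s_i,m_i)$. Erasing $m_i$ from $\cS$ only deletes a leaf (dropping $s_i$'s $\cS$-outdegree from $2$ to $1$), and erasing it from $\cJ$ smooths out a degree-$2$ vertex. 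After this first phase, $\cA(\MM)$ has acquired exactly the edges $\{(s_i,m_i)\}$, $\cS$ has collapsed to the sorted path $P$, and $\cJ$ equals $\cJ(\MM)$ with the $m_i$'s smoothed through.

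The main obstacle is to verify that the remaining execution of $\merge$ produces exactly the edges of $\jc(\MM)$. Here I would argue that since every non-endpoint vertex of $P$ has $\cS$-outdegree $1$, a vertex can become a candidate only once its $\cJ$-indegree drops to $0$, so the second phase is precisely iterative leaf-removal in the current $\cJ$, always selecting $\cJ$-edges. To match this with $\jc(\MM)$, I would show that smoothing out a non-dominant minimum $m_i$ from $\cJ(\MM)$ replaces its unique child-parent pair $(c,m_i),(m_i,v)$ by $(c,v)$, and that this coincides with the $\jc(\MM)$ edge into $v$ from the mound that originally contained $m_i$: when $m_i$ is removed from consideration, the minimum of that mound taken over $V' = \cV\setminus\{m_1,\ldots,m_k\}$ is exactly $c$, the minimum of the mound before $m_i$ entered it. Consequently the smoothed $\cJ$ is $\jc(\MM)$, and the output $\cA(\MM)$ is the augmented version of $\jc(\MM) \cup \{(s_i,m_i)\}$. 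Smoothing regular vertices as a final step yields $\cC(\MM) = \jc(\MM)\cup\{(s_i,m_i)\}$, completing the proof.
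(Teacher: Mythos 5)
Your proposal is correct and takes essentially the same route as the paper: both invoke the $\merge$ procedure of Carr, Snoeyink, and Axen, use the freedom in processing order to handle the non-dominant minima $m_i$ first (each contributing its $\cS$-edge $(s_i,m_i)$), observe that the remaining $\cS$ is the sorted path so the rest of the run just peels off $\cJ$-leaves, and finish by smoothing regular vertices. The main difference is that you spell out more of the bookkeeping that the paper leaves implicit — in particular, verifying the initial candidate set, checking that $m_i$ is not a $\cJ$-leaf so the $\cS$-rule fires, and arguing concretely that erasing each degree-$2$ vertex $m_i$ from $\cJ(\MM)$ yields exactly the $\jc(\MM)$ edges because the minimum over $V'$ of the relevant component of $\MM^+_v$ is $c$.
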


\newcommand{\proofofContourJoinEquiv}{
\begin{proof} We first show that $\cA(\MM)$ is $\cJ(\MM) \ominus \{m_i\}$ with edges $\{(s_i,m_i)\}$.
We have flexibility in choosing the order of processing in $\merge$. We first put the non-dominant
maxima $m_1, \ldots, m_k$ into the queue. As these are processed, the edges $\{(s_i,m_i)\}$ are inserted 
into $\cA(\MM)$. Once all the $m_i$'s are erased, $\cS$ becomes a path, so all outdegrees are at most $1$.
The join tree is now $\cJ(\MM) \ominus \{m_i\}$. We can now process $\cJ$ leaf by leaf, and all
edges of $\cJ$ are inserted into $\cA(\MM)$.

Note that $\cC(\MM)$ is obtained by smoothing out all regular points from $\cA(\MM)$. Similarly, smoothing out regular
points from $\cJ(\MM) \ominus \{m_i\}$ yields the edges of $\jc(\MM)$.
\end{proof}
}

\InNotSoCGVer{\proofofContourJoinEquiv}

\begin{Remark}\label{rem:joinAndCriticalJoin}
The above theorem, combined with the previous sections, implies that in order to get an efficient contour tree algorithm, 
it suffices to have an efficient algorithm for computing $\jc(\MM)$.
Due to minor technicalities, it is easier to phrase the following section instead in terms of computing $\cJ(\MM)$ efficiently.  
Note however that for minimum dominant complexes output by $\rain$, converting between $\jc$ and $\cJ$ is trivial, as $\cJ$ is just $\jc$ 
with each non-dominant minimum $m_i$ augmented along the edge leaving $s_i$.
\end{Remark}

\section{Painting to compute contour trees}
\label{sec:paint}

The main algorithmic contribution is a new algorithm for computing join trees of any triangulated simplicial
complex $\MM$.

{\bf Painting:} The central tool is a notion 
of \emph{painting} $\MM$. Initially associate a color with each maximum. Imagine there being a large
can of paint of a distinct color at each maximum $x$. We will spill different paint from each maximum and watch it flow down.
This is analogous to the raining of \Sec{rain}, but paint is a much more viscous liquid.
\emph{So paint only flows down edges, and it does not color the interior of higher dimensional faces.} Furthermore, paints
do not mix, so every edge of $\MM$ gets a unique color. This process (and indeed the entire algorithm)
works purely on the 1-skeleton of $\MM$, which is just a graph.

\InNotSoCGVer{We now restate \Def{paint1}.}

\begin{definition} \label{def:paint2} Let the 1-skeleton of $\MM$ have edge set $E$ and maxima $X$.
A  \emph{painting} of $\MM$ is a map $\chi:X \cup E \to [|X|]$ with the following property. 
 Consider an edge $e$. There exists a descending path from some maximum $x$ to $e$
	consisting of edges in $E$, such that all edges along this path have the same color as $x$. 

An \emph{initial} painting has the additional property that the restriction $\chi:X \to [|X|]$ is a bijection.
\end{definition}


\begin{definition} \label{def:color-set} Fix a painting $\chi$ and vertex $v$.
\begin{asparaitem}
	 \item An \emph{up-star} of $v$ is the set of edges that all connected to a fixed component of $\MM^+_v$.
	 \item A vertex $v$ is \emph{touched by color $c$} if $v$ is incident to a $c$-colored
	 edge with $v$ at the lower endpoint. For $v$, $\col(v)$ is the set of colors that touch $v$.
	 \item A color $c \in \col(v)$ \emph{fully touches} $v$ if all edges in an up-star are colored $c$.
	 \item For any maximum $x\in X$, we say that $x$ is both touched and fully touched by $\chi(x)$.
\end{asparaitem}
\end{definition}


\subsection{The data structures} \label{sec:struct}

\noindent
{\bf The binomial heaps $\touch(c)$:} For each color $c$, $\touch(c)$ is a subset of vertices touched by $c$,
This is stored as a \emph{binomial max-heap} keyed by vertex heights. Abusing notation, $\touch(c)$ refers
both to the set and the data structure used to store it.

\medskip
\noindent
{\bf The union-find data structure on colors:} We will repeatedly perform unions
of classes of colors, and this will be maintained as a standard union-find data structure.
For any color $c$, $\rep(c)$ denotes the representative of its class. 

\medskip
\noindent
{\bf The stack $\stack$:} This consists of non-extremal critical points, with monotonically increasing
heights as we go from the base to the head.
\ignore{
Each point $x \in \stack$ has an associated subset of $\col(x)$, denoted $\mcol(x)$.
Both $\mcol(x)$ and its complement are stored as hash table. So lookups, inserts, and deletes
are in these sets are all constant time operations. The stack is guaranteed to satisfy 
the following invariants.
\begin{asparaitem}
	\item For every $x \in \stack$: For every $c \in \mcol(x)$, $x$ is the highest element
	in $T(c)$. Furthermore, $c = \rep(c)$.
	\item Consider $x, y \in \stack$ such that $y$ was pushed on $x$. There exists $c \in \col(x) \setminus
	\mcol(x)$ such that $x$ is not highest in $T(c)$ but $y$ is highest in $T(c)$.
\end{asparaitem}
}

\medskip
\noindent
{\bf Attachment vertex $\h(c)$:} For each color $c$, we maintain a critical point $\h(c)$ of this color.
We will maintain the guarantee that the portion of the contour tree above (and including) $\h(c)$ has already been constructed.

\subsection{The algorithm} \label{sec:algo}

We formally describe the algorithm below. 
We require a technical definition of \emph{ripe} vertices.

\begin{definition} \label{def:ripe} A vertex $v$ is \emph{ripe} if: for all $c \in \col(v)$, $v$
is present in $T(\rep(c))$ and is also the highest vertex in this heap. 
\end{definition} 

\medskip
\fbox{
\begin{minipage}{0.9\textwidth}
{\bf $\init(\MM)$}

\smallskip
\begin{compactenum}
	\item Construct an initial painting of $\MM$ using a descending BFS from maxima that does not explore previously colored edges.
	\item Determine all critical points in $\MM$. For each $v$, look at $(f(v) \pm \delta)$-contours in $f|_{B_\eps(v)}$ 
	to determine the up and down degrees. 
	\item Mark each critical $v$ as unprocessed.
	\item For each critical $v$ and each up-star, pick an arbitrary color $c$ touching $v$. Insert $v$ into $T(c)$.
	\item Initialize $\rep(c) = c$ and set $\h(c)$ to be the unique maximum colored $c$.
	\item Initialize $K$ to be an empty stack.
\end{compactenum}
\end{minipage}}

\medskip
\fbox{
\begin{minipage}{0.9\textwidth}
{\bf $\build(\MM)$}

\smallskip
\begin{compactenum}
	\item Run $\init(\MM)$.
	\item While there are unprocessed critical points:
	\begin{compactenum}
		\item Run $\update(\stack)$. Pop $\stack$ to get $h$.
		\item Let $\cur(h) = \{\rep(c) | c \in \col(h)\}$.
		\item For all $c' \in \cur(h)$:
		\begin{compactenum}
			\item Add edge $(\h(c'),h)$ to $\cJ(\MM)$.
			\item Delete $h$ from $T(c')$.
		\end{compactenum}
	\item Merge heaps $\{T(c') | c' \in \cur(h)\}$.
	\item Take union of $\cur(h)$ and denote resulting color by $\widehat{c}$.
	\item Set $\h(\widehat{c}) = h$ and mark $h$ as processed.
	\end{compactenum}
\end{compactenum}
\end{minipage}}

\medskip
\fbox{
\begin{minipage}{0.9\textwidth}
{\bf $\update(\stack)$}

\smallskip
\begin{compactenum}
	\item If $K$ is empty, push arbitrary unprocessed critical point $v$.
	\item Let $h$ be the head of $\stack$.
	\item While $h$ is not ripe:
	\begin{compactenum}
		\item Find $c \in \col(h)$ such that $h$ is not the highest in $T(\rep(c))$.
		\item Push the highest of $T(\rep(c))$ onto $\stack$, and update head $h$.
	\end{compactenum}
\end{compactenum}
\end{minipage}}

\bigskip

A few simple facts:
\begin{compactitem}
	\item At all times, the colors form a valid painting.
	\item Each vertex is present in at most $2$ heaps. After processing, it is removed from all heaps.
	\item After $v$ is processed, all edges incident to $v$ have the same (representative) color. 
	\item Vertices on the stack are in increasing height order. 
\end{compactitem}

\begin{observation}
\label{obs:twoQueues}
 Each unprocessed vertex is always in exactly one queue of the colors in each of its up-stars.  Specifically, 
 for a given up-star of a vertex $v$, $\init(\MM)$ puts $v$ into the queue of exactly one of the colors of the up-star, say $c$.  As time goes on this queue 
 may merge with other queues, but while $v$ remains unprocessed, it is only ever (and always) in the queue of $\rep(c)$,  
 since $v$ is never added to a new queue and is not removed until it is processed.  
 In particular, finding the queues of a vertex in $\update(K)$ requires at most two union find operations (assuming each vertex records its two colors from $\init(\MM)$).
\end{observation}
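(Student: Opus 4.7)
The plan is to verify the observation by a direct induction on the execution of $\build(\MM)$, tracking the heap membership of each unprocessed critical vertex $v$. First I would examine $\init(\MM)$ step by step. In step 4, for each up-star of $v$ a single color $c$ touching that up-star is chosen and $v$ is inserted into $T(c)$. Since total degree is at most $3$, every non-extremal critical vertex has at most two up-stars, so immediately after $\init$, $v$ is in at most two heaps, exactly one per up-star, and we can record the two chosen colors $c_1, c_2$ as ``birth colors'' attached to $v$.

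Next I would establish the invariant that while $v$ is unprocessed, no step of $\build$ or $\update$ inserts $v$ into any new heap. Inspecting the pseudocode, the only places where heaps are touched are: (i) step 2(c)(iii)(B) of $\build$, which deletes the currently processed vertex $h$ from each $T(c')$ and marks $h$ processed; (ii) step 2(c)(iv), which merges heaps; (iii) the pushes in $\update$, which do not modify any heap. Thus new insertions never occur after $\init$, and deletions only remove the vertex being processed. So the multiset of heap-memberships of an unprocessed $v$ can change only by the underlying heaps being merged with one another.

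The main (minor) point is to identify which heap currently contains $v$'s entry from a given up-star. Each merge of heaps in step 2(c)(iv) is performed in lockstep with the union in step 2(c)(v), so the heap labels and the union-find classes on colors evolve identically. Consequently, if $v$ was inserted by $\init$ into $T(c_i)$ for up-star $i$, then at every subsequent time the heap containing that copy of $v$ is exactly $T(\rep(c_i))$. Since $i \in \{1,2\}$, two $\rep(\cdot)$ queries on the union-find structure locate all heaps containing $v$, which is what is needed inside $\update(K)$.

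I do not anticipate a real obstacle; the only subtlety is being careful about the case analysis for vertices whose up-degree is $1$ versus $2$, and confirming that merging heaps in step 2(c)(iv) together with the union in step 2(c)(v) really does preserve the identification ``the heap holding $v$ from up-star $i$ is $T(\rep(c_i))$,'' which is immediate from the fact that $\build$ never names a heap by anything other than a representative of a color class.
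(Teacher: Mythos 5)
Your argument is correct and follows essentially the same reasoning the paper gives (the paper embeds its justification directly in the statement of the observation rather than in a separate proof). You simply unpack it more explicitly: one insertion per up-star in step 4 of $\init$, no later insertions, heap merges kept in lockstep with color unions, hence membership is always in $T(\rep(c_i))$ and two union-find queries suffice.
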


\medskip
\ignore{	
	Suppose $\mcol(h) = \{c\}$ (so $h$ is split). 
	\begin{compactenum}
		\item Connect $h$ (in $\reeb(\MM)$) to $\h(c)$ and the unique minimum corresponding to split $h$. 
		\item Delete $h$ from $T(c)$, set $\h(c) = h$.
		\item End procedure.
	\end{compactenum}
	\item Let $\mcol(h) = \{c_1, c_2\}$ (so $h$ is merge).
	\item Connect $h$ in $\reeb(\MM)$ to $\h(c_1)$ and $\h(c_2)$.
	\item Delete all copies of $h$ from $T(c_1)$ and $T(c_2)$.
	\item Perform union of colors $c_1$ and $c_2$ (denote merged color as $c$). Merge heaps
	to get $T(c) = T(c_1) \cup T(c_2)$.
	\item Set $\h(c) = h$.	

We state the primary invariant below.
First, some notation. For any $v$, let $\psi^-_v$ denote the $(f(v)-\delta)$-contour intersecting
$B_\eps(v)$. If there are two such contours (so $v$ is a split), choose the one that contains
the dominant minimum. The \emph{palette} $\pal$ is the set of colors currently used,
which is $\{T(\rep(c)) | c \in |X|\}$.

Observe that $\build(\MM)$ loops over all unprocessed critical point.
The invariant is true at the starting point of each such iteration.

\medskip
\textbf{Invariant:} 
\begin{compactenum}
	\item For every $c \in \pal$, the subtree of $\cJ(\MM)$ rooted at $\h(c)$ has been constructed.
	\item Fix $c \in \pal$. Consider the set $S$ of maxima of $\MM$ in the subtree of $\cJ(\MM)$ rooted at $\h(c)$.
	Then $\rep(c) = \bigcup_{s \in S} \chi(s)$.
	\item Let $\Psi = \{\psi^-_{\h(c)} | c \in \pal\}$. The coloring given by $\rep(\cdot)$
	is a valid painting of $\cut(\MM,\Psi)$, where $\psi^-_{\h(c)}$ has color $c$.
\end{compactenum}

\medskip
It is easy to see that the invariant is true at the very beginning of the algorithm.
Each $\h(c)$ is simply the maximum colored with $c$, and we have a valid painting of $\MM$.
}

\subsection{Proving correctness} \label{sec:correct}

\ignore{
\begin{claim} \label{clm:process} Assume the invariant. Any vertex with a non-increasing path to some $\h(c)$
for $c \in \pal$ has been processed.
\end{claim}

\begin{proof} This is a direct consequence of \Thm{carr-mono}, which relates monotone paths in $\MM$ to $\cC(\MM)$.
Since the subtree of $\cJ(\MM)$ (which is basically the subtree of $\cC(\MM)$) rooted at $\h(c)$ 
has been found, all vertices in this subtree must be processed. These are all the vertices with non-increasing paths
to $\h(c)$ in $\cC(\MM)$, which by \Thm{carr-mono} is the same as those in $\cC(\MM)$.
\end{proof}
}

Our main workhorse is the following technical lemma. In the following, the current color of an edge, $e$, is the value of $\rep(\chi(e))$, 
where $\chi(e)$ is the color of $e$ from the initial painting.


\begin{lemma} \label{lem:full} Suppose vertex $v$ is connected to a component $\PP$ of $\MM^+_v$
by an edge $e$ which is currently colored $c$. Either all edges in $\PP$ are currently colored $c$, or there
exists a critical vertex $w \in \PP$ fully touched by $c$ and touched by another color.
\end{lemma}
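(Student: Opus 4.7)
My plan is to prove the contrapositive: assuming some edge in $\PP$ is not currently colored $c$, I will exhibit a critical $w \in \PP$ that is fully touched by $c$ in one up-star and touched by another color in a different up-star. The candidate $w$ is chosen as the \emph{highest} vertex of $\PP$ that is incident (within $\PP$) to both a $c$-colored edge and a non-$c$-colored edge. Such a $w$ exists by connectivity of $\PP$: otherwise every vertex of $\PP$ would have monochromatic incident $\PP$-edges, and propagating across adjacent edges would force all of $\PP$'s edges to share a single color, contradicting the assumption.

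Next I would verify that $w$ is not a maximum of $\MM$. The initial BFS colors every edge incident to a maximum $x$ with $\chi(x)$, because for any neighbor $y$ (necessarily below $x$) the edge $(x,y)$ can only be discovered by the descending BFS rooted at $x$; hence a maximum has monochromatic incident edges and cannot belong to our candidate set. I would then show $w$ has both a $c$-colored and a non-$c$-colored \emph{up}-edge, not merely incident edges of both colors. If a non-$c$-edge at $w$ were a down-edge $(w, w_d)$, the defining property of a painting supplies a descending non-$c$-colored path from some maximum ending at this edge; since $w$ is not a maximum, this path enters $w$ from above via a non-$c$-colored up-edge, which lies in $\PP$ because both its endpoints are above $v$. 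A symmetric argument yields a $c$-colored up-edge at $w$.

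The main obstacle is the last step: showing these two up-edges lie in distinct up-stars and that the up-star containing the $c$-up-edge is \emph{entirely} $c$-colored. Both facts follow from a single transition argument. If some up-star of $w$ contained edges of both colors, the corresponding component $Q$ of $\MM^+_w$ would lie inside $\PP$ (it is reached from $w$ through that up-star), and tracing a path in $Q$ between the upper endpoints of two differently-colored up-star edges yields a vertex in $Q$---strictly above $w$---incident in $\PP$ to both colors, contradicting maximality of $w$. This simultaneously forces the $c$- and non-$c$-up-edges into distinct up-stars and forces the $c$-up-star to be purely $c$-colored. Consequently $w$ has at least two up-stars, making it a join (in particular critical), fully touched by $c$ in the pure $c$-up-star, and touched by another color in the other.
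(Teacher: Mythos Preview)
Your overall strategy matches the paper's: pick the \emph{highest} bicolored vertex $w$ in $\PP$ and show it is a join with a monochromatic $c$ up-star. There is one small but real gap in your existence argument. You conclude that if no such $w$ exists then all $\PP$-edges share a single color, ``contradicting the assumption''; but the contrapositive assumption is only that some $\PP$-edge is not $c$, so a single shared color $c' \neq c$ would give no contradiction. You must first observe that $\PP$ contains a $c$-colored edge: apply the painting property to $e$ itself to obtain a $c$-monochrome descending path through the upper endpoint $u \in \PP$ of $e$, yielding either that $u$ is a $c$-colored maximum (so all of its incident edges, in particular any $\PP$-edge, are $c$) or a $c$-colored up-edge of $u$ lying in $\PP$. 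With this in hand your connectivity argument goes through.

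For the main step---showing $c$ fully touches $w$---your argument genuinely differs from the paper's. The paper takes the local $(f(w)+\delta)$-contour $\phi$ through the $c$-up-edge and argues every edge crossing $\phi$ is $c$-colored, using that no vertex of $\PP$ above $w$ is touched by both $c$ and another color. You instead walk inside the component $Q \subseteq \MM^+_w \subset \PP$ between the upper endpoints of a $c$- and a non-$c$-up-star edge (augmented with those two edges to close the walk at $w$) and locate a color transition at some vertex of $Q$, strictly above $w$, contradicting maximality. Both routes are valid. Yours is a clean $1$-skeleton argument that avoids contours entirely; the paper's, by choosing $w$ directly via up-edges (``touched by $c$ and another color''), sidesteps your intermediate work of ruling out that $w$ is a maximum and of lifting incident down-edge colors to up-edge colors via the painting property. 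In fact the two choices of $w$ coincide, since each lies in the other's candidate set.
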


\begin{proof} Since $e$ has color $c$,
there must exist vertices in $\PP$ touched by $c$. Consider the highest
vertex $w$ in $\PP$ that is touched by $c$ and some other color. If no such vertex exists,
this means all edges incident to a vertex touched by $c$ are colored $c$. By walking through
$\PP$, we deduce that all edges are colored $c$. 

So assume $w$ exists. Take the $(f(w)+\delta)$-contour $\phi$ that intersects $B_\eps(w)$
and intersects some $c$-colored edge incident to $w$. Note that all edges intersecting $\phi$ are also colored $c$,
since $w$ is the highest vertex to be touched by $c$ and some other color. (Take the path of $c$-colored
edges from the maximum to $w$. For any point on this path, the contour passing through this point must
be colored $c$.) Hence, $c$ fully touches $w$. 
But $w$ is touched by another color, and the corresponding edge cannot intersect $\phi$. So $w$
must have up-degree $2$ and is critical.
\end{proof}

\begin{corollary}
\label{cor:terminate}
 Each time $\update(K)$ is called, it terminates with a ripe vertex on top of the stack.
\end{corollary}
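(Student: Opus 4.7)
The proof plan splits into two parts. Ripeness at termination is immediate: the body of the \texttt{while} loop in $\update(K)$ executes exactly when $h$ is not ripe, so once the loop exits, the head $h$ satisfies the ripeness condition by definition. All the work is in showing that the loop actually terminates.

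The plan for termination is to show the loop invariant that each pushed vertex is strictly higher than the previous head, so that the stack heights form a strictly increasing sequence of distinct unprocessed critical vertices. Since each vertex sits in at most two heaps and the heaps contain only unprocessed critical vertices, this sequence has length at most $O(t)$, forcing termination.

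To prove the strict-increase claim, suppose the body executes, so $h$ is not ripe. Then some $c \in \col(h)$ witnesses unripeness: either $h \notin T(\rep(c))$, or $h \in T(\rep(c))$ but $h$ is not its maximum. In the latter case the claim is immediate, since the max of $T(\rep(c))$ is strictly higher than $h$. The harder case is the former, and this is where \Lem{full} is the main obstacle. Since $c \in \col(h)$, there is an edge $e$ incident to $h$ from above whose current color is $c$; let $\PP$ be the component of $\MM^+_h$ reached by $e$. Applying \Lem{full} to $h$, $\PP$, and $c$, there is a critical vertex $w \in \PP$ that is either fully touched by $c$ and also touched by another color (case~(b) of the lemma), or a top maximum in a $\PP$ all of whose edges are currently colored $c$ (case~(a)). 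In case~(b), $w$ must be unprocessed, because after processing all incident edges of a vertex share a single representative color, contradicting $w$ being touched by two current colors; in case~(a), an unprocessed maximum in $\PP$ must remain, since otherwise the subtree above $h$ on this side would already be completed and $h$ would not witness $c \in \col(h)$ via an unprocessed up-star. By \Obs{twoQueues}, the unprocessed $w$ lies in the heap keyed by $\rep$ of an initial color of the up-star fully touched by $c$, and that initial color lies in the union-find class of $c$; hence $w \in T(\rep(c))$. Since $w \in \PP \subseteq \MM^+_h$, we have $f(w) > f(h)$, so the max of $T(\rep(c))$ is strictly higher than $h$ as required.

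Combining these observations, each iteration pushes a strictly higher unprocessed critical vertex onto $\stack$; the number of such vertices is bounded by $2t$, so the loop runs for at most $2t$ iterations and terminates with $h$ ripe on top of the stack.
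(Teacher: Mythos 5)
Your overall strategy is the same as the paper's (Lemma~\ref{lem:full} as the workhorse, Observation~\ref{obs:twoQueues} to locate the higher vertex inside the right heap), but you begin from a different case split: you take an arbitrary $c \in \col(h)$ that witnesses unripeness and then branch on whether $h \in T(\rep(c))$, whereas the paper fixes a component $\PP$ of $\MM^+_h$ and works with the \emph{initial} color $c_0$ that $\init$ assigned $h$ for that up-star, so that $h \in T(\rep(c_0))$ is guaranteed. Your pivot is arguably tidier in that it avoids the paper's ``if this component is entirely processed, switch to the other component'' step --- choosing a violating $c$ already localizes the problem --- but the branch where $h \notin T(\rep(c))$ is not handled correctly.

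Concretely, in case~(a) of your sub-case $h \notin T(\rep(c))$ (all of $\PP$ monochromatic with current color $c$) you assert that ``an unprocessed maximum in $\PP$ must remain'' and then feed it through Observation~\ref{obs:twoQueues}. This does not work: maxima have no up-star, so $\init$ never inserts a maximum into any heap, and Observation~\ref{obs:twoQueues} simply does not apply to them --- even an ``unprocessed'' maximum would produce no element of $T(\rep(c))$. The repair is that case~(a) is \emph{vacuous} under your hypothesis: if every edge of $\PP$ currently has color $c$, then by the painting property every edge of $h$'s up-star into $\PP$ also has color $c$ (its witnessing descending path passes through $\PP$), so the initial color that $\init$ chose for $h$ in that up-star lies in the union-find class of $c$, giving $h \in T(\rep(c))$ and contradicting the standing assumption. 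Once you note that case~(a) cannot arise, only case~(b) remains, and your argument for case~(b) (that the non-monochromatic $w$ is unprocessed and, via Observation~\ref{obs:twoQueues} applied to the up-star fully touched by $c$, lies in $T(\rep(c))$ at height above $h$) is correct. The rest --- strict increase of stack heights, distinctness of $f$-values, the $O(t)$ length bound, and the triviality of ripeness at loop exit --- is fine.
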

\begin{proof}
 $\update(K)$ is only called if there are unprocessed vertices remaining, and so by the time we reach step 3 in 
 $\update(K)$, the stack has some unprocessed vertex $h$ on it.  
 If $h$ is ripe, then we are done, so suppose otherwise.
 
 Let $\PP$ be one of the components of $\MM^+_h$.  By construction, $h$ was put in the heap of some initial 
 adjacent color $c$.  Therefore, $h$ must be in the current heap of $\rep(c)$ (see \Obs{twoQueues}).  
 Now by \Lem{full}, either all edges in $\PP$ are colored $\rep(c)$ or  
 there is some vertex $w$ fully touched by $\rep(c)$ and some other color.  
 The former case implies that if there are any unprocessed vertices in $\PP$ then they are all in $T(\rep(c))$, 
 implying that $h$ is not the highest vertex and a new higher up unprocessed vertex will be put on the stack for the next iteration of the while loop.  
 Otherwise, all the vertices in $\PP$ have been processed.  
 However, it cannot be the case that all vertices in all components of $\MM^+_h$ have already been processed, 
 since this would imply that $h$ was ripe, and so one can apply the same argument to the other non-fully processed component. 
 
 Now consider the latter case, where we have a non-monochromatic vertex $w$.
 In this case $w$ cannot have been processed (since after being processed it is touched only by one color), 
 and so it must be in $T(\rep(c))$ since it must be in some heap of a color in each up-star (and one up-star is entirely colored $\rep(c)$).
 As $w$ lies above $h$ in $\MM$, this implies $h$ is not on the top of this heap.
\end{proof}


\begin{claim} \label{clm:upstar} Consider a ripe vertex $v$ and take the up-star connecting
to some component of $\MM^+_v$. All edges in this component and the up-star have the same color.
\end{claim}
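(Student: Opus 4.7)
The plan is to apply \Lem{full} and use the ripeness hypothesis to rule out the non-monochromatic alternative. Fix an arbitrary edge $e$ in the chosen up-star and let $c$ denote its current color (i.e.\ $c = \rep(\chi(e))$); since $v$ is the lower endpoint of $e$, we have $c \in \col(v)$, and $c = \rep(c)$. Invoking \Lem{full} with $v$, the component $\PP$, and the edge $e$ yields two cases: either every edge of $\PP$ is currently colored $c$, or there exists a critical $w \in \PP$ that is fully touched by $c$ and also touched by some other color $c'' \neq c$.

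To rule out the second case, I appeal to the ripeness of $v$. Because $w$ is touched by two distinct colors, $w$ cannot yet have been processed (after processing all edges incident to a vertex share one representative color). Let $S$ be the up-star of $w$ that is entirely $c$-colored. By \Obs{twoQueues}, $w$ sits in exactly one heap associated with $S$; the initial color under which $w$ was inserted in $\init(\MM)$ has, via a sequence of union-find merges, current representative equal to $c$, hence $w \in T(c)$. Since $w \in \PP \subseteq \MM^+_v$, we have $f(w) > f(v)$, so $v$ is not the maximum element of $T(c)$, contradicting the fact that $v$ is ripe and therefore highest in $T(\rep(c)) = T(c)$.

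With $\PP$ now known to be monochromatic in $c$, I close out the statement by applying the very same argument to every other edge $e'$ of the up-star: its current color $c_{e'}$ must also render $\PP$ monochromatic, and since $\PP$ already carries $c$-colored edges, $c_{e'} = c$. The degenerate case where $\PP$ is a single maximum is immediate, as the up-star then consists of a single edge incident to that maximum. The one subtle step — and the place I would be most careful in the write-up — is the bookkeeping that places $w$ inside $T(c)$: it depends on the invariant that colors only merge (never split) together with \Obs{twoQueues}, which identifies precisely which (unique) current heap corresponds to each up-star of an unprocessed vertex.
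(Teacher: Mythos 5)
Your proof is correct and follows essentially the same route as the paper: fix an edge of the up-star, invoke \Lem{full}, rule out the non-monochromatic alternative by observing that the witness $w$ would be an unprocessed vertex above $v$ sitting in $T(\rep(c))$ and therefore contradicting ripeness, and conclude $\PP$ is monochromatic. Your write-up is in fact slightly more thorough than the paper's: you spell out (via \Obs{twoQueues} and the fully-touched up-star of $w$) exactly why $w \in T(\rep(c))$, a step the paper leaves implicit, and you explicitly close the loop for the up-star edges themselves, which the paper silently folds into the conclusion.
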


\begin{proof} Let $c$ be the color of some edge in this up-star.
By ripeness, $v$ is the highest in $T(\rep(c))$.
Denote the component of $\MM^+_v$ by $\PP$.
By \Lem{full}, either all edges in $\PP$ are colored $\rep(c)$ or there exists critical vertex $w \in \PP$
fully touched by $\rep(c)$ and another color. In the latter case, $w$ has not been processed,
so $w \in T(\rep(c))$ (contradiction to ripeness). Therefore, all edges in $\PP$ are colored $\rep(c)$.
\end{proof}

\begin{claim} \label{clm:process} The partial output on the processed vertices is exactly
the restriction of $\cJ(\MM)$ to these vertices.
\end{claim}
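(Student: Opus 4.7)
The plan is to prove this by induction on $|S|$, where $S$ is the set of processed vertices at the start of an iteration of $\build(\MM)$'s main loop. The bare statement of the claim is not strong enough to carry through the induction, so I would strengthen it with the following invariants:
\begin{compactenum}
\item[(I1)] The partial output equals the restriction of $\cJ(\MM)$ to $S$.
\item[(I2)] The current active representative colors are in bijection with the connected components of the subforest of $\cJ(\MM)$ induced by $S$; for each active color $c$, the corresponding subtree of $\cJ(\MM)$ has $\h(c)$ as its lowest (root) vertex.
\item[(I3)] A processed vertex $v \in S$ belongs to color class $c$ (i.e.\ $\rep(\chi(v)) = c$) if and only if $v$ lies in the subtree indexed by $c$ in (I2).
\end{compactenum}
The base case after $\init(\MM)$ is immediate: $S = \emptyset$, each maximum $x$ constitutes a singleton subtree indexed by its distinct initial color, and $\h(c)$ equals the unique maximum of color $c$.

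For the inductive step, suppose the invariants hold for $S$, and we process $h$. By \Cor{terminate}, $h$ is ripe. By \Clm{upstar}, the components of $\MM^+_h$ adjacent to $h$ biject with $\cur(h)$; call them $\{\PP_{c'}\}_{c' \in \cur(h)}$, with every edge of $\PP_{c'}$ currently colored $c'$. In $\cJ(\MM)$, each edge incident to $h$ from above is of the form $(u_{c'}, h)$, where $u_{c'}$ is the smallest-valued critical vertex in $\PP_{c'}$. Passing to the restriction to $S \cup \{h\}$, the corresponding edge becomes $(u'_{c'}, h)$, where $u'_{c'}$ is the lowest vertex of $\PP_{c'} \cap S$ on the descending $\cJ(\MM)$-path from $u_{c'}$ to $h$. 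By (I2) and (I3), the processed vertices in $\PP_{c'}$ form exactly the subtree indexed by $c'$, whose root is $\h(c')$; hence $u'_{c'} = \h(c')$, and the edges $\{(\h(c'), h) : c' \in \cur(h)\}$ inserted by the algorithm are exactly the new edges appearing in the restriction when $h$ joins $S$. This gives (I1) at the next iteration.

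For invariant maintenance, observe that $\h$ sets $\h(\widehat{c}) = h$ where $\widehat{c}$ is the union-find merge of $\cur(h)$. The new subtree associated with $\widehat{c}$ is $\{h\}$ glued to the previously processed subtrees $\{\tau_{c'}\}_{c'\in\cur(h)}$ via the edges $(\h(c'), h)$ just added, which is precisely the subtree of $\cJ(\MM)|_{S\cup\{h\}}$ containing $h$. Since $h$ is a $\cJ(\MM)$-ancestor of every $\h(c')$, we have $f(h) < f(\h(c'))$, so $h$ is indeed the lowest vertex, giving (I2). Finally, a vertex $v \in S \cup \{h\}$ belongs to $\widehat{c}$ iff $v = h$ or $v$ belonged to some $c' \in \cur(h)$; by the inductive (I3) and the merge rule this matches membership in the new subtree, giving (I3).

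The main obstacle is (I3): to identify the union-find color classes with $\cJ(\MM)$-subtrees, one must argue that the initial (arbitrary) painting, after the sequence of merges triggered by the algorithm, tracks $\cJ(\MM)$ exactly. The only leverage for this is \Clm{upstar} together with the monochromaticity of $\PP_{c'}$: it ensures that whenever a vertex is processed, its adjacent up-stars precisely expose the equivalence classes of ancestral subtrees meeting at $h$, so the union performed by the algorithm is the structural merge dictated by $\cJ(\MM)$ at $h$. All other pieces of the induction are bookkeeping once (I3) is in hand.
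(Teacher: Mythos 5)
Your high-level structure (induction on processing order, strengthening the invariant to track $\h(c)$ as the lowest-processed vertex of each color class) is the same as the paper's, and invariants (I2)/(I3) are reasonable formalizations of what the paper's proof implicitly tracks. But there is a real gap in the inductive step, at precisely the place where the paper does the key work.

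When $h$ is processed, the algorithm inserts the literal edge $(\h(c'), h)$ and never removes or subdivides it. So (I1) — the partial output equals the \emph{induced} subgraph $\cJ(\MM)|_S$ — can only be maintained if $\h(c')$ is the actual $\cJ(\MM)$-child of $h$, i.e.\ the lowest critical vertex in the full component $\PP_{c'}$ of $\MM^+_h$, not merely the lowest \emph{processed} vertex. Your argument sidesteps this by passing to a ``restriction'' in which you smooth out unprocessed vertices ($u'_{c'}$ = lowest vertex of $\PP_{c'}\cap S$ on the descending path). That smoothing is inconsistent with the induced-subgraph reading of (I1) and, more importantly, does not prove that the edge the algorithm actually writes is an edge of $\cJ(\MM)$. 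You never show that every critical vertex of $\PP_{c'}$ has already been processed when $h$ comes off the stack. The paper closes this gap explicitly: by \Clm{upstar} all of $\PP_{c'}$ is monochromatic with color $c'$, so any unprocessed critical vertex in $\PP_{c'}$ would sit in $T(c')$ above $h$, contradicting that $h$ is ripe. Only with this fact do you get ``$\h(c')$ is the lowest vertex of $\PP_{c'}$'' rather than ``lowest \emph{processed} vertex,'' and only then is the inserted edge a genuine edge of $\cJ(\MM)$.

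You also state the base case inconsistently — $S=\emptyset$ yet simultaneously maxima forming singleton subtrees of $\cJ(\MM)|_S$. The invariant should be anchored at $S = X$ (the maxima, treated as ``pre-processed''), matching the paper's phrasing that initially the processed set and the attachment points are the maxima. Once you repair both issues — take $S = X$ as the base, and add the ripeness-plus-heap argument that $\PP_{c'}$ is fully processed — the rest of your bookkeeping for (I2) and (I3) goes through and recovers the paper's proof.
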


\begin{proof} More generally, we prove the following: all outputs on processed vertices
are edges of $\cJ(\MM)$ and for any current color $c$, $\h(c)$ is the lowest processed vertex
of that color. We prove this by induction on the processing order. The base case is trivially
true, as initially the processed vertices and attachments of the color classes are the set of maxima.
For the induction step, consider the situation
when $v$ is being processed.

Since $v$ is being processed, we know by \Cor{terminate} that it is ripe. Take any up-star of $v$, and the corresponding component $\PP$
of $\MM^+_v$ that it connects to. By \Clm{upstar}, all edges in $\PP$ and the up-star
have the same color (say $c$). If some critical vertex in $\PP$ is not processed,
it must be in $T(c)$, which violates the ripeness of $v$.
Thus, all critical vertices in $\PP$ have been processed, and so by the induction hypothesis, the restriction of $\cJ(\MM)$ to $\PP$ has been correctly computed.
Additionally, since all critical vertices in $\PP$ have processed, they all have the same color $c$ of the lowest critical vertex in $\PP$.
Thus by the strengthened induction hypothesis, this lowest critical vertex is $\h(c)$.

If there is another component of $\MM^+_v$, the same argument implies 
the lowest critical vertex in this component is $\h(c')$ (where $c'$ is the color of edges in the respective component).
Now by the definition of $\cJ(\MM)$, the critical vertex $v$ connects to the lowest critical vertex in each component of $\MM^+_v$, 
and so by the above $v$ should connect to $\h(c)$ and $\h(c')$, which is precisely what $v$ is connected to by $\build(\MM)$.
Moreover, $\build$ merges the colors $c$ and $c'$ and correctly sets $v$ to be the attachment, 
as $v$ is the lowest processed vertex of this merged color (as by induction $\h(c)$ and $\h(c')$ were the lowest vertices before merging colors).

%
\end{proof}

\begin{theorem}
\label{thm:correct}
 Given an input complex $\MM$, $\build(\MM)$ terminates and outputs $\cJ(\MM)$.
\end{theorem}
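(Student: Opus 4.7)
The plan is that both termination and correctness follow almost immediately by chaining the two preceding results, $\Cor{terminate}$ and $\Clm{process}$, together with a simple counting argument on processed vertices.

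For termination, I would argue as follows. By $\Cor{terminate}$, each invocation of $\update(\stack)$ returns in finite time with a ripe unprocessed critical vertex on top of the stack. The body of the main while loop in $\build(\MM)$ then pops this vertex $h$, performs a constant number of union-find operations, heap deletes, and a heap merge, and marks $h$ as processed. Thus each full iteration of the outer loop converts exactly one previously unprocessed critical vertex into a processed one. Since $\cV(\MM)$ is finite and the loop guard ``there are unprocessed critical points'' fails once every critical vertex is processed, the algorithm halts after at most $|\cV(\MM)|$ iterations.

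For correctness, I would invoke $\Clm{process}$ as an invariant: at the start of each iteration of the outer loop, the partial output constructed so far on the set of already-processed critical vertices equals the restriction of $\cJ(\MM)$ to that set, and $\h(c)$ for each current color $c$ is the lowest processed vertex of that color. Since $\init(\MM)$ produces this invariant trivially on the set of maxima, and each iteration extends it to include the newly processed vertex $h$ (this being precisely the content of $\Clm{process}$), at the moment the loop terminates the invariant holds with respect to all of $\cV(\MM)$. Hence the output equals $\cJ(\MM)$.

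The only residual subtlety, which I would address briefly, is that processing one vertex must not damage the state used by the subsequent call to $\update(\stack)$. Here I would note that the stack is retained across outer iterations, its invariant of strictly increasing heights is preserved after $h$ is popped, and any remaining element below $h$ may now be ripe (because $h$'s color class was just merged into a larger class) or may require further pushes above it; in either case $\update(\stack)$ handles this in its while loop via $\Cor{terminate}$. I do not expect a real obstacle here; the heavy lifting has been done by $\Cor{terminate}$ and $\Clm{process}$, and the present theorem is essentially their formal union.
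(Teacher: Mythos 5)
Your proof is correct and takes essentially the same approach as the paper's: termination follows from \Cor{terminate} guaranteeing a ripe vertex to process in each iteration, and correctness follows from the invariant established in \Clm{process}. You simply spell out the finite-iteration counting argument and the invariant's persistence a bit more explicitly than the paper does.
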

\begin{proof}
 First observe that each vertex can be processed at most once by $\build(\MM)$.  By \Cor{terminate}, we know that as long as there 
 is an unprocessed vertex, $\update(K)$ will be called and will terminate with a ripe vertex which is ready to be processed. 
 Therefore, eventually all vertices will be processed, and so by \Clm{process} the algorithm will terminate having 
 computed $\cJ(\MM)$.
\end{proof}

\InNotSoCGVer{
\subsection{Running Time}
\label{sec:runTime}

We now bound the running time of the algorithm of \Sec{algo}.  
In subsequent sections, through a sophisticated charging argument, this bound is then related to matching upper 
and lower bounds in terms of path decompositions. 
Therefore, it will be useful to set up some terminology that can 
be used consistently in both places.  Specifically, the path decomposition bounds will be purely combinatorial 
statements on colored rooted trees, and so the terminology is of this form.

Any tree $T$ considered in following will be a rooted binary tree\footnote{Note 
that technically the trees considered should have a leaf vertex hanging below the root  
in order to represent the global minimum of the complex.  This vertex is 
(safely) ignored to simplify presentation.} where the height of a vertex is its distance 
from the root $r$ (i.e.\ conceptually $T$ will be a join tree with $r$ at the bottom).  
As such, the children of a vertex $v\in T$ are the adjacent vertices of larger height
(and $v$ is the parent of such vertices).  Then the subtree rooted at $v$, denoted $T_v$ consists of the graph induced on all 
vertices which are descendants of $v$ (including $v$ itself).  For two vertices $v$ and $w$ in $T$ let $d(v,w)$ denote the 
length of the path between $v$ and $w$.
We use $A(v)$ to denote the set of ancestors of $v$.
For a set of nodes $U$, $A(U) = \bigcup_{u \in U} A(u)$.

\begin{definition}
\label{def:leafAssign}
 A \emph{leaf assignment} $\chi$ of a tree $T$ assigns \emph{two} distinct leaves to each internal vertex $v$,
 one from the left child and one from the right child subtree of $v$ (naturally if $v$ has only one child it is assigned 
 only one color).
\end{definition}

For a vertex $v\in T$, we use $H_v$ to denote the \emph{heap} at $v$.
Formally, $H_v = \{u | u \in A(v), \chi(u) \cap L(T_v) \neq \emptyset\}$, where $L(T_v)$ is the set of leaves of $T_v$.
In words, $H_v$ is the set of ancestors of $v$ which are colored by some leaf in $T_v$.


\begin{definition}
\label{def:initialColoring}
Note that the subroutine $\init(\MM)$ from \Sec{algo} naturally defines a leaf assignment to $\cJ(\MM)$ 
according to the priority queue for each up-star we put a given vertex in.  Call this the \emph{initial coloring}
of the vertices in $\cJ(\MM)$.  Note also that this initial coloring defines the $H_v$ values for all $v\in \cJ(\MM)$.
\end{definition}

The following lemma should justify these technical definitions.

\begin{lemma}
\label{lem:runTimeUpper}
Let $\MM$ be a simplicial complex with $t$ critical points.  For every vertex in $\cJ(\MM)$, 
let $H_v$ be defined by the initial coloring of $\MM$.
The running time of $\build(\MM)$ is $O(N+t\alpha(t) + \sum_{v \in \cJ(\MM)} \log |H_v|)$.
\end{lemma}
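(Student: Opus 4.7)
The plan is to split the cost of $\build(\MM)$ into three pieces: setup in $\init(\MM)$, bookkeeping (stack pushes/pops, union-find operations, and find-max queries), and the binomial-heap edits (deletes and merges). The easy pieces come first, and the crux is a single structural claim about what each heap actually contains at the moment a vertex is processed.

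For the setup, the BFS painting and critical-point detection in $\init(\MM)$ both run in $O(N)$, and the initial insertions of critical vertices into the per-color binomial heaps can be done in $O(t)$ using bottom-up heap construction. By \Obs{twoQueues}, each critical vertex lives in at most two heaps throughout its unprocessed lifetime, and each vertex is pushed on $K$ exactly once and popped exactly once (each push in $\update(K)$ is of a new, strictly higher unprocessed vertex, and each pop in $\build$ leads to an irreversible processing step). Hence the total number of iterations is $O(t)$, each performing a constant number of $\rep$-lookups and at most one union, so union-find totals $O(t\alpha(t))$. Find-max on a binomial heap is $O(1)$, so the queries inside $\update$ also fit in this budget.

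The only remaining heap edits happen when $v$ is processed: the pseudocode deletes $v$ from each $T(\rep(c))$, $c \in \col(v)$, and merges these at most two heaps, at a cost of $O(\log|T(\rep(c))|)$ per edit. The central claim is that at the moment $v$ is processed, for every $c \in \col(v)$ one has $T(\rep(c)) \subseteq \{v\} \cup H_v$. Granting this, summing $O(\log|H_v|)$ over all $v$ yields the $O(\sum_v \log|H_v|)$ term, and combining with the earlier $O(N + t\alpha(t))$ gives the lemma. To prove the claim, fix $v$ just before processing and $c \in \col(v)$, and take any $w \in T(\rep(c))\setminus\{v\}$. By \Clm{process} and the invariant that $\h(c)$ is the lowest processed vertex of color $c$, the processed vertices currently of color $\rep(c)$ form a subtree $S$ of $\cJ(\MM)$ rooted at $\h(c)$, whose leaves are exactly the maxima in the class $\rep(c)$. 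Because $v$ is unprocessed yet touched by $\rep(c)$, $\h(c)$ is a strict descendant of $v$ in $\cJ(\MM)$, so $S \subseteq T_v$. Likewise $w$ is unprocessed and touched by $\rep(c)$, so one of its two colors assigned in $\init$ must correspond to a leaf of $S$; this leaf lies in $L(T_v)$, giving $\chi(w) \cap L(T_v) \neq \emptyset$, i.e.\ $w \in H_v$. Finally, ripeness of $v$ gives $f(w) < f(v)$, and since both $v$ and $w$ are ancestors of $\h(c)$ in $\cJ(\MM)$ they share a single root-to-$\h(c)$ path, which forces $w \in A(v)$.

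The delicate step is the structural claim above. Because union-find makes $\rep(c)$ drift over time through merges that can happen anywhere in the join tree, a priori an unprocessed ancestor $w$ could be absorbed into $\rep(c)$ via a leaf far from $T_v$. The rescuing observation is that any such absorption must route through the subtree $S$, which the algorithm's invariants trap strictly above $v$ in $\cJ(\MM)$. Correctly coupling the dynamic data structures to the static combinatorial quantities $H_v$ and $L(T_v)$ is the technical heart of the bound; once pinned down, the rest is straightforward accounting.
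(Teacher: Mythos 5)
Your proof is correct and follows essentially the same route as the paper: linear setup, $O(t\alpha(t))$ union-find, $O(t)$ stack iterations, and binomial-heap edits bounded by arguing the queues at the moment $v$ is processed live inside $\{v\}\cup H_v$ (the paper phrases this slightly more sharply as the queues being $\{v\}\cup H_{c_i}$ for the children $c_i$, but this is immediate from $H_{c_i}\subseteq H_v\cup\{v\}$ and yields the same bound). The one step you assert without justification --- that $w$ is an ancestor of $\h(c)$ --- does hold, via the observation that $w$'s initial leaf-color lies in both $L(T_w)$ and $L(T_{\h(c)})$, so the two subtrees nest, and the descendant case is excluded since all of $T_{\h(c)}$ is already processed while $w$ is not.
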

}
\newcommand{\proofofRunTime}{
\begin{proof}
First we look at the initialization procedure $\init(\MM)$.  This procedure runs in $O(N)$ time.
Indeed, the painting procedure consists of several BFS's but as each vertex is only explored by one of the BFS's, it is linear time overall.
Determining the critical points is a local computation on the neighborhood of each vertex as so is linear (i.e.\ each edge is viewed at most twice).
Finally, each vertex is inserted into at most two heaps and so initializing the heaps takes linear time in the number of vertices.

Now consider the union-find operations performed by $\build$ and $\update$.  
Initially the union find data structure has a singleton component for each leaf (and no new components are ever created), 
and so each union-find operation takes $O(\alpha(t))$ time.
For $\update$, by \Obs{twoQueues}, each iteration of the while loop requires a constant number of finds (and no unions).
Specifically, if a vertex is found to be ripe (and hence processed next) then these can be charged to that vertex.
If a vertex is not ripe, then these can be charged to the vertex put on the stack.
As each vertex is put on the stack or processed at most once, $\update$ performs $O(t)$ finds overall.
Finally, $\build(\MM)$ performs one union and at most two finds for each vertex.  Therefore the total number of union find 
operations is $O(t)$.

For the remaining operations, observe that for every iteration of the loop in $\update$, a vertex is pushed onto the stack and each
vertex can only be pushed onto the stack once (since the only way it leaves the stack is by being processed). 
Therefore the total running time due to $\update$ is linear (ignoring the find operations).

What remains is the time it takes to process a vertex $v$ in $\build(\MM)$.  
In order to process a vertex there are a few constant time operations, union-find operations, and queue operations.
Therefore the only thing left to bound are the queue operations.
Let $v$ be a vertex in $\cJ(\MM)$, and let $c_1$ and $c_2$ be its children (the same argument holds if $v$ has only one child).
At the time $v$ is processed, the colors and queues of all vertices in a given component of $\MM^+_v$ have merged together.
In particular, when $v$ is processed we know it is ripe and so all vertices above $v$ in each component of $\MM^+_v$ have been processed, implying 
these merged queues are the queues of the current colors of $c_1$ and $c_2$.  Again since $v$ is ripe, it must be on the top of these queues and so 
the only vertices left in these queues are those in $H_{c_1}$ and $H_{c_2}$. 

Now when $v$ is handled, three queue operations are performed.  Specifically, $v$ is removed from the queues of $c_1$ and $c_2$, and then the queues are 
are merged together.  By the above arguments the sizes of the queues for each of these operations are $H_{c_1}$, $H_{c_2}$, and $H_v$, respectively.  
As merging and deleting takes logarithmic time in the heap size for binomial heaps, the claim now follows.
\end{proof}
}
\InNotSoCGVer{
\proofofRunTime
}
\InSoCGVer{
\subsection{Upper Bounds for Running Time}\label{sec:runTime}

The algorithm $\build(\MM)$ processes vertices in $\cJ(\MM)$ one at a time.  The main processing cost  
comes from priority queue operations.  The cost of these operations is a function of the size of the queue which is in turn a function of the choices made by the subroutine $\init(\MM)$.

\begin{definition}\label{def:initialColoring}
A \emph{leaf assignment} $\chi$ of a binary tree $T$ assigns \emph{two} distinct leaves to each internal vertex $v$,
one from the left and one from the right subtree of $v$ (or only one leaf if $v$ has only one child). 
The subroutine $\init(\MM)$ naturally defines a leaf assignment to $\cJ(\MM)$ (which 
is a rooted binary tree) according to the priority queue for each up-star we put a given vertex in.  
Call this the \emph{initial coloring} of the vertices in $\cJ(\MM)$, and denote it by $\chi$.  

For a vertex $v$ in $\cJ(\MM)$, let $L(v)$ denote the set of leaves of the subtree rooted at $v$, and let $A(v)$ denote the set of 
ancestors of $v$, i.e.\ the vertices on the $v$ to root path. For a vertex $v\in \cJ(\MM)$, and an initial coloring $\chi$, 
we use $H_v$ to denote the \emph{heap} at $v$. Formally, $H_v = \{u | u \in A(v), \chi(u) \cap L(v) \neq \emptyset\}$, i.e.\
the set of ancestors colored by some leaf in $L(v)$.
\end{definition}

Given the above technical definition, the proof of the following lemma is straightforward, though long and so has been moved to \Sec{runTimeProof}.

\begin{lemma}\label{lem:runTimeUpper}
Let $\MM$ be a simplicial complex with $t$ critical points.  
The running time of $\build(\MM)$ is $O(N+t\alpha(t) + \sum_{v \in \cJ(\MM)} \log |H_v|)$, where $H_v$ is defined by an initial coloring.
\end{lemma}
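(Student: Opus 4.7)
The plan is to decompose the cost of $\build(\MM)$ into four disjoint buckets: (i) the initialization $\init(\MM)$, (ii) union--find operations, (iii) non-heap work inside $\update(K)$, and (iv) the binomial heap operations (merges and deletes) performed when a vertex is processed. The first three will give the $O(N+t\alpha(t))$ term, and the last will contribute the $\sum_v \log|H_v|$ term.

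First I would handle $\init(\MM)$. The initial painting is a collection of descending BFS's from the maxima, which together never revisit an edge, so the painting costs $O(N)$. Classifying critical points is a local computation at each vertex, and each vertex is inserted into at most two heaps by \Obs{twoQueues}, each insertion into an empty/singleton heap being $O(1)$. Hence $\init(\MM) = O(N)$. For the union--find bucket, I note that no new color classes are ever created after $\init$, so there are at most $t$ singletons and $O(t)$ operations: $\update(K)$ does $O(1)$ finds per while-loop iteration (by \Obs{twoQueues}), chargeable to either the processed vertex or the newly pushed one, and $\build$ does $O(1)$ unions/finds per processed vertex; all this totals $O(t\alpha(t))$. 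For the $\update$ bucket, every non-terminal iteration of the while loop strictly grows the stack, each vertex is pushed at most once (it leaves only when processed), so the non-heap work across all $\update$ calls is $O(t)$.

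The heart of the proof is bucket (iv). Fix an internal vertex $v\in \cJ(\MM)$ with children $c_1,c_2$. When $v$ is processed it is ripe, so by \Clm{upstar} each up-star of $v$ is monochromatic in some current color $c_i'$, and by the proof of \Clm{process} all critical vertices strictly above $v$ in the corresponding component of $\MM^+_v$ have already been processed. Processed vertices are removed from every heap, so the only elements still in the heap $T(c_i')$ that lie strictly above $v$ are exactly the unprocessed ancestors (in $\cJ(\MM)$) that were inserted into the color class which has now merged into $c_i'$. By \Def{initialColoring}, these are precisely the ancestors $u \in A(v)$ whose initial color $\chi(u)$ lies in $L(c_i)$, i.e.\ the set $H_{c_i}$. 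Thus when $v$ is processed, the two deletes and the merge happen on heaps of sizes $|H_{c_1}|$ and $|H_{c_2}|$ (which together, after removing $v$, have size $|H_v|$), and each binomial-heap operation costs $O(\log(\text{heap size}))$. Summing over all $v$ yields $O\!\left(\sum_{v\in\cJ(\MM)}\log|H_v|\right)$, completing the bound.

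The step I expect to require the most care is the identification of the contents of $T(c_i')$ with $H_{c_i}$ at the moment $v$ is processed. It needs both directions: unprocessed ancestors colored by $L(c_i)$ really do still reside in this heap (they were never processed and were never moved to any other color class, by \Obs{twoQueues} and the fact that color classes only merge, never split), and conversely no other vertex can be present — processed vertices have been deleted, and any vertex strictly below $v$ has not yet been inserted into this merged class via its up-stars because those up-stars lie in different components of $\MM^+$ below $v$. Once this correspondence is nailed down, the rest of the accounting is routine.
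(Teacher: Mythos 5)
Your proposal is correct and follows essentially the same route as the paper's own proof: the same four-bucket decomposition (initialization, union--find, non-heap stack work, heap operations), the same $O(N)$ and $O(t\alpha(t))$ accounting for the first three, and the same identification of the heap sizes at the moment $v$ is processed with $|H_{c_1}|$, $|H_{c_2}|$, and $|H_v|$ via ripeness and \Clm{upstar}. Your closing remark about the two-directional verification of the heap-contents claim is a careful elaboration of a step the paper asserts more briefly, but the argument itself is the same.
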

}


\InSoCGVer{Our main result, } \Thm{main-corr}, is an easy corollary of the above lemma. Specifically, consider a critical point $v$ of the initial input 
complex.  By \Thm{jordan} this vertex appears in exactly one of the pieces output by $\rain$.  As in the \Thm{main-corr} statement, let $\ell_v$ 
denote the length of the longest directed path passing through $v$ in the contour tree of the input complex, and let $\ell_v'$ 
denote the longest directed path passing through $v$ in the join tree of the piece containing $v$.
By \Thm{surgery}, ignoring non-dominant extrema introduced from cutting (whose cost can be charged to a corresponding saddle), the join tree on each piece output by $\rain$ is isomorphic to some 
connected subgraph of the contour tree of the input complex, and hence $\ell_v'\leq \ell_v$.  
Moreover, $|H_v|$ only counts vertices in a $v$ to root path and so trivially $|H_v|\leq \ell_v'$, implying \Thm{main-corr}.

Note that there is fair amount of slack in this argument as $|H_v|$ may be significantly smaller than $\ell_v'$.  
This slack allows for the more refined upper and lower bounds mentioned in \Sec{more-refined}.  Quantifying this slack however is quite challenging, 
and requires a significantly more sophisticated analysis involving path decompositions, which is the subject of \Sec{pathDecomp} and \Sec{lb}. 



\newcommand{\leafAssignPathDecomp}{
\section{Leaf assignments and path decompositions}
\label{sec:pathDecomp}

In this section, we set up a framework to analyze the time taken to compute a join tree $\cJ(\MM)$ (see \InSoCGVer{\Def{criticalJoin}}\InNotSoCGVer{\Def{join}}).
We adopt all notation already defined in \Sec{runTime}.  From here forward we will often assume binary trees are full binary trees 
(this assumption simplifies the presentation but is not necessary).

Let $\chi$ be some fixed leaf assignment to a rooted binary tree $T$, which in turn fixes all the heaps $H_v$. 
We choose a special path decomposition that is best defined as a subset of edges in $T$ such that each internal
vertex has degree at most $2$. This naturally gives a path decomposition.
For each internal vertex $v\in T$, add the edge from $v$ to $\arg \max_{v_l, v_r} \{|H_{v_l}|, |H_{v_r}|\}$ 
 where $v_l$ and $v_r$ are the children of $v$ (if $|H_{v_l}|=|H_{v_r}|$ then pick one arbitrarily).
This is called the \emph{maximum} path decomposition, denoted by $\pmax(T)$.

Our main goal in this section is to prove the following theorem.  We use $|p|$ to denote the number of vertices in $p$.

\begin{theorem} \label{thm:runtime} $\sum_{v \in T} \log |H_v| = O(\sum_{p\in \pmax(T)} |p| \log |p|)$.
\end{theorem}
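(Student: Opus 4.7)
My plan is to exploit the additive structure of $|H_v|$ induced by the leaf assignment, which reveals that $\pmax(T)$ is genuinely a heavy-path decomposition in the classical Sleator--Tarjan sense, and then to amortize $\log|H_v|$ along paths.

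\textbf{Step 1 (Recurrence for heap sizes).} First I would establish the key algebraic identity: for any internal $w \in T$ with children $a,b$, one has $|H_w| = |H_a| + |H_b| - \Theta(1)$. The derivation is a short case analysis using \Def{initialColoring}. For any proper ancestor $u$ of $w$, exactly one of $u$'s two colors lies in the subtree of $u$ containing $w$, and that color is either in $L(a)$ or in $L(b)$ (disjointly), so $u$ contributes to exactly one of $H_a$ and $H_b$ among its contributions to $H_w$. Meanwhile $w$ itself lies in both $H_a$ and $H_b$ (since $\chi(w)$ has one leaf in each of $L(a),L(b)$). Absorbing $\Theta(1)$ corrections for whether $a,b$ are internal or leaves gives the recurrence.

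\textbf{Step 2 (Heavy-path halving).} Because $\pmax(T)$ attaches $w$ to $\arg\max(|H_a|,|H_b|)$, the light child satisfies $|H_{\text{light}}| \le |H_w|/2 + O(1)$. Telescoping this along a single path $p \in \pmax(T)$ with vertices $v_1,\ldots,v_m$ ordered from top (closer to the root) to the leaf end, and letting $u_i$ denote the light sibling of $v_{i+1}$, I would conclude
\begin{equation*}
|H_{v_1}| \;=\; |H_{v_m}| + \sum_{i=1}^{m-1} |H_{u_i}| - O(m),
\qquad |H_{u_i}| \le |H_{v_{i+1}}| \le |H_{v_1}|.
\end{equation*}
Each $u_i$ is the topmost vertex of a distinct path $q_i \in \pmax(T)$, which gives the natural ``tree of paths'' $\widehat T$ whose nodes are paths of $\pmax(T)$ and whose edges connect $p$ to each such $q_i$.

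\textbf{Step 3 (Amortized charging along paths).} The cleanest way to close the argument is to induct on $\widehat T$, proving a per-path bound that reads
\begin{equation*}
\sum_{v \in p} \log |H_v| \;\le\; c_1 |p| \log |p| \;+\; c_2 \sum_{q \text{ child of } p \text{ in } \widehat T} \log |H_{t(q)}|,
\end{equation*}
for absolute constants $c_1,c_2$. The first term accounts for the ``cheap'' part where $|H_v| \le |p|^{O(1)}$, while the second term pushes any excess down into the child paths, where it can be absorbed into $|q|\log|q|$ by the induction hypothesis. The local inequality driving this is: since $|H_v| \le |H_{v_1}|$ for all $v\in p$ and $|H_{v_1}| \le |H_{v_m}| + \sum_i |H_{u_i}|$, the average of $\log|H_v|$ on $p$ splits by log-concavity into $O(\log |p|)$ plus $O(\tfrac{1}{|p|}\sum_i \log|H_{u_i}|)$.

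The main obstacle is handling the case where $|H_{v_1}|$ is much larger than $|p|$ (so the naive bound $|p|\log|H_{v_1}|$ is too weak). Here the recurrence $|H_{v_1}| = |H_{v_m}| + \sum_i |H_{u_i}| - O(m)$ is essential: the ``extra'' logarithmic mass must be paid for by the light-sibling paths $q_i$, and making sure those charges don't double-count across the recursion is where the heavy-path property $|H_{u_i}| \le |H_{v_{i+1}}|$ does real work, exactly analogous to how Sleator and Tarjan use subtree-weight halving in the amortized analysis of link/cut trees.
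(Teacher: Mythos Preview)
Your Steps~1 and~2 are essentially correct and are a clean way to see that $\pmax(T)$ really is a heavy-path decomposition with respect to the heap sizes; the additive identity $|H_w|=|H_a|+|H_b|-\Theta(1)$ is genuine (indeed, a proper ancestor of $w$ that lies in $H_w$ contributes to exactly one of $H_a,H_b$, and $w$ itself contributes to both). The paper does not state this identity explicitly, so this is a nice observation.

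The gap is in Step~3. The per-path inequality you propose,
\[
\sum_{v\in p}\log|H_v|\ \le\ c_1|p|\log|p|\ +\ c_2\sum_{q\text{ child of }p}\log|H_{t(q)}|,
\]
is \emph{false} in general. Take $p=\{\ell\}$ a single-leaf path (so $|p|=1$ and there are no child paths); the right side is~$0$, but $|H_\ell|$ need not be bounded. Concretely, build a chain $h=h_k,h_{k-1},\ldots,h_1,w$ where each $h_i$'s other child is a leaf, $w$'s children are leaves $s,\ell$, and choose the leaf assignment so that roughly half of the $h_i$ get color $\ell$ and half get color $s$ (from the $w$-side). Then $|H_\ell|\approx|H_s|\approx k/2$; breaking the tie toward $s$ makes $\{\ell\}$ a single-leaf light path with $|H_\ell|=\Theta(k)$. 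So your local inequality already fails at the leaves, and no inductive bookkeeping of the form ``absorb $c_2\log|H_{t(q)}|$ into $|q|\log|q|$'' can start, since $|q|\log|q|=0$ there. Relatedly, your ``log-concavity'' step is pointing the wrong way: concavity of $\log$ gives $\log\bigl(\tfrac1m\sum x_i\bigr)\ge\tfrac1m\sum\log x_i$, which is the reverse of what you need to turn $\log|H_{v_1}|=\log\bigl(|H_{v_m}|+\sum_i|H_{u_i}|+O(m)\bigr)$ into $O(\log m)+O\bigl(\tfrac1m\sum_i\log|H_{u_i}|\bigr)$.

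This is exactly the obstruction the paper's proof is built to circumvent. Rather than a per-path recursion, it classifies paths as \emph{tall} (when $|p|\ge\sqrt{|H_p|}/100$, so $|p|\log|H_p|=O(|p|\log|p|)$ directly) versus \emph{short}, and then handles short paths by a global charging argument: supported short paths are charged to the tall paths above them, and the remaining unsupported short paths are organized into a ``shrub forest'' in which a geometric-decay lemma (your halving observation, iterated through chains of short paths) forces all of them to be dominated by the root short path of each shrub. The crucial difference is that the paper never attempts to bound a short path by its own length; it bounds the \emph{aggregate} of all short paths in a shrub by the heap size at the shrub's root, which is then absorbed by tall paths. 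Your Steps~1--2 are compatible with this route (indeed, Step~2 is the seed of the geometric-decay lemma), but Step~3 as written needs to be replaced by a non-local argument of this kind.
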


We conclude this section in \Sec{implications} by showing that proving this theorem implies our main result \Thm{main-alg}.

\subsection{Shrubs, tall paths, and short paths}
\label{sec:shrubs}

The paths in $P(T)$ naturally define a tree\footnote{Please excuse the 
overloading of the term 'tree', it is the most natural term to use here.} of their own.  Specifically, in the original 
tree $T$ contract each path down to its root.  Call the resulting tree the \emph{shrub} of $T$ corresponding to 
the path decomposition $P(T)$. Abusing notation, we simply use $P(T)$ to denote the shrub.
As a result, we use terms like `parent', `child', `sibling', etc. for paths as well.
The shrub gives a handle on the heaps of a path.
We use $b(p)$ to denote the \emph{base} of the path, which is vertex in $p$ closest
to root of $T$. We use $\ell(p)$ to denote the leaf in $p$.
We use $H_p$ to denote the $H_{b(p)}$.

\begin{lemma}
\label{lem:adjacent}
 Let $p$ be any path in $P(T)$ and let $\{q_1, \dots q_k\}$ be the children on $p$.
Then $H_{\ell(p)} + \sum_{i=1}^k |H_{q_i}| \leq |H_p|+2|p|$. 
\end{lemma}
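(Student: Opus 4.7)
The plan is to classify each contributing vertex $u$ as either lying on the path $p$ or being a strict ancestor of its base $b(p)$, and then to bound these two halves of the sum by $2|p|$ and $|H_p|$ respectively. Unpacking definitions, each element of $H_{q_i}$ is an ancestor of $b(q_i)$ whose leaf-assignment hits some leaf in $L(b(q_i))$, and since $b(q_i)$ is a child in $T$ of some $v_i \in p$, any such ancestor either lies on $p$ (between $v_i$ and $b(p)$) or is a strict ancestor of $b(p)$; the analogous dichotomy holds for $H_{\ell(p)}$.

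For the strict-ancestors-of-$b(p)$ half, I would show that any such $u$ contributing to the left-hand side already lies in $H_p$ and is counted at most once. Membership in $H_p$ is immediate: the leaf witnessing $u \in H_{q_i}$ lies in $L(b(q_i)) \subseteq L(b(p))$, and the witnessing leaf $\ell(p)$ for $H_{\ell(p)}$ is likewise in $L(b(p))$. For the at-most-once claim, the leaf-assignment rule gives $u$ exactly two colors, one from each of its two disjoint child subtrees in $T$; precisely one of these subtrees contains $b(p)$, so at most one of $u$'s colors can lie in $L(b(p))$. Since $L(b(q_1)), \dots, L(b(q_k)), \{\ell(p)\}$ are pairwise disjoint subsets of $L(b(p))$ (corresponding to disjoint subtrees hanging off $p$), this unique color lies in at most one of them. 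Summing over all such $u$ therefore contributes at most $|H_p|$.

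For the on-path half, I would fix $u \in p$ and let $u'$ denote the child of $u$ lying on $p$. The vertex $u$ again has two colors, one from each of its two disjoint child subtrees in $T$. The other child of $u$ (if present) is $b(q_{j(u)})$ for the child path attached at $u$; the color from that side lies in $L(b(q_{j(u)}))$ and so contributes to at most one $H_{q_i}$. The color from the $u'$-side lies in $L(u')$, which decomposes as the disjoint union $\{\ell(p)\} \cup \bigcup_{q_{j'} \text{ attached strictly below } u} L(b(q_{j'}))$, so it contributes to at most one of $\{H_{\ell(p)}, H_{q_1}, \dots, H_{q_k}\}$. Hence each $u \in p$ is charged at most twice, totalling at most $2|p|$.

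Adding the two halves yields $|H_{\ell(p)}| + \sum_i |H_{q_i}| \leq |H_p| + 2|p|$. The main subtlety I anticipate is the reflexive-versus-strict convention for $A(\cdot)$: under the strict convention (suggested by the constant $2$ in the statement) the bases $b(q_i)$ themselves do not appear in $H_{q_i}$ and the inequality closes cleanly; under a reflexive convention one must separately charge each internal $b(q_i)$, which would inflate the leading constant on $|p|$. The key structural ingredients are therefore (i) that $L(b(q_i)) \subseteq L(b(p))$ with the $L(b(q_i))$'s together with $\{\ell(p)\}$ pairwise disjoint, and (ii) that every vertex has only two colors, one per child subtree; these are used in both halves to keep the overcount to a factor of two.
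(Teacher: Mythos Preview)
Your proof is correct and follows essentially the same approach as the paper's: split the vertices contributing to the left-hand side into those lying on $p$ (charged at most twice, once per color) and those that are strict ancestors of $b(p)$ (charged once and shown to belong to $H_p$), using throughout that the leaf sets $\{\ell(p)\}, L(b(q_1)),\dots,L(b(q_k))$ are pairwise disjoint subsets of $L(b(p))$. The paper's write-up is terser but structurally identical; your explicit handling of the reflexive-versus-strict convention for $A(\cdot)$ is a reasonable extra caution that the paper leaves implicit.
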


\begin{proof} For convenience, denote $H_i = H_{q_i}$ and $H_0 = H_{\ell(p)}$.
Consider $v \in \bigcup_{i} H_i$ that lies below $b(p)$ in $T$. 
Note that such a vertex has only one of its two colors in $L(b(p))$.  
Since the colors tracked by $H_i$ and $H_j$ for $i\neq j$ are disjoint, 
such a vertex can appear in only one of the $H_i$'s.
On the other hand, a vertex $u\in p$ can appear in more than one $H_i$, 
but since any vertex has exactly two colors it can appear in at most two 
such heaps.  Hence, $\sum_i |H_i| \leq |H_p| + 2|p|$.
\ignore{
 First observe that for all $i$, $r_{i}$ lies above $r_p$ on some root to leaf path. 
 Let $H_i'$ be the subset of $H_i$ that lies below $r_p$.  Observe that $H_i'\subseteq H_p$, as $L(T_{r_i})\subseteq L(T_{r_p})$.
  
 Now for any $i\neq j$, $H_i \cap H_j = \emptyset$, as $r_{i}$ and $r_{j}$ are not on the same root to leaf path (i.e.\ $L(T_{r_i}) \cap L(T_{r_j}) = \emptyset$).
 In particular, $(H_i\setminus H_i') \cap (H_j \setminus H_j') = \emptyset$, and so $\sum_{i=0}^k |H_i\setminus H_i'| \leq |p|$ 
 as $H_i\setminus H_i'$ is precisely the subset of $H_i$ that lies on $p$.
 Moreover, for any $i\neq j$, $ H_i' \cap H_j' = \emptyset$, and so since $H_i'\subseteq H_p$ we have $\sum_{i=0}^k |H_i'|\leq |H_p|$.
 Putting these together gives
 \[
  \sum_{i=0}^k |H_i| = \sum_{i=0}^k |H_i'| + \sum_{i=0}^k |H_i\setminus H_i'| \leq |H_p|+|p|.
 \]
}
\end{proof}

We wish to prove $\sum_{v\in T} \log |H_v| = O(\sum_{p\in P} |p| \log |p|)$. 
The simplest approach is to prove $\forall p\in P$, $\sum_{v\in p} \log |H_v|= O(|p|\log|p|)$.
This is unfortunately not true, which is why we divide paths into two categories.

\begin{definition}
 For $p\in P(T)$, $p$ is \emph{short} if $|p| < \sqrt{|H_{p}|}/100$, and \emph{tall} otherwise.
\end{definition}

\ignore{
\begin{observation}
\label{obs:decrease}
 Let $v$ be a vertex in $T$ and let $w$ be its parent.  Then $|H_w| \geq |H_v| -1$, as $L(T_v) \subseteq L(T_w)$ and the path from 
 $w$ to the root has one less vertex than the path from $v$ to the root.
 In particular, we have the following more general property.
 Let $v$ and $u$ be any two vertices in the same root to leaf path of $T$, such that $v$ is a descendant of $u$.  
 Then $|H_v| \leq |H_u| + d(u,v)$. 
\end{observation}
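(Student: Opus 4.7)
The plan is to prove the general inequality by first establishing the parent--child case and then iterating along the path from $v$ up to $u$, since any two vertices lying on a common root-to-leaf path are connected by a unique monotone chain of $d(u,v)$ parent--child edges.

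For the base case, suppose $w$ is the parent of $v$. I would unpack the definition $H_v = \{z \in A(v) : \chi(z) \cap L(T_v) \neq \emptyset\}$ and compare the ingredients for $H_v$ and $H_w$. First, since $v$ is a child of $w$ we have $A(v) = A(w) \cup \{w\}$ and $L(T_v) \subseteq L(T_w)$. Now consider any $z \in H_v$ with $z \neq w$. Then $z \in A(w)$, and the condition $\chi(z) \cap L(T_v) \neq \emptyset$ immediately upgrades to $\chi(z) \cap L(T_w) \neq \emptyset$ because $L(T_v) \subseteq L(T_w)$, so $z \in H_w$. Thus $H_v \setminus \{w\} \subseteq H_w$, which gives $|H_w| \geq |H_v| - 1$, or equivalently $|H_v| \leq |H_w| + 1$.

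For the general statement, let $v = v_0, v_1, \dots, v_k = u$ be the path from $v$ to $u$, where each $v_{i+1}$ is the parent of $v_i$ and $k = d(u,v)$. Applying the parent--child inequality to each consecutive pair yields $|H_{v_i}| \leq |H_{v_{i+1}}| + 1$ for $i = 0, 1, \dots, k-1$. Telescoping these $k$ inequalities gives $|H_v| = |H_{v_0}| \leq |H_{v_k}| + k = |H_u| + d(u,v)$, which is exactly what is claimed.

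There is no serious obstacle here: the argument is purely a bookkeeping check against the definition of $H_v$, with the only subtle point being that the ``missing'' contribution when passing from $H_v$ to $H_w$ is exactly the vertex $w$ itself (which is in $A(v)$ but not $A(w)$), accounting for the $-1$. Everything else follows immediately from the monotonicity $A(w) \subseteq A(v)$ together with $L(T_v) \subseteq L(T_w)$, both of which are inherent in the ancestor/descendant structure of the rooted binary tree.
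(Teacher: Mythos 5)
Your proof is correct and follows essentially the same route as the paper's one-line justification: combine the monotonicity $L(T_v) \subseteq L(T_w)$ (so a vertex's contribution to $H_v$ persists in $H_w$) with the fact that the ancestor set changes by exactly one vertex, then telescope along the unique monotone chain of parent--child edges from $v$ up to $u$.

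One small bookkeeping point worth flagging. You work with the convention that $A(v)$ denotes \emph{proper} ancestors, so $A(v) = A(w) \cup \{w\}$ and you identify $w$ as the element potentially lost when passing from $H_v$ to $H_w$. The paper, however, defines $A(v)$ as ``the vertices on the $v$ to root path,'' i.e.\ $v \in A(v)$; under that convention $A(v) = A(w) \cup \{v\}$, and the one element that may be lost is $v$ itself (note $v \in H_v$ always for internal $v$ since $\chi(v) \subseteq L(T_v)$). Either way $H_v$ and $H_w$ differ in $A(\cdot)$-membership by exactly one vertex, so the inequality $|H_w| \geq |H_v| - 1$ and the telescoped bound $|H_v| \leq |H_u| + d(u,v)$ both go through unchanged; it is purely a labeling discrepancy with the paper's conventions, not a gap in the argument.
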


The following can be thought of as a generalization of the above observation, and will be useful in later sections.
}

The following lemma demonstrates that tall paths can ``pay'' for themselves.

\begin{lemma}
\label{lem:pathBounds} If $p$ is tall, $\sum_{v\in p} \log |H_v| = O(|p| \log |p|)$.
If $p$ is short, $\sum_{v\in p} \log |H_v| = O(|H_p| \log |H_p|)$.
\end{lemma}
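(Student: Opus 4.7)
The plan is to reduce both parts of the lemma to a single structural inequality
\[
|H_v| \;\leq\; |H_p| + |p| \qquad \text{for every } v \in p,
\]
and then dispatch the two cases by plugging in the definitions of tall and short.

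To prove the inequality, I would partition $A(v) = (A(v) \cap A(b(p))) \cup R_v$. Every $u$ in the remainder $R_v$ is an ancestor of $v$ that is not a strict ancestor of $b(p)$, so either $u = b(p)$ or $u$ lies strictly between $v$ and $b(p)$ on the path $p$; in particular $|R_v| \leq |p|$. For the contribution from $A(b(p))$, note that $v$ lies in $T_{b(p)}$, so $L(v) \subseteq L(b(p))$; hence any $u \in A(b(p))$ with $\chi(u) \cap L(v) \neq \emptyset$ automatically satisfies $\chi(u) \cap L(b(p)) \neq \emptyset$, placing $u \in H_{b(p)} = H_p$. Therefore the contribution to $H_v$ from ancestors strictly below $b(p)$ is at most $|H_p|$, while $R_v$ contributes at most $|p|$, giving the claimed inequality.

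In the tall case, $|p| \geq \sqrt{|H_p|}/100$ rearranges to $|H_p| \leq 10000\,|p|^2$, so $|H_v| \leq 10001\,|p|^2$ and $\log |H_v| = O(\log |p|)$; summing over the $|p|$ vertices of $p$ gives $\sum_{v \in p} \log |H_v| = O(|p| \log |p|)$. In the short case, $|p| \leq \sqrt{|H_p|}/100 \leq |H_p|$, so $|H_v| \leq 2|H_p|$ and $\log |H_v| = O(\log |H_p|)$, whence $\sum_{v \in p} \log |H_v| = O(|p| \log |H_p|) = O(|H_p| \log |H_p|)$. The entire argument rests on the observation that restricting from the subtree $T_{b(p)}$ to the smaller subtree $T_v$ can only shrink the set of ancestors below $b(p)$ that the heap sees; I do not anticipate a real technical obstacle for this particular lemma, since the structural bound is almost immediate from the definitions and the remaining case analysis is purely arithmetic. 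The genuine difficulty is deferred to the next step (\Thm{runtime}), where the $O(|H_p|\log|H_p|)$ cost of short paths must be charged to ancestor paths in $\pmax(T)$ rather than absorbed locally.
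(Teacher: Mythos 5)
Your proposal is correct and follows the exact same structure as the paper's proof: both reduce to the bound $|H_v|\leq |H_p|+|p|$ for $v\in p$ (which the paper asserts in a parenthetical remark and you prove explicitly via the $A(b(p))$/$R_v$ split), and then both finish by plugging in the threshold $|p|\lessgtr\sqrt{|H_p|}/100$ from the tall/short definition. The only difference is that you spell out the structural inequality rather than taking it as immediate.
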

\begin{proof} \ignore{
 By $\Obs{decrease}$ we know that for any vertex $v\in p$, $|H_v|\leq |H_{p}| + |p|$ (as $v$ is a descendant of $r_p$ along $p$).
}
For $v \in p$, $|H_v|\leq |H_{p}| + |p|$ (as $v$ is a descendant of $b(p)$ along $p$).
Hence, $\sum_{v\in p} \log |H_v| \leq \sum_{v\in p} \log(|H_{p}| + |p|)  = |p| \log (|H_{p}| + |p|)$.
 If $p$ is a tall path, then $|p| \log (|H_{p}| + |p|) = O(|p| \log |p|)$. If $p$ is short, then 
 $|p| \log (|H_{p}| + |p|) = O(|p| \log |H_{p}| )$. For short paths, $|p| = O(|H_p|)$.
\end{proof}

\ignore{
For a short path $p$ we can think of the quantity $|p|(\log |H_p| - \log |p|)$ as the excess of $p$, i.e.\ what was not ``paid for'' locally by $p$.
Our eventual goal will be to show that by choosing the right path decomposition, this excess 
can be recharged to tall paths. In the next couple sections we introduce the appropriate path decomposition and prove 
some useful facts about it.  Then given these facts, in \Sec{} we are able to make the recharging argument.
}

There are some short paths that we can also ``pay" for.  Consider any short path $p$ in the shrub.  We will refer to the 
\emph{tall support chain} of $p$ as the tall ancestors of $p$ in the shrub which have a path to $p$ which does not use any 
short path (i.e.\ it is a chain of paths adjacent to $p$).

\begin{definition} \label{def:support} A short path $p$ is \emph{supported} if
at least $|H_p|/100$ vertices $v$ in $H_p$ lie in paths in the tall support chain of $p$.
%
%
\end{definition}

Let $\cL$ be the set of short paths, $\cL'$ be the set of supported short paths, and $\cH$ be the set of tall paths
given by $\pmax(T)$.
We now construct the shrub of unsupported short paths. Consider $p \in \cL \setminus \cL'$,
and traverse the chain of ancestors from $p$. Eventually, we must reach another short path $q$.
(If not, we have reached the root $r$ of $\pmax(T)$. Hence, $p$ is supported.)
Insert edge from $p$ to $q$, so $q$ is the parent of $p$ in $\cU$. This construction leads
to the shrub forest of $\cL \setminus \cL'$, where all the roots are supported short paths, 
and the remaining nodes are the unsupported short paths.

Most of the work goes into proving the following technical lemma.

\begin{lemma} \label{lem:cu-root} Let $\cU$ denote a connected component (shrub) in the shrub forest of $\cL \setminus \cL'$ and let $r$
be the root of $\cU$. (i) For any $v \in p$ such that $p \in \cU$,
$|H_v| = O(|H_r|)$. (ii) $\sum_{p \in \cU} |p| = O(|H_r|)$.
\end{lemma}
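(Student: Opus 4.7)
The plan is to prove both parts by induction on depth in $\cU$, combining the unsupported-path condition with careful distance bounds in $T$. For part (i), note first that it suffices to bound $|H_{b(p)}|$ for each $p \in \cU$, since $|H_v| \leq |H_{b(p)}| + |p|$ for any $v \in p$ (the heap grows by at most one per step down a path in $T$, as a child's heap differs from the parent's by at most the parent itself).

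The crucial inequality: for any $p \in \cU \setminus \{r\}$ with parent $q$ in $\cU$, decompose the ancestor set $A(b(p))$ according to position in $\pmax(T)$: vertices in the tall chain $T(p)$ between $p$ and $q$; vertices on $q$ itself; and strict ancestors of $b(q)$. Elements of $H_p$ in the third category have colors in $L(b(p)) \subseteq L(b(q))$, so they lie in $H_q$, contributing at most $|H_q|$. Vertices on $q$ contribute at most $|q|$. By unsupportedness of $p$, fewer than $|H_p|/100$ elements of $H_p$ lie in $T(p)$. This gives $|H_p| \leq \tfrac{100}{99}(|H_q| + |q|)$.

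To avoid compound growth from iterating this inequality along deep chains in $\cU$, I would instead use the bound $|H_p| \leq |H_r| + d_T(b(r), b(p))$: any element of $H_p \setminus H_r$ must be an ancestor of $b(p)$ that is not a strict ancestor of $b(r)$ (since any such strict ancestor with color in $L(b(p)) \subseteq L(b(r))$ automatically lies in $H_r$), hence lies on the $T$-path from $b(r)$ to $b(p)$. Part (i) thus reduces to showing $d_T(b(r), b(p)) = O(|H_r|)$. This tree-distance decomposes as $|r|$ plus intermediate unsupported-short sizes plus total tall-chain sizes along the descent. The short condition makes each short-path size $O(\sqrt{|H_r|})$ once the heap bound is in place; the tall-chain sizes are controlled using the $\pmax$ property (heaviest-child descent) together with \Lem{adjacent} applied iteratively: at each branching, the branch-off is the lighter-heap child, so its descendants have bounded cumulative heap against $|H_r|$, and since tall paths satisfy $|t| \leq 100\sqrt{|H_t|}$ from below we can telescope their total lengths against the heap mass available.

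For part (ii), with $|H_p| = O(|H_r|)$ in hand the short condition gives $|p| = O(\sqrt{|H_r|})$ for every $p \in \cU$, and $\sum_{p \in \cU}|p|$ is contained in $d_T$-distance from $b(r)$ down through the $\pmax(T)$ shrub to the deepest $p \in \cU$, which was bounded by $O(|H_r|)$. The main obstacle is precisely the compound-growth issue in the key inequality: naive iteration yields a factor $(100/99)^m$ over depth $m$ of $\cU$, which is only $O(1)$ for bounded $m$. Circumventing this requires the tree-distance reduction together with a careful aggregate bookkeeping of tall chain sizes via $\pmax$ and \Lem{adjacent}; the specific constants $100$ in the definitions of short and supported are calibrated so that the slack introduced by the unsupported condition absorbs the accumulated contribution from intermediate paths.
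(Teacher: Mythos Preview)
Your proposal has genuine gaps, and it misses the key idea that makes the paper's proof work.

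\textbf{The missing geometric decrease.} The paper's proof of part (i) hinges on a one-line lemma (\Lem{geometric}): if $q$ is a child of $p$ in $\cU$, then $|H_p| \geq \tfrac{3}{2}|H_q|$. This is where the $\pmax$ property is really used. At the node $u$ of $T$ where the branch to $q$ leaves the path $p$, the sibling $v$ of the branch-off vertex $w$ stays on $p$, so $|H_v| \geq |H_w|$ by the $\pmax$ rule; hence $|H_u| \geq 2|H_w|-2$, and after absorbing the short-path length $|p|$ and the tall-chain contribution (bounded by unsupportedness of $q$), one gets $|H_p| \geq \tfrac{3}{2}|H_q|$. Heaps therefore decay \emph{geometrically} along any root-to-node path in $\cU$, which immediately gives part (i). You mention ``the branch-off is the lighter-heap child'' but never turn this into a quantitative statement; your derived inequality $|H_p| \leq \tfrac{100}{99}(|H_q|+|q|)$ uses only unsupportedness and goes the wrong way for iteration, as you note yourself.

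\textbf{Two concrete errors.} First, your assertion that ``tall paths satisfy $|t| \leq 100\sqrt{|H_t|}$'' is false: tallness means $|t| \geq \sqrt{|H_t|}/100$, a \emph{lower} bound on $|t|$, so you cannot telescope tall-chain lengths against heap mass in the way you describe. Second, your argument for part (ii) treats $\cU$ as a single chain: ``$\sum_{p\in\cU}|p|$ is contained in $d_T$-distance from $b(r)$ down \ldots to the deepest $p\in\cU$''. But $\cU$ is a branching shrub, and the sum over all $p\in\cU$ can be arbitrarily larger than any single root-to-leaf distance. Knowing $|p| = O(\sqrt{|H_r|})$ per path does not bound the sum without a bound on the number of paths. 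The paper handles this with a substantially more involved argument: reduced heaps $\redH_p$, nonnegative residues $R^+_p$, the potential $W_p = \sum_{q\in\cU_p} R^+_q$, a charge-redistribution lemma showing $\sum_{q\in\cU_p}|q|\leq W_p/20$ under a geometric condition on $W$, and a furthest-counterexample bootstrap to establish that condition. Your sketch does not contain any mechanism that plays the role of this charging argument, and the reduction of part (i) to a tree-distance bound that is essentially part (ii) makes the whole plan circular.
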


We split the remaining argument into two subsections. We first prove \Thm{runtime} from \Lem{cu-root},
which involves routine calculations. Then we prove \Lem{cu-root}, where the interesting work happens.

\subsection{Proving \Thm{runtime}} \label{sec:thm-runtime}

We split the summation into tall, short, and unsupported short paths.
\begin{eqnarray*} 
\sum_{p \in \cL} \sum_{v \in p} \log |H_v| & = & \sum_{p \in \cL \setminus \cL'} \sum_{v \in p} \log |H_v| + \sum_{p \in \cL'} \sum_{v \in p} \log |H_v| + \sum_{p \in \cH} \sum_{v \in p} \log |H_v|
\end{eqnarray*}
The last term can be bounded by $O(\sum_{p \in \pmax(T)} |p|\log |p|)$, by \Lem{pathBounds}.
The second term can be bounded by $O(\sum_{p \in \cL'} |H_p| \log |H_p|)$, by \Lem{pathBounds} again.
The following claim shows that this in turn is at most the last term.


\begin{claim} \label{clm:above} $\sum_{p \in \cL'} |H_p| \log |H_p| = O(\sum_{q \in \cH} \sum_{v \in q} \log |H_v|)$.
\end{claim}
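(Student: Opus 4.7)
The plan is to distribute the cost $|H_p|\log|H_p|$ of each $p\in\cL'$ to vertices in its tall support chain, giving each such vertex a charge of order $\log|H_u|$; the claim then reduces to a double-counting argument based on the bound that each vertex is charged by at most two paths.

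\textbf{Step 1 (where to put the charge).} Let $S_p = H_p \cap \bigcup \{q : q\text{ a tall path in the tall support chain of }p\}$; since $p$ is supported, $|S_p|\geq |H_p|/100$. The crucial pigeonhole is that the elements of $H_p$ are all ancestors of $b(p)$ in $T$, and so lie on a single $b(p)$-to-root path with pairwise distinct $T$-distances from $b(p)$. Combined with the inequality $|H_u|\geq |H_p|-d(u,b(p))$, which follows from $H_p\cap A(u)\subseteq H_u$ (using $L(b(p))\subseteq L(u)$), at most $\sqrt{|H_p|}$ vertices of $H_p$ can have $|H_u|<\sqrt{|H_p|}$. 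Hence, once $|H_p|$ exceeds a fixed constant, the subset $S_p^{\star}=\{u\in S_p:|H_u|\geq \sqrt{|H_p|}\}$ contains at least $|H_p|/100-\sqrt{|H_p|}=\Omega(|H_p|)$ vertices. Charging each $u\in S_p^{\star}$ an amount $C\log|H_u|$ for an appropriate constant $C$ then covers the full $|H_p|\log|H_p|$, because $C \cdot \Omega(|H_p|) \cdot \tfrac12\log|H_p| \geq |H_p|\log|H_p|$; the finitely many $p$'s with $|H_p|$ below the cutoff each contribute only $O(1)$ and are absorbed into the $O(t\alpha(t)+N)$ slack of the main theorem.

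\textbf{Step 2 (at most two charging paths per vertex).} I show that any vertex $u$ lies in $S_p$ for at most two distinct $p\in\cL'$. Indeed, $u\in H_p$ forces one of $u$'s two leaf-colors $a\in\chi(u)$ to lie in $L(b(p))$, so $b(p)$ is on the $T$-path from $u$ up to the leaf $a$. Suppose $p_1,p_2\in\cL'$ both have $u\in S_{p_i}$ via the same color $a$, and WLOG $b(p_1)$ is $T$-closer to $u$. Because each vertex of $T$ belongs to exactly one path of $\pmax(T)$ and $b(p_1)\neq b(p_2)$, the vertex $b(p_2)$ must sit in a strict shrub-descendant of $p_1$, which makes the short path $p_1$ a shrub-ancestor of $p_2$. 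This breaks the tall-only chain from $p_2$ to $u$'s path and contradicts $u\in S_{p_2}$. Since $u$ has only two colors, at most two paths charge $u$.

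Combining the two steps, each tall-path vertex $u$ receives total charge $O(\log|H_u|)$, so summing over all tall-path vertices gives
\[
\sum_{p\in\cL'}|H_p|\log|H_p| \;=\; O\!\left(\sum_{q\in\cH}\sum_{v\in q}\log|H_v|\right),
\]
exactly the statement of the claim. The main obstacle I expect is Step 1: the good subset $S_p^{\star}$ must be drawn only from the tall-support-chain vertices of $H_p$, not from all of $H_p$, so one must verify this restriction is compatible with the distinct-distance pigeonhole. The linear order of $A(b(p))$ along the $b(p)$-to-root path is exactly what makes this work: regardless of which specific $H_p$-vertices the tall support chain contains, only $\sqrt{|H_p|}$ of them can be ``bad,'' and $|S_p|\geq |H_p|/100$ comfortably dominates this.
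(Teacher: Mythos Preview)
Your two-step strategy is exactly the paper's: charge each supported short path $p$ to vertices of $H_p$ lying in its tall support chain, then show any vertex is charged by at most two such paths. Step~2 matches the paper's double-counting argument essentially verbatim.

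There is, however, a genuine slip in Step~1. From $H_p\cap A(u)\subseteq H_u$ you correctly get $|H_u|\ge |H_p\cap A(u)|$, but you then weaken this to $|H_u|\ge |H_p|-d(u,b(p))$, and that weaker inequality does \emph{not} yield the pigeonhole you claim: the condition $d(u,b(p))>|H_p|-\sqrt{|H_p|}$ does not bound the number of such $u\in H_p$ by $\sqrt{|H_p|}$, because the distances of $H_p$-vertices to $b(p)$ can be arbitrarily spread out (distinctness alone says nothing about their range). The fix is to keep the sharper bound: if $u$ is the $k$-th element of $H_p$ along the $b(p)$-to-root path, then $|H_p\cap A(u)|=|H_p|-k+1$, so $|H_u|\ge |H_p|-k+1$, and now at most $\sqrt{|H_p|}$ values of $k$ make this drop below $\sqrt{|H_p|}$.

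The paper avoids both this slip and your small-$|H_p|$ caveat by a slightly cleaner choice: instead of discarding $\sqrt{|H_p|}$ ``bad'' vertices, it simply takes the \emph{first} $|H_p|/200$ of the (at least $|H_p|/100$) tall-chain vertices of $H_p$, ordered from $b(p)$ outward. Each such $v$ then has at least $|H_p|/200$ elements of $H_p$ still ahead of it, so $|H_v|\ge |H_p|/200$ and $\log|H_v|\ge \log|H_p|-O(1)$ with no square-root and no cutoff.
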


\begin{proof} 
Pick $p \in \cL'$. 
%
%
As we traverse the tall support chain of $p$, there
are at least $|H_p|/100$ vertices of $H_p$ that lie in these paths. These are encountered in
a fixed order. Let $H'_p$ be the first $|H_p|/200$ of these vertices. When $v \in H'_p$
is encountered, there are $|H_p|/200$ vertices of $H_p$ not yet encountered. Hence,
$|H_v| \geq |H_p|/200$. Hence, $|H_p|\log |H_p| = O(\sum_{v \in H'_p} \log |H_v|)$.
Since all the vertices lie in tall paths,
we can write this as $O(\sum_{q \in \cH} \sum_{v \in H'_p \cap q} \log |H_v|)$.
Summing over all $p$, the expression is $\sum_{q \in \cH} \sum_{p \in \cL'} \sum_{v \in H'_p \cap q} \log |H_v|$.

Consider any $v \in H'_p$.  Let $S$ be the set of paths $\widetilde{p}\in \cL'$ such that 
$v\in H'_{\widetilde{p}}$.  We now show $|S|\leq 2$ (i.e.\ it contains at most one path other than $p$).
First observe that any two paths in $S$ must be unrelated (i.e.\ $S$ is an anti-chain), 
since paths which have an ancestor-descendant relationship have disjoint tall support chains.  
However, any vertex $v$ receives exactly one color from each of its two subtrees (in $T$), and therefore $|S|\leq 2$ since any two 
paths which share descendant leaves in $T$ (i.e.\ their heaps are tracking the same color) must have an ancestor-descendant relationship.

In other words, any $\log |H_v|$ appears at most twice in the above triple summation.
Hence, we can bound it by $O(\sum_{q \in \cH} \sum_{v \in q} \log |H_v|)$.
%
%
\end{proof}

The first term (unsupported short paths) can be charged to the second term (supported short paths).
This is where the critical \Lem{cu-root} plays a role.

\begin{claim} \label{clm:first} $\sum_{p \in \cL \setminus \cL'} \sum_{v \in p} \log |H_v| = O(\sum_{p \in \cL'} |H_p| \log |H_p|)$.
\end{claim}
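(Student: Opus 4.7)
The plan is to exploit \Lem{cu-root} by decomposing the left-hand side according to the shrub forest of $\cL \setminus \cL'$ just constructed above. Recall that every connected component $\cU$ in this forest has a unique root $r \in \cL'$, while every other node of $\cU$ is an unsupported short path in $\cL \setminus \cL'$. Thus each unsupported short path belongs to exactly one component, and the left-hand sum partitions cleanly as $\sum_{\cU} \sum_{p \in \cU \setminus \{r\}} \sum_{v \in p} \log |H_v|$, where the outer sum is over the connected components of the shrub forest.

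Fix a component $\cU$ with root $r \in \cL'$. I would first apply \Lem{cu-root}(i) pointwise: for every vertex $v$ lying in some path $p \in \cU$, $|H_v| = O(|H_r|)$, hence $\log |H_v| = O(\log |H_r|)$. This reduces the inner double sum to
\[
\sum_{p \in \cU \setminus \{r\}} \sum_{v \in p} \log |H_v| \;=\; O(\log |H_r|) \cdot \sum_{p \in \cU \setminus \{r\}} |p|.
\]
Next, I would apply \Lem{cu-root}(ii) to bound $\sum_{p \in \cU \setminus \{r\}} |p| \le \sum_{p \in \cU} |p| = O(|H_r|)$. Combining the two bounds, the total contribution of the component $\cU$ is $O(|H_r| \log |H_r|)$.

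Finally, since distinct components of the shrub forest have distinct roots in $\cL'$, summing the per-component bound over all components yields a total of $O(\sum_{r \in \cL'} |H_r| \log |H_r|)$, which is exactly the right-hand side of the claim (supported short paths in $\cL'$ that are not the root of any component only strengthen the bound). The only genuine content is \Lem{cu-root}; once that lemma is granted, the argument is pure bookkeeping. The conceptual point, and the reason the two parts of \Lem{cu-root} are the main obstacle, is that the definition of \emph{supported} was tailored precisely so that each shrub of unsupported short paths has bounded ``total weight'' $\sum_{p \in \cU} |p|$ and bounded ``heap scale'' $|H_v|$, both controlled by its single supported root $r$, which is exactly what is needed to charge the entire shrub to $|H_r| \log |H_r|$.
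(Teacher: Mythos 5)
Your proposal is correct and takes essentially the same approach as the paper: partition unsupported short paths by connected component of the shrub forest, apply \Lem{cu-root}(i) to bound each $\log |H_v|$ by $O(\log |H_r|)$, apply \Lem{cu-root}(ii) to bound the total path length by $O(|H_r|)$, and sum over components whose roots are distinct supported short paths. The only cosmetic difference is that you omit the root $r$ from each inner sum whereas the paper includes it and uses an inequality; both are fine.
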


\begin{proof} Let $\cU$ denote a connected component of the shrub forest.
We have $\sum_{p \in \cL \setminus \cL'} \sum_{v \in p} \log |H_v| \allowbreak \leq \sum_{\cU} \sum_{p \in \cU} \sum_{v \in p} \log |H_v|$.
By \Lem{cu-root}, $|H_v| = O(|H_r|)$, where $r$ is the root of $\cU$.
Furthermore, $\sum_{p \in \cU} |p| = O(|H_r|)$. 
We have $\sum_{p \in \cU} \sum_{v \in p} \log |H_v| = O((\log |H_r|) \sum_{p \in \cU} |p|) = O(|H_r|\log |H_r|)$.
We sum this over all $\cU$ in the shrub forest, and note that roots in the shrub forest are supported short paths.
\end{proof}

\ignore{
Now, for the critical piece of the approach. Focus on $\pmax(T)$. Let $L$ be the set of short paths, 
and $\hat{L} \subseteq L$ be the set of paths such that no ancestor of $p \in L'$
is short. Obviously, $\hat{L}$ is an anti-chain in $P(T)$.

\begin{lemma} \label{lem:short} For the path decomposition $\pmax(T)$, $\sum_{p \in \cL} \sum_{v \in p} \log |H_v| = O(\sum_{p \in \cL'} |H_p| \log |H_p|)$.
\end{lemma}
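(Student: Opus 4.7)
The plan is to split the sum over short paths according to whether each path is supported, and bound the two resulting pieces by very different means. Writing
\[
\sum_{p \in \cL} \sum_{v \in p} \log |H_v| \;=\; \sum_{p \in \cL'} \sum_{v \in p} \log |H_v| \;+\; \sum_{p \in \cL \setminus \cL'} \sum_{v \in p} \log |H_v|,
\]
it suffices to bound each of the two terms by $O\bigl(\sum_{p \in \cL'} |H_p|\log |H_p|\bigr)$.

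For the contribution of the supported short paths, I would invoke \Lem{pathBounds} directly. Since every $p \in \cL'$ is a short path, that lemma immediately gives $\sum_{v \in p} \log |H_v| = O(|H_p|\log |H_p|)$, and summing over $p \in \cL'$ already yields exactly the desired target quantity, with no further charging required.

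For the contribution of the unsupported short paths, I would invoke \Clm{first}, which asserts precisely $\sum_{p \in \cL\setminus\cL'}\sum_{v\in p}\log|H_v| = O\bigl(\sum_{p\in\cL'}|H_p|\log|H_p|\bigr)$. The underlying mechanism is the shrub forest on $\cL \setminus \cL'$: every unsupported short path lies in a component $\cU$ whose root $r$ is a supported short path (by the definition of supported, any maximal ancestor chain of unsupported short paths must terminate at a supported short path). \Lem{cu-root} then provides the two crucial bounds $|H_v| = O(|H_r|)$ for every $v$ in a path of $\cU$ and $\sum_{p \in \cU} |p| = O(|H_r|)$, which multiply together to give a total shrub contribution of $O(|H_r|\log |H_r|)$. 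Since distinct shrubs have distinct roots in $\cL'$, summing over shrub roots is dominated by $\sum_{p \in \cL'} |H_p|\log |H_p|$.

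The main obstacle is not in this lemma itself but entirely in the supporting result \Lem{cu-root}, which carries the structural difficulty of showing that once one ascends through a chain of unsupported short paths one cannot accumulate either too much length or too large a heap before reaching a supported root. Given that, the present lemma is just the clean combination of \Lem{pathBounds} on supported paths with \Clm{first} on unsupported ones, and adding the two bounded pieces completes the argument.
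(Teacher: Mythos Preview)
Your proposal is correct and follows essentially the same approach as the paper: split $\cL$ into $\cL'$ and $\cL\setminus\cL'$, apply \Lem{pathBounds} to each supported short path directly, and invoke \Clm{first} (whose proof uses \Lem{cu-root} exactly as you describe) for the unsupported short paths. Your identification of \Lem{cu-root} as carrying all the real work is also on the mark.
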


With this lemma, the proof of \Thm{runtime} is straightforward.

\begin{proof} (of \Thm{runtime}) We focus on $\pmax(T)$ and split into tall and short paths.
We have $\sum_{v \in T} \log |H_v| =$  $\sum_{p \in \cL} \sum_{v \in p} \log |H_v| + $ $\sum_{p \in \cH} \sum_{v \in p} \log |H_v|$.
For the first term, we apply \Lem{short} to bound by $O(\sum_{p \in \cL'} |H_p| \log |H_p|)$.
Then we apply \Clm{above} to bound by $O(\sum_{p \in \cH} |p|\log |p|)$.
The latter is also bounded by the same expression, by \Lem{pathBounds}. By definition,
$\sum_{p \in \cH} |p|\log |p| \leq \cost(\pmax(T))$.
\end{proof}

The main challenge is proving \Lem{short}, which we defer to the next section.
\begin{definition}
 Let $\chi(T)$ be a valid coloring of a binary tree $T$.  We define a set of disjoint paths over the vertices of 
 $T$ as follows.  
 The corresponding collection of paths that all such edges correspond to is a path decomposition of $T$, which 
 we refer to as a \emph{maximum} path decomposition, and is denoted by $\mathcal{P}(T)$.  
\end{definition}

From the previous section we now have a lower bound of $\Omega(f_{path}(P))$ for computing the merge tree in terms of any path decomposition $P$. 
We now show that the running time of our algorithm for computing the merge tree is upper bounded by this expression for a specific path decomposition, 
which we will call the maximum path decomposition (defined below).
Specifically, for a given valid coloring $\chi$ of a tree $T$, we will prove that for the maximum path decomposition $f_{alg}(\chi(T)) = O(f_{path}(P(T)))$. 

Throughout this section, for a path $p\in P(T)$, we use the notation $H_p$ to refer to the heap at the root of $p$, i.e.\ $H_p = H_{r_p}$.
}

\subsection{Proving \Lem{cu-root}: the root is everything in $\cU$} \label{sec:weight}

\Lem{cu-root} asserts the root $r$ in $\cU$ pretty much encompasses all sizes and heaps in $\cU$.
We will work with the \emph{reduced} heap $\redH_p$. This is the subset of vertices of $H_p$ that 
do not appear on the tall support chain of $p$.
By definition, for any unsupported short path
(hence, any non-root $p \in \cU$), $|\redH_p| \geq 99|H_p|/100$. 
We begin with a key property, which is where the construction of $\pmax(T)$ enters the picture.

\begin{lemma}
\label{lem:geometric}
 Let $q$ be the child of some path $p$ in $\cU$, then $|H_p|\geq \frac{3}{2}|H_q|$. 
 Moreover, if $p\neq r(\cU)$, then $|\redH_p|\geq \frac{3}{2}|\redH_q|$. 
\end{lemma}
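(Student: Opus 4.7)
The plan is to control how the heap size $|H_{b(\cdot)}|$ evolves both along a single path in $\pmax(T)$ and across a branching from a heavy child to a light one, and then to combine these two estimates with the short/unsupported structure of paths in $\cU$ to obtain the geometric decrease.

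I would first establish a sharp branching identity: if $v \in T$ has children $v'$ (heavy under $\pmax$) and $v''$ (light), then $H_{v'} \cap H_{v''} = \{v\}$, since any strict ancestor $u$ of $v$ has a single color on the $v$-side of $u$ and that color lies in exactly one of the disjoint sets $L(v')$ or $L(v'')$. Combined with $|H_{v'} \cup H_{v''}| = |H_v| + 1$, this gives $|H_{v'}| + |H_{v''}| = |H_v| + 2$, and since $v'$ is the heavy child, $|H_{v''}| \leq |H_v|/2 + 1$. Iterating the weaker bound $|H_{u'}| \leq |H_u| + 1$ for the heavy child $u'$ of $u$ also shows that any $w \in p$ at distance $d$ below $b(p)$ along $p$ satisfies $|H_w| \leq |H_p| + d \leq |H_p| + |p|$.

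For the first inequality $|H_p| \geq \tfrac{3}{2}|H_q|$, I split into whether $q$ is a direct child of $p$ in $\pmax(T)$'s shrub or is reached through a chain of tall paths $\pi_1, \dots, \pi_m$. In the direct-child case, $b(q)$ is the light child of some $w \in p$, so $|H_q| \leq |H_w|/2 + 1 \leq |H_p|/2 + |p|/2 + 1$; shortness of $p \in \cL$ forces $|p| < \sqrt{|H_p|}/100$ and $|H_p|$ large enough that this quantity is $\leq \tfrac{2}{3}|H_p|$. In the chain case, I iterate the branching estimate into the recurrence $|H_{\pi_{i+1}}| \leq (|H_{\pi_i}| + |\pi_i|)/2 + 1$ and amortize the $|\pi_i|/2$ corrections against the collective off-path light-child heap mass guaranteed by Lemma~\ref{lem:adjacent} applied to each tall $\pi_i$.

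For the reduced-heap bound, when $p \neq r(\cU)$ the unsupportedness of $p$ gives $|\redH_p| \geq 99|H_p|/100$, and since $q$ (being a non-root of $\cU$) is also unsupported, $|\redH_q| \geq 99|H_q|/100$. Any vertex of $\redH_q$ sits either in $p$ itself (at most $|p|$ such vertices) or is a strict ancestor of $b(p)$ colored by $L(b(q)) \subseteq L(b(p))$, and hence lies in $H_p$. Combining this containment with the strengthened Part~1 ratio --- which has slack well above $\tfrac{3}{2}$ in the regime where $p$ is short --- absorbs the $99/100$ loss and closes the $\tfrac{3}{2}$ bound on the reduced heaps.

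The main obstacle will be the chain case of the first inequality: the intermediate tall paths could in principle be long enough that the $|\pi_i|/2$ correction terms re-inflate the heap over successive branchings. Overcoming this requires the amortization from Lemma~\ref{lem:adjacent} applied path by path, exploiting that a long tall $\pi_i$ necessarily has enough light children with non-trivial heaps to pin $|H_{w_i}|$ close to $|H_{\pi_i}|$ rather than $|H_{\pi_i}| + |\pi_i|$; this bookkeeping is where I expect the delicate work to lie.
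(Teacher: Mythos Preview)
Your direct-child case and your reduced-heap argument are essentially what the paper does. The gap is in the chain case, and it is a real one.

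Your plan is to iterate the light-child branching bound through every tall path $\pi_1,\ldots,\pi_m$ in the support chain, obtaining a recurrence like $|H_{\pi_{i+1}}| \leq (|H_{\pi_i}| + |\pi_i|)/2 + 1$, and then to ``amortize the $|\pi_i|/2$ corrections'' using Lemma~\ref{lem:adjacent}. But a tall path only satisfies $|\pi_i| \geq \sqrt{|H_{\pi_i}|}/100$; there is no upper bound on $|\pi_i|$ in terms of $|H_{\pi_i}|$ or $|H_p|$. So the correction terms $\sum_i |\pi_i|/2^{m-i+1}$ can be arbitrarily large relative to $|H_p|$, and Lemma~\ref{lem:adjacent} gives you inequalities in the wrong direction to control this (it bounds child-heap sums \emph{above} by $|H_{\pi_i}| + 2|\pi_i|$, which does not cap $|\pi_i|$). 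You have correctly identified the obstacle but not a way past it.

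The paper sidesteps the iteration entirely by using the unsupportedness of $q$ once, globally. By definition, fewer than $|H_q|/100$ vertices of $H_q$ lie in the tall support chain of $q$. Every vertex of $H_q$ that is an ancestor of $b(h(q))$ automatically lies in $H_{h(q)}$ (since $L(T_{b(q)}) \subseteq L(T_{b(h(q))})$), so the only vertices of $H_q$ lost when passing to $H_{h(q)}$ are those sitting on the tall chain itself. Hence $|H_{h(q)}| \geq 99|H_q|/100$ in a single stroke, with no dependence on the $|\pi_i|$'s. From there only one branching is needed --- at the point where $h(q)$ meets $p$ --- and your own light-child identity plus shortness of $p$ finishes it: $|H_p| \geq 2|H_{h(q)}| - |p| - 2 \geq (2-\delta)\cdot \tfrac{99}{100}|H_q|$. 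The lengths of the intermediate tall paths never enter.
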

\begin{proof} Let $h(q)$ denote the tall path that is a child of $p$ in $\pmax(T)$,
and an ancestor of $q$. If no such tall path exists, then by construction $p$ is the parent 
of $q$ in $\pmax(T)$, and the following argument will go through by setting $h(q)=q$.

The chain of ancestors from $q$ to $h(q)$ consists only of tall paths.
Since $q$ is unsupported, these paths contain at most $|H_q|/100$ vertices of $H_q$. 
Thus, $|H_{h(q)}| \geq 99|H_q|/100$.

Consider the base
of $h(q)$, which is a node $w$ in $T$. Let $v$ denote the sibling of $w$ in $T$.
Their parent is called $u$. Note that both $u$ and $v$ are nodes in the path $p$.
Now, the decomposition $\pmax(T)$ put $u$ and $v$ in the same path $p$. This
implies $|H_v| \geq |H_w|$. Since $|H_u| \geq |H_v| + |H_w| - 2$,
$|H_u| \geq 2|H_w| - 2$. Let $b$ be the base of $p$.
We have $|H_p| = |H_b| \geq |H_u| - |p| \geq 2|H_w| - |p| - 2$. Since $p$ is a short path, $|p| < \sqrt{|H_p|}/100$.
Applying this bound, we get $|H_p| \geq (2-\delta)|H_w|$ (for a small constant $\delta > 0$).
Since $w$ is the base of $h(q)$, $H_w = H_{h(q)}$. We apply the bound $|H_{h(q)}| \geq 99|H_q|/100$
to get $|H_p| \geq 197|H_q|/100$, implying the first part of the lemma.  For the second part, observe that 
if $p\neq r(\cU)$, then $p$ is unsupported and so $|\redH_p| \geq 99|H_p|/100$, and therefore the second part follows since $|H_q| \geq |\redH_q|$.
%
%
\end{proof}

This immediately proves part (i) of \Lem{cu-root}. Part (ii) requires much more work.

We define a \emph{residue} $R_p$ for each $p \in \cU$. Suppose $p$ has children
$q_1, q_2, \ldots, q_k$ in $\cU$. Then $R_p = |\redH_p| - \sum_i |\redH_{q_i}|$. 
By definition, $|\redH_p| = \sum_{q \in \cU_p} R_p$. 
Note that $R_p$ can be negative. 
Now, define $R^+_p = \max(R_p,0)$, and set $W_p = \sum_{q \in \cU_p} R^+_p$.
Observe that $W_p \geq |\redH_p|$. We also get an approximate converse.

\begin{claim} \label{clm:loss} For any path $p\in \cU$, $|\redH_p| \geq W_p - 2\sum_{q \in \cU_p} |q|$.
\end{claim}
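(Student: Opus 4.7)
The plan is to reduce the claim to a bound on the sum of negative residues, and then prove that bound by induction combined with a per-vertex charging argument. I will first note the telescoping identity $\sum_{q \in \cU_p} R_q = |\redH_p|$: each $|\redH_q|$ for $q \neq p$ appears once positively (from $R_q$) and once negatively (from $R_{\pi_\cU(q)}$), so only $|\redH_p|$ from the root of $\cU_p$ survives. Writing $R_q^- = \max(0,-R_q)$, this yields $W_p - |\redH_p| = \sum_{q \in \cU_p} R_q^-$, so the claim is equivalent to $\sum_{q \in \cU_p} R_q^- \leq 2\sum_{q \in \cU_p} |q|$.

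To establish the latter, I will induct on $|\cU_p|$ with the stronger statement $W_p \leq |\redH_p| + 2\sum_{q \in \cU_p}|q|$. In the step, expand $W_p = R_p^+ + \sum_i W_{q_i}$ over the $\cU$-children $q_1,\ldots,q_k$ of $p$ and apply the inductive hypothesis to each $q_i$, reducing the work to proving $R_p^+ + \sum_i |\redH_{q_i}| \leq |\redH_p| + 2|p|$. When $R_p \geq 0$ this is an equality after substituting $R_p^+ = R_p$. The only substantive case is $R_p < 0$, where $R_p^+ = 0$ and I must show $\sum_i |\redH_{q_i}| \leq |\redH_p| + 2|p|$. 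Here I count by vertex: each $v \in T$ contributes $|\{i : v \in \redH_{q_i}\}|$ to the left side and $[v \in \redH_p]$ to $|\redH_p|$. The key structural fact is that if $v \in \redH_p$ and one of its two colors lies in $L(b(q_i))$ for some $i$, then automatically $v \in \redH_{q_i}$; this is because $v \in A(b(p)) \subseteq A(b(q_i))$ places $v$ strictly below $b(p)$ in $T$, while the tall support chain of $q_i$ lies strictly above $p$ in $T$, so $v$ cannot be in it. Combined with the fact that the subtrees $L(b(q_i))$ are pairwise disjoint (since $\cU$-siblings have bases in disjoint $T$-subtrees, as otherwise they would be in an ancestor relation in $\cU$), each vertex appears in at most two of the $\redH_{q_i}$.

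I will then partition the vertices that actually contribute: (a) vertices of $p$ itself, contributing at most $2|p|$ in total; (b) vertices in the tall support chain $Z_p = H_p \setminus \redH_p$; (c) vertices in $\redH_p$, each with excess at most $1$; and verify via a direct case analysis that all other shrub-descendants of $p$ (and all ancestors outside $H_p$) contribute $0$. The main obstacle is category (b): these vertices land in some $\redH_{q_i}$ yet were excluded from $\redH_p$, and $|Z_p|$ is not bounded by $|p|$ individually. I expect to resolve this by a global rather than pointwise amortization, tracing each such $v$ through its color chain $Q_v^\ell$ in $\cU_p$ to identify the unique $\cU$-path $\tilde q$ (the parent of the top of the chain) at which $v$'s excess is charged; this identifies $v$ as either a vertex of $\tilde q$ or a tall-support-chain vertex for $\tilde q$. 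The unsupported bound $|Z_q| \leq |H_q|/100$ combined with \Lem{geometric}'s geometric shrinkage of $|H_q|$ down $\cU_p$, together with the iterated application of \Lem{adjacent} across the sub-shrub $D(q)$, is what I anticipate will collapse the cumulative tall-support contributions into the target $2\sum_{q \in \cU_p}|q|$ bound.
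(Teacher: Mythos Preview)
Your telescoping reduction and the per-node target inequality are exactly what the paper does. The paper writes $W_p - |\redH_p| = \sum_{q \in \cU_p}(R_q^+ - R_q) = -\sum_{q: R_q<0} R_q$ and then proves the pointwise bound $R_q \geq -2|q|$ directly, with no induction and no global amortization. Its argument is just this: when you traverse the shrub $\pmax(T)$ from a child $q'_i$ up to $q$, every intermediate path is in the tall support chain of $q'_i$ and hence already excluded from $\redH_{q'_i}$; so the only vertices of $\redH_{q'_i}$ that fail to survive into $\redH_q$ are those lying on the path $q$ itself, and each vertex of $q$ has two colors and so lies in at most two of the children's reduced heaps. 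Summing gives $\sum_i |\redH_{q'_i}| \leq |\redH_q| + 2|q|$, i.e.\ $R_q \geq -2|q|$, and then summing over $q$ finishes.

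Your category (b) does flag a genuine subtlety that the paper's proof glosses over: a vertex $v$ on the tall support chain of $q$ (so excluded from $\redH_q$) can still lie in some $\redH_{q'_i}$, since the tall support chain of $q'_i$ sits between $q'_i$ and $q$ and does not include the chain above $q$. The paper's sentence ``the vertices of $\redH_{q'_i}$ that are deleted are exactly those present in the path $q$'' does not address this case. That said, the elaborate global machinery you sketch---tracing color chains $Q_v^\ell$, invoking the geometric shrinkage of \Lem{geometric}, iterating \Lem{adjacent} over a sub-shrub---is nowhere in the paper and is at this point just a hope, not a proof. If you believe (b) is a real obstruction to the per-node bound, the productive step is to either construct a small counterexample to $R_q \geq -2|q|$ or to find the missing one-line reason such vertices don't arise; building a whole new charging scheme on top of an unverified gap is premature.
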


\begin{proof} We write $W_p - |\redH_p| = \sum_{q \in \cU_p} R^+_q - R_q$ $= -\sum_{q \in \cU_p: R_q < 0} R_q$.
Consider $q \in \cU_p$ and denote the children in $\cU_p$ by $q'_1, q'_2, \ldots$.
Note that $R_q$ is negative exactly when $|\redH_q| < \sum_i |\redH_{q'_i}|$.
Traverse $\pmax(T)$ from $q'_i$ to $q$.
Other than $q$, all other nodes encountered are in the tall support chain of $q'_i$ and hence
do not affect its reduced heap.
The vertices of $\redH_{q'_i}$ that are deleted
are exactly those present in the path $q$. 
Any vertex in $q$ can be deleted from at most two of the reduced heaps (of the children of $q$ in $\cU_p$), since 
theses reduced heaps do not have an ancestor-descendant relationship.
Therefore when $R_q$ is negative, it is at most by $2|q|$.
We sum over all $q$ to complete the proof.
\end{proof}

\ignore{Let $D_q = H_q \setminus \bigcup_i H_{q'_i}$. In other words, as we traverse $\cU$,
$D_q$ is the set of vertices ``added" at $q$. Suppose we traverse $\pmax(T)$ from $q'_i$
to $q$. All vertices of $H_{q'_i}$ that lie on these paths are removed.

 and traverse $\pmax(T)$ up to $p$. Suppose we encounter
the paths $q_0 = q, q_1, q_2, \ldots, q_k = p$. The vertices in the residue $R_q$ that 
are removed from $H_p$ are exactly the vertices that lie in these paths. 
We split this contribution into tall and short paths, observing that the latter
are all unsupported. Hence, we can (crudely) upper bound the number of vertices
removed by $\sum_{q \in \cU_p} |q| + \sum_{q \in \cU_p} \sum_{h \in \cH} |h \cap q|$.
Each vertex can be removed from at most two distinct $R_q$'s, completing the proof.
}

The main challenge of the entire proof is bounding the sum of path lengths, which is done next.
We stress that the all the previous work is mostly the setup for this claim.

\begin{claim} \label{clm:charge} Fix any path $p \in \cU\setminus\{r(\cU)\}$. Suppose for any $q,q' \in \cU_p$
where $q$ is a parent of $q'$ in $\cU_p$, $W_q \geq (4/3)W_{q'}$. Then $\sum_{q \in \cU_p} |q| \leq W_p/20$.
\end{claim}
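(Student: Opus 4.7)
The plan is to reduce the bound on $\sum_{q \in \cU_p}|q|$ to a bound on $\Phi(p) := \sum_{q \in \cU_p} \sqrt{W_q}$, and then prove the latter by induction on $|\cU_p|$ using the geometric decay hypothesis $W_q \geq (4/3)W_{q'}$.  Since $p \neq r(\cU)$, every $q \in \cU_p$ is an unsupported short path, so shortness gives $|q|^2 < |H_q|/10^4$ and unsupportedness gives $|\redH_q| \geq (99/100)|H_q|$; combined with $W_q \geq |\redH_q|$ these yield $|q| < \sqrt{W_q}/\sqrt{9900}$.  Hence the claim reduces to proving $\Phi(p) \leq (\sqrt{9900}/20)\,W_p$, since dividing through by $\sqrt{9900}$ then recovers $\sum_{q \in \cU_p}|q| \leq W_p/20$.

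To drive the induction I would strengthen the hypothesis to $\Phi(p) \leq \beta W_p - \gamma\sqrt{W_p}$ with $\beta = \sqrt{9900}/20$ and $\gamma = 3$.  The base case $|\cU_p| = 1$ reduces to $(1+\gamma)\sqrt{W_p} \leq \beta W_p$, which is immediate from $W_p \geq |\redH_p| \geq 9900$.  For the inductive step, let $q_1,\ldots,q_k$ denote the $\cU$-children of $p$; using $\Phi(p) = \sqrt{W_p}+\sum_i \Phi(q_i)$, the inductive hypothesis, and $W_p = R^+_p + \sum_i W_{q_i}$, the bound on $\Phi(p)$ is equivalent to the key inequality
\[
\beta R^+_p + \gamma \sum_i \sqrt{W_{q_i}} \;\geq\; (1+\gamma)\sqrt{W_p}.
\]

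I would verify this inequality by case analysis on $k$.  When $k = 1$, the hypothesis forces $W_{q_1} \leq (3/4)W_p$ and hence $R^+_p \geq W_p/4$, so the term $\beta R^+_p \geq \beta W_p/4$ alone dominates $(1+\gamma)\sqrt{W_p}$ whenever $W_p \geq 9900$.  When $k \geq 2$, the tightest subcase is $R^+_p = 0$: concavity of $\sqrt{\cdot}$ shows that $\sqrt{W_{q_1}}+\sqrt{W_{q_2}}$, subject to $W_{q_1}+W_{q_2} = W_p$ and each $W_{q_i} \leq (3/4)W_p$, is minimized at the maximally unbalanced split $((3/4)W_p,\,W_p/4)$, giving the value $((\sqrt{3}+1)/2)\sqrt{W_p}$.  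The required $\gamma \cdot ((\sqrt{3}+1)/2) \geq 1+\gamma$ then reduces to $\gamma \geq \sqrt{3}+1 \approx 2.73$, which is satisfied by $\gamma = 3$.  The mixed regime $0 < R^+_p \leq W_p/4$ with $k \geq 2$ interpolates smoothly: the extremal split there yields $\sqrt{(3/4)W_p} + \sqrt{W_p/4 - R^+_p}$, and the small decrease in $\gamma\sum_i\sqrt{W_{q_i}}$ relative to the $R^+_p=0$ bound is more than offset by the linear term $\beta R^+_p$ (since $\beta\sqrt{W_p}/4 \gg 1+\gamma$ for $W_p \geq 9900$).

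The main obstacle is the joint calibration of $\beta$ and $\gamma$: the reduction to the $1/20$ bound forces $\beta \leq \sqrt{9900}/20 \approx 4.975$, while the $k \geq 2,\ R^+_p=0$ subcase forces $\gamma \geq \sqrt{3}+1$.  Happily these constraints are mutually consistent, and the base case is far from binding because $W_p \geq 9900$ provides enormous slack relative to $(1+\gamma)/\beta$.  With $\beta = \sqrt{9900}/20$ and $\gamma = 3$, the induction closes and yields $\sum_{q \in \cU_p}|q| < \Phi(p)/\sqrt{9900} \leq \beta W_p/\sqrt{9900} = W_p/20$, completing the proof.
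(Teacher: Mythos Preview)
Your argument is correct, but it takes a genuinely different route from the paper. Both proofs begin with the same reduction $|q| \lesssim \sqrt{W_q}$ (you use $1/\sqrt{9900}$, the paper rounds to $1/99$), so the task becomes bounding $\sum_{q\in\cU_p}\sqrt{W_q}$ by a constant times $W_p$. After that you diverge. The paper uses a direct \emph{charge redistribution}: each $q$ carries charge $\sqrt{W_q}/99$, which it spreads over $\cU_q$ in proportion to the $R^+$ values; the total charge landing at any node $q'$ is then $(R^+_{q'}/99)\sum_i 1/\sqrt{W_{a_i}}$ summed over its ancestors $a_i$, and the hypothesis $W_{a_i}\ge(4/3)^i$ turns this into a geometric series bounded by $R^+_{q'}/20$. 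Summing over $q'$ gives $W_p/20$ immediately. Your approach instead strengthens the target to $\Phi(p)\le\beta W_p-\gamma\sqrt{W_p}$ and inducts on $|\cU_p|$, which forces a case analysis on the number of $\cU$-children and a calibration of $(\beta,\gamma)$. The charging argument is slicker and constant-free in spirit; your induction is more mechanical but has the virtue of being entirely self-contained once the strengthened invariant is identified.

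One remark: your ``mixed regime'' paragraph is hand-wavy as written. The clean way to close it is to observe that for $R\in[0,W_p/4]$ the function $R\mapsto \beta R+\gamma\sqrt{W_p/4-R}$ is concave, hence minimized at an endpoint; at $R=0$ you recover the $R^+_p=0$ bound, and at $R=W_p/4$ the linear term $\beta W_p/4$ already dominates $(1+\gamma)\sqrt{W_p}$ since $W_p\ge 9900$. That makes the interpolation rigorous without any appeal to ``smoothness''.
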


\begin{proof} Since $q$ is an unsupported short path, $|q| < \sqrt{|H_q|}/100 \leq \sqrt{|\redH_q|}/99 \leq \sqrt{|W_q|}/99$.
We prove that $\sum_{q \in \cU_p} \sqrt{|W_q|}/99 \leq W_p/20$
by a charge redistribution scheme. Assume that each $q \in \cU_p^-$
starts with $\sqrt{W_q}/99$ units of charge. We redistribute this charge over all
nodes in $\cU_q$, and then calculate the total charge. For $q \in \cU_p$, spread
its charge to all nodes in $\cU_q$ proportional to $R^+$ values. In other words,
give $(\sqrt{W_q}/99)\cdot(R^+_{q'}/W_q)$ units of charge to each $q' \in \cU_q$.

After the redistribution, let us compute the charge deposited at $q$. Every ancestor in $\cU_p^-$ 
$q = a_0, a_1, a_2, \ldots, a_k$ contributes to the charge at $q$. The charge is expressed in the following
equation. We use the assumption that $W_{a_i} \geq (4/3) W_{a_{i-1}}$ and hence $W_{a_i} \geq (4/3)^i W_{a_0} \geq (4/3)^i$, 
as $a_0$ is an unsupported short path and hence $W_{a_0}\geq 1$. 
$$ ({R^+_q}/99)\sum_{a_i} 1/\sqrt{W_{a_i}} \leq ({R^+_q}/99)\sum_{a_i} (3/4)^{i/2} \leq R^+_q/20 $$
The total charge is $\sum_{q \in \cU_p} R^+_p/20 = W_p/20$. 
\end{proof}

\begin{corollary} \label{cor:chargeCor}
 Let $r$ be the root of $\cU$, and suppose that for any paths $q,q' \in \cU\setminus \{r\}$, where $q$ is a parent of $q'$ in $\cU$, $W_q \geq (4/3)W_{q'}$.
 Then $\sum_{p \in \cU} |p| \leq W_r/20+|r|$.
\end{corollary}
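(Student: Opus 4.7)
The plan is to peel the root $r$ off of $\cU$ and apply Claim~\ref{clm:charge} to each subtree hanging below it. Let $q_1,\ldots,q_k$ be the children of $r$ in the shrub $\cU$. Since $\cU_r = \cU$ decomposes as the disjoint union $\{r\} \cup \cU_{q_1} \cup \cdots \cup \cU_{q_k}$, we can split
\[
\sum_{p \in \cU} |p| \;=\; |r| \;+\; \sum_{i=1}^{k} \sum_{p \in \cU_{q_i}} |p|.
\]

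Each $q_i$ lies in $\cU \setminus \{r\}$, so Claim~\ref{clm:charge} applies with $p = q_i$, provided its hypothesis holds inside $\cU_{q_i}$. But every parent-child pair $(q,q')$ internal to $\cU_{q_i}$ is also a parent-child pair in $\cU$ not involving $r$ (the edge $(r,q_i)$ is not internal to $\cU_{q_i}$), so the hypothesis assumed in the corollary transfers directly. This yields $\sum_{p \in \cU_{q_i}} |p| \leq W_{q_i}/20$ for each $i$.

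All that remains is to bound $\sum_i W_{q_i}$ by $W_r$. This is immediate from the definition $W_p = \sum_{q \in \cU_p} R^+_q$: the same disjoint decomposition of $\cU_r$ gives $W_r = R^+_r + \sum_i W_{q_i}$, and since each $R^+_q \geq 0$ we conclude $\sum_i W_{q_i} \leq W_r$. Combining the pieces yields $\sum_{p \in \cU} |p| \leq |r| + W_r/20$, as desired. I don't foresee a genuine obstacle: the corollary is a routine lifting of Claim~\ref{clm:charge} from a strict subtree of $\cU$ to the whole component, with the root contributing the additive $|r|$ term because $r$ is the unique supported short path in $\cU$, i.e.\ precisely the node at which the geometric-decay hypothesis of Claim~\ref{clm:charge} was set up to bottom out.
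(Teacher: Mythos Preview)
Your proof is correct and follows essentially the same approach as the paper: split off the root, apply Claim~\ref{clm:charge} to each child subtree $\cU_{q_i}$, and use $W_r = R^+_r + \sum_i W_{q_i} \geq \sum_i W_{q_i}$ to combine. Your write-up simply spells out in more detail why the hypothesis of Claim~\ref{clm:charge} transfers to each $\cU_{q_i}$.
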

\begin{proof}
 Let $c_1, \dots, c_m$ be the children of $r$ in $\cU$.  By definition, $W_r = \sum_i W_{c_i}+R^+_r \geq \sum_i W_{c_i}$.
 By \Clm{charge}, for each $c_i$ we have $W_{c_i}/20\geq \sum_{p \in \cU_{c_i}} |p|$.  Combining these to facts yields the claim.
\end{proof}

%
%

We wrap it all up by proving part (ii) of \Lem{cu-root}.

\begin{claim} $\sum_{p \in \cU} |p| \leq |H_{r(\cU)}|/10$.
\end{claim}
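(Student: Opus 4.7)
The plan is to apply \Cor{chargeCor} to obtain $\sum_{p\in \cU}|p| \leq W_r/20 + |r|$, where $r = r(\cU)$, and then bound each of the two terms in terms of $|H_r|$ and solve a short linear inequality for $\sum_{p\in \cU}|p|$.

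First, since $r$ is a short path, $|r| < \sqrt{|H_r|}/100$, which is at most $|H_r|/100$ once $|H_r| \geq 1$ (smaller cases being trivial). Second, applying \Clm{loss} at $r$ gives $|\redH_r| \geq W_r - 2\sum_{p\in\cU}|p|$, i.e.\ $W_r \leq |\redH_r| + 2\sum_{p\in \cU}|p| \leq |H_r| + 2\sum_{p\in\cU}|p|$. Substituting both bounds into the inequality from \Cor{chargeCor} yields
\[
   \sum_{p\in\cU}|p| \;\leq\; \frac{|H_r| + 2\sum_{p\in\cU}|p|}{20} + \frac{|H_r|}{100},
\]
which rearranges to $(9/10)\sum_{p\in\cU}|p| \leq |H_r|/20 + |H_r|/100 = 3|H_r|/50$, giving $\sum_{p\in\cU}|p| \leq |H_r|/15 \leq |H_r|/10$, as desired.

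The real work is in checking the hypothesis of \Cor{chargeCor}, namely that $W_q \geq (4/3)W_{q'}$ for every parent--child pair $q, q' \in \cU\setminus\{r\}$ in the shrub. From \Lem{geometric} one gets $|\redH_q| \geq (3/2)|\redH_{q'}|$, and since $W_s \geq |\redH_s|$ always, it suffices to sandwich $W_{q'}$ from above with a constant only slightly larger than $|\redH_{q'}|$. That upper bound comes from \Clm{loss} applied to $q'$, namely $W_{q'} \leq |\redH_{q'}| + 2\sum_{\tilde s \in \cU_{q'}} |\tilde s|$, together with an inductive application of the present claim to the sub-shrub $\cU_{q'}$ rooted at the unsupported short path $q'$; because $q'$ is unsupported we have $|\redH_{q'}| \geq 99|H_{q'}|/100$, so the correction $2\sum_{\tilde s \in \cU_{q'}}|\tilde s|$ is only a small fraction of $|\redH_{q'}|$, and the $(3/2)$ factor from \Lem{geometric} still leaves the required $(4/3)$ gap after the sandwich.

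The main obstacle is precisely this verification of the hypothesis of \Cor{chargeCor}: the multiplicative gap of $(3/2)$ in \Lem{geometric} is only a little larger than the required $(4/3)$, so the $W$-versus-$\redH$ sandwich must be tight. Once the induction is set up carefully, all the constant-factor slack baked into the definitions of ``short'' ($|p| < \sqrt{|H_p|}/100$) and ``supported'' ($|H_p|/100$ of the heap lying on the tall support chain) is exactly what makes the numbers line up cleanly.
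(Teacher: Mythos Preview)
Your first paragraph---applying \Cor{chargeCor}, bounding $|r|$ via shortness, converting $W_r$ to $|H_r|$ via \Clm{loss}, and solving the resulting linear inequality---is correct and essentially identical to the paper's argument.

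The gap is in your verification of the hypothesis of \Cor{chargeCor}. You propose to bound $W_{q'}$ by applying the present claim inductively to the sub-shrub $\cU_{q'}$, obtaining $\sum_{\tilde s\in\cU_{q'}}|\tilde s|\le |H_{q'}|/10$. Plugging this into \Clm{loss} gives
\[
W_{q'}\;\le\;|\redH_{q'}|+2\cdot\frac{|H_{q'}|}{10}\;\le\;|\redH_{q'}|\Bigl(1+\frac{20}{99}\Bigr)\;=\;\frac{119}{99}\,|\redH_{q'}|,
\]
and combining with $W_q\ge |\redH_q|\ge (3/2)|\redH_{q'}|$ from \Lem{geometric} yields only
\[
\frac{W_q}{W_{q'}}\;\ge\;\frac{3}{2}\cdot\frac{99}{119}\;=\;\frac{297}{238}\;\approx\;1.248\;<\;\frac{4}{3}.
\]
So the ``small fraction'' is roughly $20\%$, and after the sandwich the $3/2$ factor degrades below $4/3$; your closing assertion that the gap survives is false. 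Even strengthening the inductive hypothesis to the $|H_r|/15$ you actually derive is not enough (the ratio becomes about $1.32$).

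The paper avoids this by \emph{not} recycling the claim. It takes a deepest counterexample $p,p'$ to the hypothesis; below $p'$ the hypothesis already holds, so \Clm{charge} applies directly to $\cU_{p'}$ and gives the much tighter bound $\sum_{q\in\cU_{p'}}|q|\le W_{p'}/20$. Feeding that into \Clm{loss} yields $|\redH_{p'}|\ge (9/10)W_{p'}$, and then $W_p\ge |\redH_p|\ge (3/2)|\redH_{p'}|\ge (27/20)W_{p'}>(4/3)W_{p'}$, contradicting the choice of counterexample. The point is that \Clm{charge} delivers a bound in terms of $W_{p'}$ with constant $1/20$, whereas the claim itself gives a bound in terms of $|H_{q'}|$ with constant $1/10$; the former is exactly what the sandwich needs.
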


%

\begin{proof} We use $r$ for $r(\cU)$. 
Suppose $W_q \geq (4/3)W_{q'}$ (for any choice in $\cU\setminus\{r\}$ of $q$ parent of $q'$), 
then by \Cor{chargeCor}, $\sum_{p \in \cU} |p| \leq W_{r}/20+|r|$.
By \Clm{loss}, $|\redH_r| \geq W_r - 2\sum_{p \in \cU} |p|$, and so combining these inequalities gives,
\[
 \sum_{p \in \cU} |p|\leq \frac{10}{9} \pth{|\redH_r|/20+|r|} \leq 
 \frac{10}{9}\pth{|H_r|/20+\sqrt{|H_r|}/100} \leq |H_r|/10.
\]

We now prove that for any $q$ parent of $q'$ (other than $r$), $W_q \geq (4/3)W_{q'}$.
Suppose not. Let $p, p'$ be the counterexample furthest from the root,
where $p$ is the parent of $p'$. 
Note that for $q$ and child $q'$ in $\cU_{p'}$, $W_q \geq (4/3)W_{q'}$.
We will apply \Clm{charge} for $\cU_{p'}$ to deduce that $\sum_{q \in \cU_{p'}} |q| \leq W_{p'}/20$.
Combining this with \Clm{loss} gives, $|\redH_{p'}| \geq 19W_{p'}/20$. By \Lem{geometric}, $|\redH_p| \geq (3/2)|\redH_{p'}|$.
Noting that $W_p \geq |\redH_p|$, we deduce that $W_p \geq (4/3)W_{p'}$.
Hence, $p, p'$ is not a counterexample, and more generally, there is no counterexample.
That completes the whole proof.
\end{proof}

\subsection{Our Main Result}
\label{sec:implications}

We now show that \Thm{runtime} allows us to upper bound the running time for our join tree and contour 
tree algorithms in terms of path decompositions.

\begin{theorem}
Let $f:\MM \to \RR$ be the linear interpolant over distinct valued vertices, where the join tree $\cJ(\MM)$ has maximum degree $3$.
There is an algorithm to compute the join tree whose running time is $O(\sum_{p \in \pmax(\cJ)} |p|\log |p| + t\alpha(t) + N)$.
\end{theorem}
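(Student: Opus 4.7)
The plan is to simply combine the two main technical results already established in \Sec{paint} and \Sec{pathDecomp}. The algorithm in question is $\build(\MM)$ from \Sec{algo}, which by \Thm{correct} outputs $\cJ(\MM)$ correctly. So the work is purely in assembling the running-time bound.

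First, I would invoke \Lem{runTimeUpper}, which states that $\build(\MM)$ runs in time $O(N + t\alpha(t) + \sum_{v \in \cJ(\MM)} \log |H_v|)$, where the $H_v$ are defined by the initial coloring $\chi$ produced by $\init(\MM)$ (see \Def{initialColoring}). Note that $\chi$ is a leaf assignment of the rooted binary tree $\cJ(\MM)$ in the sense of the path-decomposition framework of \Sec{pathDecomp}, since each non-leaf vertex is placed in exactly one heap per up-star, and each up-star corresponds to a child subtree in $\cJ(\MM)$.

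Next, I would apply \Thm{runtime} to $T = \cJ(\MM)$ with this leaf assignment, which yields
\[
\sum_{v \in \cJ(\MM)} \log |H_v| \;=\; O\!\left(\sum_{p \in \pmax(\cJ)} |p|\log |p|\right),
\]
where $\pmax(\cJ)$ is the maximum path decomposition determined by the heap sizes (and hence implicitly by $\init(\MM)$). Substituting this into the bound from \Lem{runTimeUpper} gives the stated $O(\sum_{p \in \pmax(\cJ)} |p|\log |p| + t\alpha(t) + N)$ running time.

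There is no real obstacle at this stage, since all the difficult work has been done in proving \Lem{runTimeUpper} (which required the careful correctness analysis of $\build$ together with the binomial heap cost accounting) and \Thm{runtime} (which required the intricate shrub/tall/short path charging argument culminating in \Lem{cu-root}). The only minor point worth noting is that $\pmax(\cJ)$ is not an input to the algorithm but rather an artifact of the initial coloring chosen by $\init$; this is precisely the sense in which the path decomposition of \Thm{main-alg} is ``constructed implicitly by the algorithm'', and it is consistent with the framing already used in \Sec{more-refined}.
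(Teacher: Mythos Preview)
Your proposal is correct and follows essentially the same approach as the paper: invoke \Thm{correct} for correctness of $\build(\MM)$, use \Lem{runTimeUpper} to bound the running time by $O(N + t\alpha(t) + \sum_{v}\log|H_v|)$, and then apply \Thm{runtime} to replace the heap-size sum by $O(\sum_{p\in\pmax(\cJ)}|p|\log|p|)$. The paper's proof is just these three citations strung together; your added remarks about the leaf assignment and the implicit nature of $\pmax(\cJ)$ are accurate elaborations but not needed for the argument.
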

\begin{proof}
By \Thm{correct} we know that $\build(\MM)$ correctly outputs $\cJ(\MM)$, and by \Lem{runTimeUpper}
we know this takes $O(\sum_{v \in \cJ(\MM)} \log |H_v| + t\alpha(t) + N)$ time, where the $H_v$ values are determined as in \Def{initialColoring}.
Therefore by \Thm{runtime}, $\build(\MM)$ takes $O(\sum_{p \in \pmax(\cJ)} |p|\log |p| + t\alpha(t) + N)$ 
time to correctly compute $\cJ(\MM)$.
\end{proof}

This result for join trees easily implies our main result, \Thm{main-alg}, which we now restate and prove.

\begin{theorem}
Let $f:\MM \to \RR$ be the linear interpolant over distinct valued vertices, where the contour tree $\cC=\cC(\MM)$ has maximum degree $3$. 
There is an algorithm to compute $\cC$ whose running time is $O(\sum_{p \in P(\cC)} |p|\log |p| + t\alpha(t) + N)$,
where $P(T)$ is a specific path decomposition (constructed implicitly by the algorithm).
\end{theorem}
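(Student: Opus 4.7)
The plan is to assemble the main theorem from the machinery established earlier, with the key novelty being how to relate the per-piece running times to a single path decomposition of the full contour tree $\cC$. First I would invoke $\rain(x,\MM)$ (Theorem \ref{thm:rain-time}) to decompose $\MM$ into extremum dominant pieces $\MM_1,\ldots,\MM_r$ in $O(N)$ time, with total complexity $O(N)$ by \Lem{new-verts}. For each piece $\MM_i$ I run $\build(\MM_i)$ to obtain its join tree $\cJ(\MM_i)$; correctness is \Thm{correct}, and by \Thm{contour-tree} and \Rem{joinAndCriticalJoin}, the contour tree $\cC(\MM_i)$ is trivially obtained from $\cJ(\MM_i)$ in linear time (attaching each non-dominant minimum along the edge out of its split). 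Finally, using $\surgery$ repeatedly as in \Clm{rain-reeb}, I glue these contour trees into $\cC(\MM)$ in $O(N)$ additional time.

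For the running time, \Lem{runTimeUpper} gives the cost of $\build(\MM_i)$ as $O(N_i + t_i\alpha(t_i) + \sum_{v \in \cJ(\MM_i)} \log |H_v|)$, where the $H_v$ are induced by the initial coloring. Applying \Thm{runtime} to each piece separately yields the cost bound $O(N_i + t_i\alpha(t_i) + \sum_{p \in \pmax(\cJ(\MM_i))} |p| \log |p|)$. Summing over all pieces and using that $\sum_i N_i = O(N)$ and $\sum_i t_i = O(t)$ (both from \Lem{new-verts} and the surgery accounting), the total running time is $O(N + t\alpha(t) + \sum_i \sum_{p \in \pmax(\cJ(\MM_i))} |p|\log|p|)$.

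It remains to exhibit a path decomposition $P(\cC)$ of the contour tree (in the sense of \Def{path}) whose cost dominates $\sum_i \sum_{p \in \pmax(\cJ(\MM_i))} |p|\log|p|$. The surgery theorem (\Thm{surgery}, \Lem{cut}) identifies $\cJ(\MM_i)$ (modulo the trivial non-dominant minima stubs) with a connected subgraph of $\cC$: each rain cut corresponds to an edge of $\cC$ whose two endpoints are boundary extrema in adjacent pieces, and these boundary extrema get erased when surgery glues the pieces. Each path in $\pmax(\cJ(\MM_i))$ is monotonic in height (heights strictly decrease along directed join-tree edges) and contains a leaf, so after surgery it embeds as a monotone path in $\cC$ containing a leaf, except possibly for its endpoints at cuts. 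I would define $P(\cC)$ by taking the union of these embedded paths and, whenever two paths meet at a surgery vertex along a monotone direction, concatenating them into a single longer leaf path; otherwise they remain separate leaf paths. Concatenation only helps the cost, since $|p_1|\log|p_1| + |p_2|\log|p_2| \le (|p_1|+|p_2|)\log(|p_1|+|p_2|)$, so $\sum_{p \in P(\cC)} |p|\log|p| \ge \sum_i \sum_{p \in \pmax(\cJ(\MM_i))} |p|\log|p|$ up to constants.

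The main obstacle is making the last step fully rigorous: verifying that every path produced by concatenation is genuinely monotone (which requires checking that the rain cut edges in $\cC$ have the correct orientation relative to the join-tree paths on either side of the cut) and that every path contains a leaf of $\cC$ (not merely a leaf of some $\cJ(\MM_i)$, which might be a non-dominant extremum introduced by cutting). Both follow from the structure of $\rain$: cuts alternate between being attached from above and from below, so the boundary extrema of adjacent pieces have compatible orientations; and any join-tree leaf that was introduced as a cut artifact corresponds to an internal edge of $\cC$, so its path can be concatenated with the path from the adjacent piece to yield a path that terminates at an actual leaf of $\cC$. Combining the running-time bound with the path-decomposition comparison gives the stated $O(\sum_{p \in P(\cC)} |p|\log|p| + t\alpha(t) + N)$ bound, and the ancestor-descendant property of comparisons follows immediately because all heap comparisons inside $\build$ are between vertices on a single descending path in $\cJ(\MM_i)$, which maps into a descending path in $\cC$.
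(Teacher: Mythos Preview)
Your proposal is correct and follows essentially the same route as the paper: decompose via $\rain$, compute join trees via $\build$ on each piece with the $\pmax$ bound from \Thm{runtime}, reassemble via surgery, and sum the costs. The one place you diverge is in how $P(\cC)$ is defined. The paper simply declares $P(\cC) = \bigcup_i \pmax(\cJ(\MM_i))$ and sums, without checking that this union is a path decomposition of $\cC$ in the sense of \Def{path}. You instead worry (correctly) that a leaf of some $\cJ(\MM_i)$ may be a boundary extremum introduced by cutting, hence not a leaf of $\cC$, and you repair this by concatenating across surgery edges and invoking concavity of $x\log x$. Your route is more careful; the paper's is terser but arguably leaves exactly the gap you identify unaddressed. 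Either way the running-time bound is the same, so your extra work buys formal compliance with \Def{path} rather than a different or stronger result.
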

\begin{proof}
First, lets review the various pieces of our algorithm. 
On a given input simplicial complex, we first break it into extremum dominant pieces using $\rain(\MM)$ (and in $O(|\MM|)$ time by \Thm{rain-time}).  
Specifically, \Lem{rain-1} proves that the output of $\rain(\MM)$ is a set of extremum dominant pieces, 
$\MM_1, \dots, \MM_k$, and \Clm{rain-reeb} shows that given the contour trees, $\cC(\MM_1), \dots, \cC(\MM_k)$, 
the full contour tree, $\cC(\MM)$, can be constructed (in $O(|\MM|)$ time).

Now one of the key observations was that for extremum dominant manifolds, computing the contour tree is roughly the same as 
computing the join tree.  Specifically, \Thm{contour-tree} implies that given $\jc(\MM_i)$ , 
we can obtain $\cC(\MM_i)$ by simply sticking on the non-dominant minima at their respective splits (which can easily be done in linear time).
Remark~\ref{rem:joinAndCriticalJoin} implies that $\jc(\MM_i)$ is trivially obtained from the $\cJ(\MM_i)$, and 
 by the above theorem we know $\cJ(\MM_i)$ can be computed in $O(\sum_{p \in \pmax(\cJ(\MM_i))} |p|\log |p| + t_i\alpha(t_i) + N_i)$
(where $t_i$ and $N_i$ are the number of critical points and faces when restricted to $\MM_i$).

At this point we can now see what the path decomposition referenced in theorem statement should be.  It is just the union of all the maximum 
path decomposition across the extremum dominant pieces, $\pmax(\cC(\MM)) = \cup_{i=1}^k \pmax(\cJ(\MM_i))$.
Since all procedures besides computing the join trees take linear time in the size of the input complex, we can therefore 
compute the contour tree in time
\InSoCGVer{
\begin{align*}
\hspace{.8in} O\pth{N+ \sum_{i=1}^k \pth{\sum_{p \in \pmax(\cJ(\MM_i))} |p|\log |p|} + t_i\alpha(t_i) + N_i}& \\
=\break O\pth{\pth{\sum_{p \in \pmax(\cC(\MM))} |p|\log |p|} + t\alpha(t) + N}&
\end{align*}
}
\InNotSoCGVer{
\[
O\pth{N+ \sum_{i=1}^k \pth{\sum_{p \in \pmax(\cJ(\MM_i))} |p|\log |p|} + t_i\alpha(t_i) + N_i}
=\break O\pth{\pth{\sum_{p \in \pmax(\cC(\MM))} |p|\log |p|} + t\alpha(t) + N}
\]
}




\end{proof}
}



\InNotSoCGVer{
\leafAssignPathDecomp
}



\newcommand{\lowerBoundbyPathDecomp}{
\section{Lower Bound by Path Decomposition}
\label{sec:lb}
We first prove a lower bound for join trees, and then generalize to contour trees.
Note that the form of the theorem statements in this section differ from \Thm{main-lb}, 
as they are stated directly in terms of path decompositions.  \Thm{main-lb} 
is an immediate corollary of the final theorem of this section, \Thm{path-main-lb}.

\subsection{Join Trees}

We focus on terrains, so $d=2$.
Consider any path decomposition $P$ of a valid join tree (i.e.\ any rooted binary tree).
When we say ``compute the join tree", we require the join tree to
be labeled with the corresponding vertices of the terrain.

\begin{figure}[h]\centering
    \includegraphics[width=.18\linewidth]{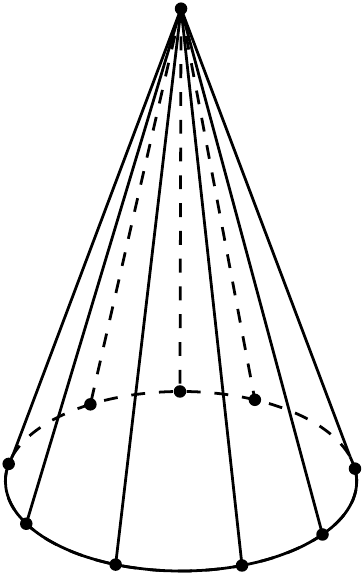}\hspace{2cm}
    \includegraphics[width=.3\linewidth]{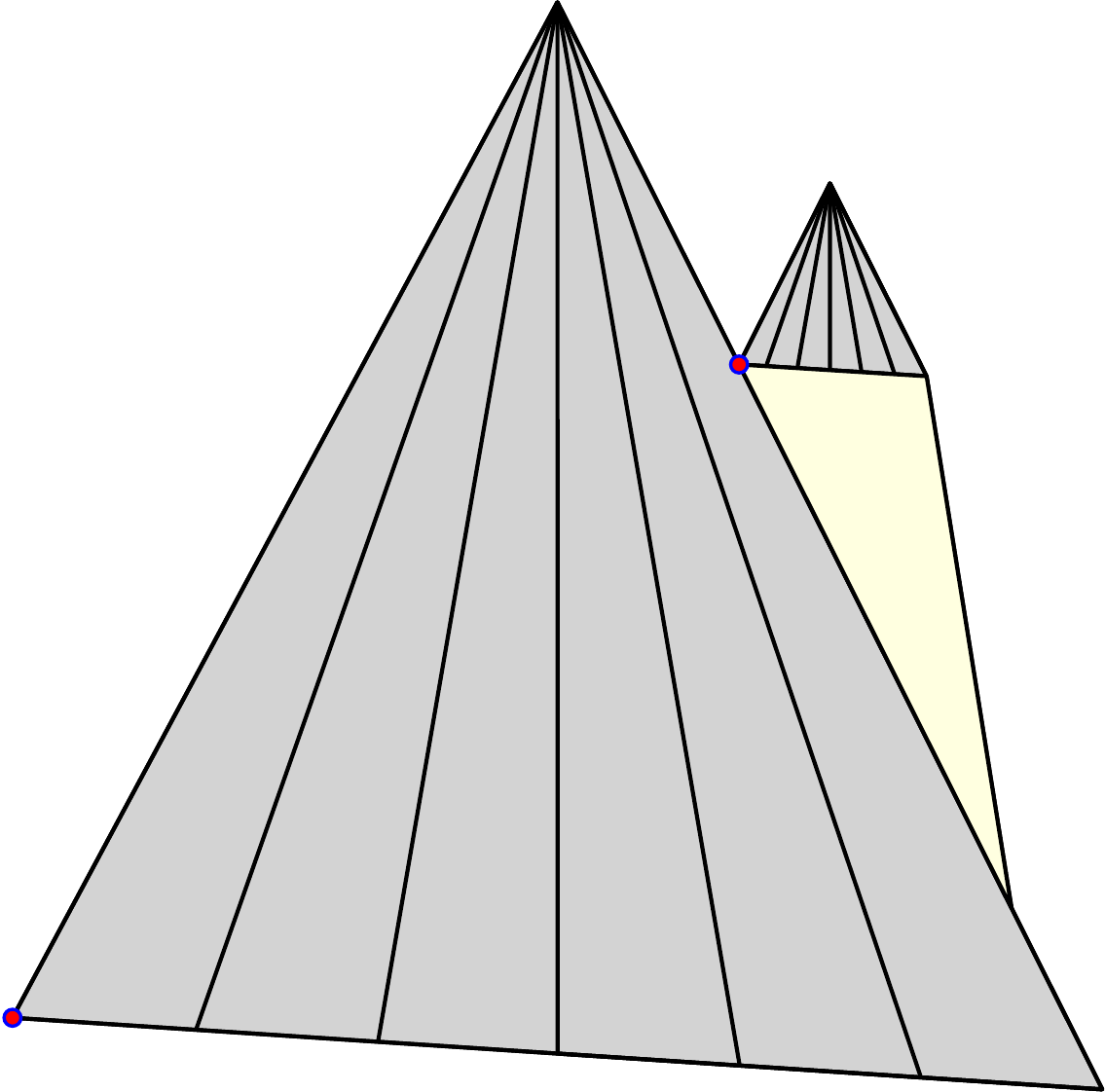}
    \caption{Left: angled view of a tent / Right: a parent and child tent put together}
    \label{fig:diamond}
\end{figure}

\begin{lemma}
\label{lem:pathDecomp}
Fix any path decomposition $P$. There is a family of terrains, $\bF_P$, all with the same triangulation, 
such that $|\bF_P| = \Pi_{p_i\in P} (|p_i|-1)!$, and no two terrains in $\bF_P$ define the same join tree.
\end{lemma}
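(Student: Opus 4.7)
The plan is to construct $\bF_P$ by recursion on the shrub structure of $P$ (the tree obtained by contracting each path of $P$ to a node, as in \Sec{shrubs}). The base gadget is a \emph{tent}: a fixed triangulated disk with one apex vertex $a$ placed above the centre, a convex polygonal base whose vertices will act collectively as a boundary minimum, and, on each of the $|p|-1$ slanted triangular faces from $a$ down to the base, one additional interior vertex that will be a saddle. The heights of the apex and the boundary are pinned (apex highest, boundary lowest), while the $|p|-1$ saddle heights are free parameters. This gives $(|p|-1)!$ distinct height orderings on a single triangulation; a straightforward local analysis shows that regardless of the ordering, each interior saddle is a genuine (up-degree two) join whose two ascending contours reach only the apex, so the restriction of the join tree to the tent is a single monotone path from $a$ down through the saddles in decreasing height order, ending at the base minimum.

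Next I would assemble tents recursively. Orient $P$ as a shrub rooted at the path $p^{*}$ containing the root of $T$, and process the shrub top-down. For each non-root path $p \in P$ with parent path $p'$, there is a unique vertex $v \in p'$ at which $p$ branches off in $T$. I would glue a scaled copy of the tent for $p$ onto the face of the $p'$-tent containing the saddle corresponding to $v$: the gluing places the new tent's boundary minimum at the location of the saddle for $v$, and compresses all critical values of the $p$-tent (and everything attached inside it) into an arbitrarily small open window just above $v$'s height. Because the window can be made smaller than the minimum height gap between any two saddles of $p'$, the sub-gadget for $p$ is completely ``hidden'' in the ordering of $p'$'s saddles: merging contours inside the sub-gadget take place strictly before $v$'s contour reaches any other saddle of $p'$. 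The triangulation of every gadget is fixed once and for all, so varying only heights produces a common triangulation across all members of $\bF_P$, and independence across paths gives $|\bF_P| = \prod_{p\in P}(|p|-1)!$.

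For distinctness, I would argue that the labeled join tree of an assembled terrain decomposes, via \Thm{contour-tree} (or a direct check using \Def{int-criticalJoin}), into the union of the join tree paths contributed by each tent: the abstract shape is $T$ itself, the vertex set of each tent-path is pinned by the triangulation, and the order in which those vertices appear along the join tree path is exactly the chosen height order of that tent's saddles. Thus any two choices of heights that differ on some path $p$ produce labeled join trees whose restriction to that path's vertex sequence differs, and therefore the join trees themselves differ as labeled trees. Combined with the count above this yields the lemma.

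The main obstacle is the gluing step: I need to confirm that attaching a sub-tent at a saddle does not change the up-degree or down-degree of existing critical points in the parent tent, and does not create spurious critical points at the seam of the gluing. The ``small window'' scaling takes care of interference with the parent's height ordering, but I would need to choose the seam triangulation so that seam vertices are boundary-critical in the sense of \Def{bound} (all seam vertices at the same height, with neighbours strictly on one side), after which a local Morse-degree computation at the seam shows no new criticality arises. Once this technical seam argument is in place, the recursion and counting are routine.
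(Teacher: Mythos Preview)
Your proposal is essentially the paper's own construction: tents (cones with an apex and a polygonal base) built one per path, glued recursively according to the shrub of $P$, with the triangulation fixed and only the saddle heights on each tent varied independently to produce the $\prod_{p_i}(|p_i|-1)!$ distinct labeled join trees. The only notable deviation is your height-window compression of each child tent, which the paper does not need---since every saddle on a given tent has direct ascending access to that tent's apex, the saddles of a tent already appear contiguously on a root-to-leaf path in the join tree regardless of how the child-tent heights interleave---but the extra step does no harm.
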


\begin{proof}
 
 We describe the basic building block of these terrains, which corresponds to a fixed path $p\in P$.  
 Informally, a \emph{tent} is an upside down cone with $m$ triangular faces (see \Fig{diamond}). 
 Construct a slightly tilted cycle 
 of length $m$ with the two antipodal points at heights $1$ and $0$. These are called the anchor and trap of the tent, respectively.
 The remaining $m-2$ vertices are evenly spread around the cycle and heights decrease monotonically when going from the anchor to the trap.
 Next, create an apex vertex at some appropriately large height, and add an edge to each vertex in the cycle.

 Now we describe how to attach two different tents.  In this process, we glue the base of a scaled down ``child'' tent on to a triangular cone face of the larger  ``parent'' tent (see \Fig{diamond}).
 Specifically, the anchor of the child tent is attached directly to a face of the parent tent at some height $h$. The remainder of the base of the child cone is then 
 extended down (at a slight angle) until it hits the face of the parent.
 
 The full terrain is obtained by repeatedly gluing tents. For each path $p_i\in P$, we create a tent of size $|p_i|+1$.
 The two faces adjacent to the anchor are always empty, and the remaining faces are for gluing on other tents.
 (Note that tents have size $|p_i|+1$ since $|p_i|-1$ faces represent the joins of $p_i$, the apex represents the leaf, 
 and we need two empty faces next to the anchor.)
 Now we glue together tents of different paths in the same way the paths are connected in the shrub $\pathTree$ (see \Sec{shrubs}).  
 Specially, for two paths $p,q\in P$ where 
 $p$ is the parent of $q$ in $\pathTree$, we glue $q$ onto a face of the tent for $p$ as described above.
 (Naturally for this construction to work, tents for a given path will be scaled down relative to the size of the tent of their parent).
 By varying the heights of the gluing, we get the family of terrains.
 
 Observe now that the only saddle points in this construction are the anchor points.  Moreover, the only maxima are the apexes of the 
 tents.  We create a global boundary minimum by setting the vertices at the base of the tent representing the root of $\pathTree$ all to the same height (and there are no other minima).  
 Therefore, the saddles on a given tent will appear contiguously on a root to leaf path in the join tree of the terrain, where the leaf corresponds to the maximum of the tent 
 (since all these saddles have a direct line of sight to this apex).  In particular, 
 this implies that, regardless of the heights assigned to the anchors, the join tree has a path decomposition whose corresponding shrub is equivalent to $\pathTree$.  
 
 There is a valid instance of this described construction for any relative ordering of the heights of the saddles on a given tent.  
 In particular, there are $(|p_i|-1)!$ possible orderings of the heights of the saddles on the tent for $p_i$, and hence $\Pi_{p_i\in P} (|p_i|-1)!$ possible 
 terrains we can build.  Each one of these functions will result in a different (labeled) join tree. All saddles on a given tent will appear in sorted order in the join tree.
 So, any permutation of the heights on a given tent corresponds to a permutation of the vertices along a path in $P$.
\end{proof}

Two path decompositions $P_1$ and $P_2$ (of potentially different complexes and/or height functions) are equivalent 
if: there is a 1-1 correspondence between the sizes of the constituent paths, and the shrubs are isomorphic.
%
%

\begin{lemma}
\label{lem:cost}
For all $\MM \in \bF_P$, the total number of heap operations performed by $\build(\MM)$ is $O(\sum_{p\in P} |p|\log|p|)$.
\end{lemma}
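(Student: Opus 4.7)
The plan is to invoke \Lem{runTimeUpper}, which bounds the heap cost of $\build(\MM)$ by $O(\sum_{v \in \cJ(\MM)} \log |H_v|)$ where the heaps $H_v$ are determined by the leaf assignment $\chi$ that $\init(\MM)$ induces on $\cJ(\MM)$. It then suffices to show that $\sum_v \log |H_v| = O(\sum_{p \in P} |p|\log|p|)$ for every $\MM \in \bF_P$.

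First I would pin down $\chi$ explicitly on $\bF_P$. Since every tent in the construction has a unique apex, the descending BFS of $\init$ paints each up-star monochromatically with the color of the nearest apex directly above it. Hence, whenever $v$ is a saddle arising as the anchor of a child tent $c$ glued onto a face of a parent tent $p$, the two up-stars of $v$ receive colors $\ell(p)$ and $\ell(c)$ respectively, and therefore $\chi(v) = \{\ell(p),\ell(c)\}$. This leaf assignment is completely forced by the tent structure, independent of the arbitrary choices $\init$ is entitled to make.

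Next I would bound $|H_{v_i}|$ for each saddle on a fixed path $p \in P$. List the vertices of $p$ as $v_1,\dots,v_{|p|}$ from base to leaf, so $v_{|p|}=\ell(p)$ is the apex of tent $p$ and, for $i<|p|$, $v_i$ is the anchor of some child tent $c_i$ (a shrub-child of $p$). I classify the ancestors of $v_i$ in $\cJ(\MM)$ into three categories. The $i-1$ ancestors $v_{i-1},\dots,v_1$ lie on $p$ itself and each has $\chi(v_j)=\{\ell(p),\ell(c_j)\}$; since $\ell(p)\in L(v_i)$, each one contributes to $H_{v_i}$. The attachment vertex $u^{\ast}$ on the immediate shrub-parent $q$ of $p$ (the anchor of tent $p$ viewed from $q$) has $\chi(u^{\ast})=\{\ell(q),\ell(p)\}$, so $\ell(p)\in L(v_i)$ again gives $u^{\ast}\in H_{v_i}$. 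Any other ancestor $u$ lies either on $q$ but is not $u^{\ast}$, or on a strictly higher shrub-ancestor path $q'$ of $p$; its colors are $\chi(u)=\{\ell(q'),\ell(c_u)\}$, where both $q'$ and the child tent $c_u$ attached at $u$ sit outside the shrub-subtree rooted at $p$ (either as a proper shrub-ancestor of $p$, or as an intermediate ancestor on the chain from $q'$ down toward $p$, or on a sibling branch). In each case neither color lies in $L(v_i)$, and $u\notin H_{v_i}$. This yields $|H_{v_i}|\le i$.

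Finally, $\sum_{i=1}^{|p|-1} \log |H_{v_i}| = O(|p|\log|p|)$, and summing over $p\in P$ together with \Lem{runTimeUpper} gives the claimed bound. The step I expect to be the main obstacle is the last category: one must verify that an attachment on a shrub-ancestor $q'$ two or more levels above $p$ sees only the apex of the tent immediately glued there, namely an intermediate shrub-ancestor of $p$, and not $\ell(p)$ itself. This is precisely the reason the contribution from non-$p$ ancestors collapses from a full root-to-leaf path down to the single vertex $u^{\ast}$, and it relies on the BFS painting colouring each up-star by the closest apex above it rather than propagating colours through already-painted saddles.
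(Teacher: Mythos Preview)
Your proof is correct and takes essentially the same approach as the paper: invoke \Lem{runTimeUpper}, observe that each anchor's two up-edges go directly to the parent and child tent apexes so the leaf assignment is forced, and then bound $|H_v|$ by the length of the path containing $v$. The paper phrases the last step as a ``draining'' argument (colors from a child tent cannot reach any join on the parent tent) to get $|H_v|\le|p|$, whereas your explicit ancestor classification gives the marginally sharper $|H_{v_i}|\le i$, but the resulting sum is the same.
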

\begin{proof}
The primary ``non-determinism" of the algorithm is the initial painting constructed by $\init(\MM)$. 
We show that regardless of how paint spilling is done, the number of heap operations is bounded as above.

 Consider an arbitrary order of the initial paint spilling over the surface.
 Consider any join on a face of some tent, which is the anchor point of some connecting child tent.  
 The join has two up-stars, each of which has exactly one edge. Each edge connects to a maximum
 and must be colored by that maximum.
 Hence, the two colors touching this join (according to \Def{initialColoring})
 are the colors of the apexes of the child and parent tent.  
%
 
 Take any join $v$, with two children $w_1$ and $w_2$. Suppose $w_1$ and $v$ belong to the same path
 in the decomposition. The key is that any color from a maximum in the subtree at $w_2$ cannot touch
 any ancestor of $v$. This subtree is exactly the join tree of the child tent attached at $v$.
 The base of this tent is completely contained in a face of the parent tent. So all colors
 from the child ``drain off" to the base of the parent, and do not touch any joins on the parent tent.
 
 Hence, $|H_v|$ is at most the size of the path in $P$ containing $v$. By \Lem{runTimeUpper}, the total number of heap operations
 is at most $\sum_v \log |H_v|$, completing the proof.
 
%
\end{proof}

The following is the equivalent of \Thm{main-lb} for join trees, and immediately follows from
the previous lemmas.

\begin{theorem}\label{thm:joinLB}
Consider a rooted tree $T$ and an arbitrary path decomposition $P$ of $T$.
There is a family $\bF_P$ of terrains such that any algebraic decision
tree computing the join tree\footnote{Note that for the referenced family of terrains, the join tree and contour tree are equivalent} 
(on $\bF_P$) requires $\Omega(\sum_{p \in P} |p|\log |p|)$ time.
Furthermore, our algorithm makes $O(\sum_{p \in P} |p|\log |p|)$ comparisons on all these instances.
%
\end{theorem}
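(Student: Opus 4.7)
The plan is to derive Theorem~\ref{thm:joinLB} as a direct combination of \Lem{pathDecomp} and \Lem{cost}. The upper bound on comparisons is immediate from \Lem{cost}: on every terrain in $\bF_P$, the algorithm $\build$ incurs heap cost $O(\sum_{p\in P}|p|\log|p|)$ via \Lem{runTimeUpper} (since $|H_v|$ is bounded by the size of the path in $P$ containing $v$), and all non-heap operations are $O(t\alpha(t)+N)$, which is subsumed by the stated bound. So the real task is to establish the matching lower bound.

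For the lower bound I would use a standard information-theoretic argument for algebraic decision trees. Fix the family $\bF_P$ produced by \Lem{pathDecomp}. Every terrain in $\bF_P$ shares the same triangulation, so an input is specified entirely by its vector of vertex heights in $\RR^n$. An algebraic decision tree over this input space partitions $\RR^n$ among its leaves, and correctness demands that each of the $|\bF_P| = \prod_{p\in P}(|p|-1)!$ terrains land in a distinct leaf, since by \Lem{pathDecomp} their labeled join trees are pairwise distinct. An algebraic decision tree with bounded-degree polynomial tests and constant branching factor $b$ has at most $b^d$ leaves at depth $d$, so the worst-case depth is $\Omega\pth{\log |\bF_P|} = \Omega\pth{\sum_{p\in P}\log((|p|-1)!)}$.

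The remaining step is to convert this entropy into the desired bound $\sum_{p\in P}|p|\log|p|$. By Stirling, $\log((|p|-1)!) = \Theta((|p|-1)\log(|p|-1)) = \Theta(|p|\log|p|)$ whenever $|p|\ge 3$. Paths with $|p|\le 2$ contribute only $O(1)$ each to $\sum_p|p|\log|p|$, so summed across all such paths they contribute at most $O(t)$; this residue is absorbed by the trivial $\Omega(t)$ lower bound (any correct algorithm must output all $t$ critical-vertex labels of the join tree). Combining the two bounds yields the desired $\Omega(\sum_p|p|\log|p|)$.

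The main obstacle I anticipate is nailing down the algebraic decision tree model cleanly: the output is a labeled tree rather than a single bit, so one must phrase correctness as ``each terrain in $\bF_P$ ends at a leaf producing the right labeled output,'' and verify that pairwise distinctness of labeled join trees (guaranteed by \Lem{pathDecomp}) forces distinct leaves. The small-path absorption is routine but worth being explicit, since $\log((|p|-1)!)=0$ for $|p|\in\{1,2\}$ and one needs the trivial $\Omega(t)$ output-size bound to cover those cases.
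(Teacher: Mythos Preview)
Your proposal is correct and follows the same approach as the paper: an entropy argument on $|\bF_P|=\prod_{p}(|p|-1)!$ via \Lem{pathDecomp} and Stirling for the lower bound, with \Lem{cost} supplying the matching upper bound on comparisons. The paper's proof is a two-line sketch of exactly this; you simply fill in more detail, in particular the small-path absorption (where $\log((|p|-1)!)=0$ for $|p|\le 2$) via a trivial $\Omega(t)$ output-size bound, which the paper silently sweeps into its Stirling step.
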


\begin{proof}
The proof is a basic entropy argument. Any algebraic decision tree that is correct on all of $\bF_P$
must distinguish all inputs in this family. By Stirling's approximation, the depth of this tree is $\Omega(\sum_{p_i\in P} |p_i|\log|p_i|)$.
\Lem{cost} completes the proof.
%
%
\end{proof}

\subsection{Contour Trees}
We first generalize previous terms to the case of contour trees.
In this section $T$ will denote an arbitrary contour tree with every internal vertex of degree $3$.

For simplicity we now restrict our attention to path decompositions 
consistent with the raining procedure described in \Sec{rain} 
(more general decompositions can work, but it is not needed for our purposes).

\begin{definition}
\label{def:path2} A path decomposition, $P(T)$, is called \emph{rain consistent} if its paths can be obtained as follows.
Perform an downward BFS from an arbitrary maximum $v$ in $T$, and mark all vertices encountered.  
Now recursively run a directional BFS from all vertices adjacent to the current marked set.  
Specifically, for each BFS run, make it an downward BFS if it is at an odd height in the recursion tree and upward otherwise.  

This procedure partitions the vertex set into disjoint rooted subtrees of $T$, based on which BFS marked a vertex.  
For each such subtree, now take any partition of the vertices into leaf paths.\footnote{Note that the subtree  
of the initial vertex is rooted at a maximum.  For simplicity we require that the path this vertex belongs to 
also contains a minimum.}
\end{definition}

The following is analogous to \Lem{pathDecomp}, and in particular uses it as a subroutine.

\begin{lemma}
\label{lem:pathDecomp2}
Let $P$ be any rain consistent path decomposition of some contour tree. 
There is a family of terrains, $\bF_P$, all with the same triangulation, 
such that the size of $\bF_P$ is $\Pi_{p_i\in P} (|p_i|-1)!$, and no two terrains in $\bF_P$ define the same contour tree.
\end{lemma}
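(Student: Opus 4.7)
The plan is to reduce to \Lem{pathDecomp} by exploiting the recursive structure of a rain consistent path decomposition. By \Def{path2}, the vertex set of $T$ is partitioned into rooted subtrees $T_1, T_2, \ldots, T_s$ according to the recursive directional BFS's of the raining procedure, where the BFS direction alternates with the recursion depth. Each $T_i$ is itself decomposed into leaf paths, and $P$ is the union of these per-subtree path decompositions. The key observation is that each $T_i$ is the contour tree of an extremum dominant piece (maximum dominant when the BFS was downward, minimum dominant when upward), and, modulo the non-dominant extrema introduced by cutting, it coincides with that piece's join (or split) tree.

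First I would realize each $T_i$ individually. For a subtree whose corresponding BFS was downward (so $T_i$ is a join-tree-like structure), apply \Lem{pathDecomp} directly: the tent construction produces a terrain piece $\MM_i$ with a fixed triangulation, a single dominant minimum on its outer boundary, and exactly $\Pi_{p \in P_i}(|p|-1)!$ choices of saddle heights that all yield distinct labeled join trees consistent with the path decomposition $P_i$ of $T_i$. For a subtree whose BFS was upward, apply the same construction to the inverted tree (so tents become upside-down), yielding a minimum dominant piece with a single dominant maximum on its outer boundary. In both cases the outer boundary of $\MM_i$ is at a single height value, which will serve as the gluing interface.

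Next I would glue the pieces together along their boundaries. The gluing follows the parent-child relationship of the subtrees $T_1, \ldots, T_s$ induced by the BFS recursion. Specifically, the edge of $T$ that crosses from $T_i$ into a child subtree $T_j$ corresponds to a contour class in $\MM_i$, so I can embed $\MM_j$ inside a regular face of $\MM_i$ near the correct height and identify $\MM_j$'s outer boundary with the trace of a regular contour in $\MM_i$. This is exactly the inverse of the $\cut$ operation, so by \Thm{surgery} the contour tree of the glued terrain $\MM$ is obtained from the contour trees of the $\MM_i$'s by reconnecting along the deleted edges; the result is precisely $T$ with the required path decomposition $P$. Because the triangulation of each $\MM_i$ is fixed and the gluing locations are determined combinatorially from $P$ (not from the saddle heights), the triangulation of $\MM$ is the same for every choice of saddle heights.

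Finally I would count. The saddle heights within a tent corresponding to a path $p \in P$ can be chosen in $(|p|-1)!$ ways, and choices inside different tents are independent, giving $\Pi_{p \in P}(|p|-1)!$ terrains. Distinctness of labeled contour trees comes from the fact that the saddles of a single tent appear in sorted order along the corresponding path of $P$ in $\cC(\MM)$, so distinct permutations yield distinct labelings; and labelings inside different tents are separated by the gluing structure so they do not interfere. The main obstacle is ensuring the last point cleanly: one must check that the tent of a child subtree, when embedded inside a face of the parent tent, does not create new critical points or collapse the ordering information of the parent's saddles. This is handled by the same fact used in the proof of \Lem{pathDecomp}: the child tent's base sits entirely within a single face of the parent, so no child saddle shares a level-set-topology-changing contour with a parent saddle, and the two sets of orderings remain independently recoverable from $\cC(\MM)$.
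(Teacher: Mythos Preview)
Your proposal is correct and follows essentially the same route as the paper: partition $P$ into the per-BFS subtrees $T_i$, realize each $T_i$ via the tent construction of \Lem{pathDecomp} (inverting those coming from upward BFS's), glue the pieces according to the parent--child structure of the $T_i$'s, and then count and argue distinctness exactly as you do. The only cosmetic difference is that the paper describes the gluing concretely---attaching the open base of the child's root tent onto the designated blank face of the parent tent with opposite orientation (their \Fig{diamond2})---whereas you phrase the same operation more abstractly as the inverse of $\cut$ and appeal to \Thm{surgery}; both viewpoints justify the same contour-tree computation.
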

\begin{proof}
 As $P$ is rain consistent, the paths can be partitioned into sets $P_1, \dots, P_k$, where $P_i$ is the set of all paths with 
 vertices from a given BFS, as described in \Def{path2}.  Specifically, let $T_i$ be the subtree of $T$ corresponding to $P_i$ and 
 let $r_i$ be the root vertex of this subtree.  
 Note that the $P_i$ sets naturally define a tree where $P_i$ is the parent of $P_j$ if $r_i$ (i.e.\ the root of $T_i$) is adjacent to a vertex in $P_j$.  
 
 As the set $P_i$ is a path decomposition of a rooted binary tree $T_i$, the terrain construction of \Lem{pathDecomp} for $P_i$ is well defined.  
 Actually the only difference is that here the rooted tree is not a full binary tree, and so some of the (non-achor adjacent) faces of the constructed tents will be blank.  
 Specifically, these blank faces correspond to the adjacent children of $P_i$, and they tell us how to connect the terrains of the different $P_i$'s.
 
 So for each $P_i$ construct a terrain as described in \Lem{pathDecomp}.  Now each $T_i$ is (roughly speaking) a join or a split tree, depending on 
 whether the BFS which produced it was an upward or downward BFS, respectively.  As the construction in \Lem{pathDecomp} was for join trees, each terrain 
 we constructed for a $P_i$ which came from a split tree, must be flipped upside down.
 Now we must described how to glue the terrains together.
  
 \begin{figure}[h]\centering
    \includegraphics[width=.28\linewidth]{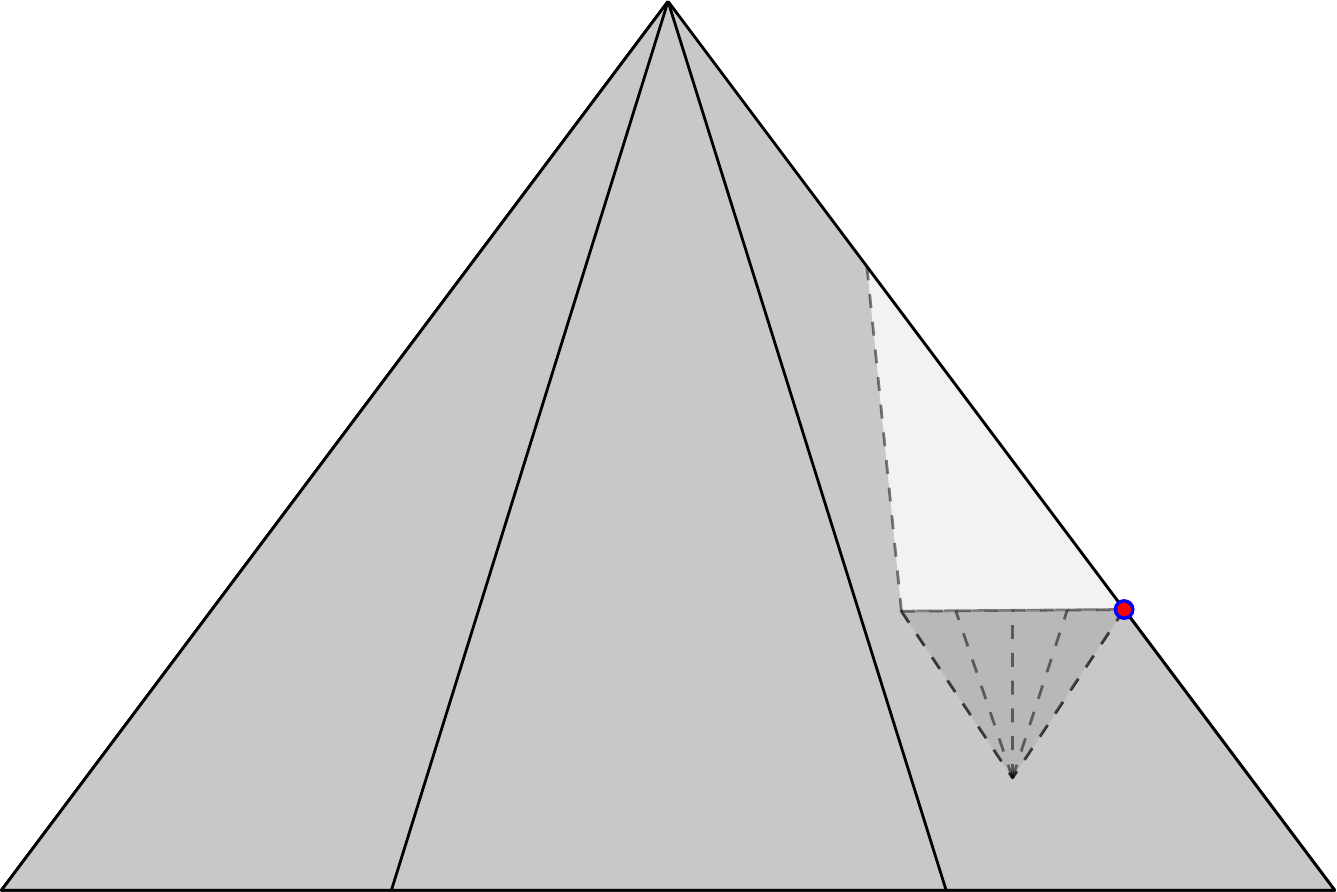}
    \caption{A child tent attached to a parent tent with opposite orientation.}
    \label{fig:diamond2}
\end{figure} 
  
 By construction, the tents corresponding to the paths in $P_i$ are connected 
 into a tree structure (i.e.\ corresponding to the shrub of $P_i$).  Therefore the bottoms of all these tents are covered except for the one corresponding 
 to the path containing the root $r_i$.  
 If $r_i$ corresponds to the initial maximum that the rain consistent path decomposition was defined from, then this will be flat and corresponds to the global outer face.
 Otherwise, $P_i$ has some parent $P_j$ in which case we connect the bottom of the tent for $r_i$ to a free face of a tent in the construction for $P_j$, specifically, the face corresponding to the 
 vertex in $T$ which $r_i$ is adjacent to. This gluing is done in the same manner as in \Lem{pathDecomp}, attaching the anchor for the root of $P_i$ directly the corresponding face of $P_j$, except that now $P_i$ and $P_j$ have opposite orientations.  See \Fig{diamond2}.
 
 Just as in \Lem{pathDecomp} we now have one fixed terrain structure, such that each different relative ordering of the heights of the join and split vertices 
 on each tent produces a surface with a distinct contour tree.  The specific bound on the size of $\bF_P$, defining these distinct contour trees, follows by applying the 
 bound from \Lem{pathDecomp} to each $P_i$.
\end{proof}

\begin{lemma}
For all $\MM \in \bF_P$, the number of heap operations is $\Theta(\sum_{p\in P} |p|\log |p|)$
\end{lemma}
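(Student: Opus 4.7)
The plan is to split the statement into its upper and lower halves, deriving the former by reducing to the join-tree analysis of \Lem{cost} and the latter by directly inspecting the behavior of the binomial heaps on the tent construction of \Lem{pathDecomp2}.

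For the upper bound, the first step is to verify that \Lem{pathDecomp2} is compatible with $\rain$. Each $\MM \in \bF_P$ decomposes into subtrees $T_1, \dots, T_k$ of the contour tree (one per BFS of \Def{path2}), and the interface between adjacent $T_i, T_j$ in the construction sits at a saddle-free anchor ring where a child tent of opposite orientation is glued to a parent face. I would argue that $\rain(\MM)$ halts precisely at these interfaces, producing extremum-dominant pieces $\MM_1, \dots, \MM_k$ whose critical join trees are exactly the $T_i$ with path decompositions $P_i$. Then applying the painting argument of \Lem{cost} to each $P_i$ shows $|H_v| \leq |p|$ for every saddle $v$ on a path $p \in P_i$, and \Lem{runTimeUpper} sums this to $O(\sum_{p \in P} |p|\log|p|)$ heap operations globally.

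For the lower bound, fix a path $p \in P$ and track the $|p|$ saddles it contributes to $\cJ(\MM_i)$ through the execution of $\build(\MM_i)$. By the painting restriction established above, these $|p|$ saddles all land in a single binomial max-heap, and no saddle from outside $p$ ever enters it. Ripeness forces the algorithm to extract them in strict decreasing height order (only the top of the heap can be ripe at a given time), so processing the $j$-th saddle on $p$ performs a delete-max on a binomial heap of size exactly $|p|-j+1$. Since binomial heap delete-max has worst-case cost $\Theta(\log n)$ on a size-$n$ heap, the heap work charged to $p$ is $\sum_{j=1}^{|p|} \Theta(\log j) = \Theta(|p|\log|p|)$. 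Summing over $p \in P$ yields the matching lower bound $\Omega(\sum_{p \in P} |p|\log|p|)$ on every instance in $\bF_P$.

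The main obstacle is the first step of the upper bound: showing that $\rain(\MM)$'s wet/dry interfaces coincide with the gluing seams of \Lem{pathDecomp2}, rather than slicing through the interior of any single tent. This is where the alternating orientation between parent and child tents is decisive: the opposite tilt at an anchor ring ensures that a water front reaching an anchor cannot leak into the adjacent tent but must trace a level contour along the seam, so the resulting extremum-dominant pieces are exactly one tent group $T_i$ per $\rain$-subcomponent. Verifying this is tedious but routine, and once it is in hand the rest of the argument essentially re-uses machinery already developed for join trees.
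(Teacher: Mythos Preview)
Your upper-bound strategy---partition the heap work across the pieces $P_i$ produced by $\rain$ and invoke \Lem{cost} on each---is exactly what the paper does; its proof is the one-liner ``partition the heap operations by $P_i$ and apply \Lem{cost} to each.'' You go further by making explicit the point the paper leaves implicit: that $\rain$ on the terrain of \Lem{pathDecomp2} actually respects the gluing seams and outputs pieces whose critical join trees are the $T_i$. Your reasoning via the alternating orientation at anchor rings is the right picture, and it fills in a detail the paper simply asserts.

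Your lower-bound half, however, goes beyond what the paper proves. The paper's own argument only invokes \Lem{cost}, which is an $O(\cdot)$ statement, so the $\Theta$ in the lemma is being used loosely (the downstream \Thm{path-main-lb} only needs the upper bound on the algorithm's comparisons; the matching $\Omega$ there comes from the entropy argument over $\bF_P$, not from a per-instance heap lower bound). If you do want to make the $\Theta$ rigorous, your argument has a gap: a delete-max on a binomial heap of size $n$ has \emph{worst-case} cost $\Theta(\log n)$, but the cost on a specific heap depends on the number of trees present (the popcount of $n$), so summing per-operation worst-case bounds does not immediately give a lower bound on the actual total work. You would need to argue that, over the sequence of $|p|$ successive deletes, a constant fraction incur $\Omega(\log |p|)$ work---which is true but requires a separate amortization argument. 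Since the paper does not need or prove this direction, you can safely drop it.
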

\begin{proof}
 This lemma follows immediately from \Lem{cost}.  The heap operations can be partitioned into the
 operations performed in each $P_i$.
 Apply \Lem{cost} to each of the $P_i$ separately and take the sum.
\end{proof}

We now restate \Thm{main-lb}, which follows immediately from an entropy argument, analogous to \Thm{joinLB}.

\begin{theorem}\label{thm:path-main-lb}
Consider any rain consistent path decomposition $P$. There exists a family $\bF_P$ of terrains ($d=2$) with the following properties.
Any contour tree algorithm makes $\Omega(\sum_{p \in P} |p|\log |p|)$ comparisons in the worst case over $\bF_P$.
Furthermore, for any terrain in $\bF_P$, our algorithm makes $O(\sum_{p \in P} |p|\log |p|)$ comparisons.
\end{theorem}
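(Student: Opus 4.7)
The plan is to assemble \Thm{path-main-lb} from the two preceding lemmas via a standard information-theoretic argument, exactly parallel to the proof of \Thm{joinLB} in the join-tree case. First, I would take $\bF_P$ to be the family of terrains produced by \Lem{pathDecomp2}. That lemma already guarantees two crucial properties: (i) all terrains in $\bF_P$ share a common triangulation, so any algebraic decision tree sees the same combinatorial input structure and only the height values vary, and (ii) the $\prod_{p \in P}(|p|-1)!$ terrains in $\bF_P$ produce pairwise distinct labeled contour trees.

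For the lower bound, I would run the standard adversarial/decision-tree argument. Any algebraic decision tree that correctly outputs the labeled contour tree must land in a distinct leaf for each element of $\bF_P$, hence must have at least $|\bF_P| = \prod_{p \in P}(|p|-1)!$ leaves and thus worst-case depth
\[
\Omega\!\left(\log \prod_{p \in P}(|p|-1)!\right) = \Omega\!\left(\sum_{p \in P}|p|\log|p|\right),
\]
where the equality is Stirling's approximation (and one absorbs the $-1$ additively into the $|p|$ factor, which is harmless up to constants as long as we ignore paths of length $1$, which contribute $0$ to the sum anyway). This gives the claimed $\Omega(\sum_p |p|\log|p|)$ lower bound on the number of comparisons in the worst case over $\bF_P$.

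For the matching upper bound on our algorithm, I would invoke the lemma immediately preceding the theorem, which asserts that for every $\MM \in \bF_P$ the total number of heap operations performed by $\build(\MM)$ is $\Theta(\sum_{p \in P}|p|\log|p|)$. Since by the analysis in \Sec{runTime} the comparisons performed by our algorithm are charged to heap operations (deletions and merges in binomial heaps) plus an $O(N + t\alpha(t))$ term of non-comparison bookkeeping, this directly yields an $O(\sum_{p \in P}|p|\log|p|)$ bound on the comparisons on each terrain in $\bF_P$.

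The only mildly subtle step is verifying that the lower-bound argument really applies to every algebraic decision tree (not just to comparison-based ones): since the $|\bF_P|$ terrains are distinguished by their labeled contour trees, any algorithm producing the correct output must reach a distinct leaf of its decision tree for each, regardless of what algebraic tests it uses, so the $\log_2 |\bF_P|$ depth lower bound is immediate. \Thm{main-lb} then follows as a corollary by restricting attention to a rain-consistent path decomposition that yields a cost within the claimed range $[C,\alpha C]$.
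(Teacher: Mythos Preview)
Your proposal is correct and matches the paper's approach exactly: the paper also derives the lower bound by the entropy argument on $\bF_P$ from \Lem{pathDecomp2} (paralleling \Thm{joinLB}) and obtains the matching upper bound from the preceding lemma bounding heap operations. The only minor quibble is that algebraic decision trees branch with bounded fan-out (typically three-way on the sign of a polynomial), so the depth bound is $\Omega(\log_3 |\bF_P|)$ rather than $\log_2$, but this is of course the same asymptotically.
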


\begin{Remark}
 Note that for the terrains described in this section, the number of critical points is within a constant factor of 
 the total number of vertices.  In particular, for this family of terrains, all previous algorithms required $\Omega(n\log n)$ time.
\end{Remark}
}


\InNotSoCGVer{
\lowerBoundbyPathDecomp
}

\myparagraph{Acknowledgements.}
We thank Hsien-Chih Chang, Jeff Erickson, and Yusu Wang for numerous useful discussions. This work is
supported by the Laboratory Directed Research and Development
(LDRD) program of Sandia National Laboratories. Sandia National
Laboratories is a multi-program laboratory managed and operated
by Sandia Corporation, a wholly owned subsidiary of Lockheed
Martin Corporation, for the U.S. Department of Energy's National
Nuclear Security Administration under contract DE-AC04-94AL85000.

\InNotSoCGVer{
\bibliographystyle{alpha}
}

\InSoCGVer{
\bibliographystyle{plain}
}

\bibliography{contour}

\InSoCGVer{
\appendix



\section{Some technical remarks}
\label{sec:techmarks}
The following is a continuation from the end of \Sec{basics}.

\technicalRemarks


\section{Proof of Surgery}
\label{sec:proofOfSurgery}
\surgeryBody
That is, Theorem~\ref{thm:jordan}, which is proven below.
\proofofDisconnected


\section{Join and Split Trees of Extremum Dominant Manifolds}
\label{sec:jstedm}
In this section we prove Theorem~\ref{thm:contour-tree} (of \Sec{extreme}), which states the equivalence 
of the contour tree on extremum dominant manifolds and the critical join tree of Definition~\ref{def:criticalJoin}.
This proof requires some definitions and notation, specially
the notions of \emph{join} and \emph{split} trees, as given by~\cite{csa-cctad-00}.
Conventionally, all edges are directed from higher to lower function value. 

\joinTreeSplitTreeMerge

We are now ready to prove Theorem~\ref{thm:contour-tree}.
\proofofContourJoinEquiv


\section{Proof of \Lem{runTimeUpper}}
\label{sec:runTimeProof}
\proofofRunTime


\leafAssignPathDecomp


\lowerBoundbyPathDecomp


}

%
%
%
%

\appendix

\end{document}